\theoremstyle{remark}
\newtheorem{thm}{Theorem}
\newtheorem{prop}[thm]{Proposition}
\newtheorem{cor}[thm]{Corollary}
\newtheorem*{rmk}{Remark}
\newtheorem*{problem}{Problem statement}
\newtheorem*{ex}{Example}
\newtheorem*{claim}{Claim}
\renewcommand{\@upn}{} 
\newcommand{\Ad}{A_{\rm d}}
\newcommand{\B}{B_{\rm u}}
\newcommand{\Bb}{B_{\rm b}}
\newcommand{\Bd}{B_{\rm d}}
\newcommand{\bDelta}{{\bm \Delta}}
\newcommand{\Bin}{B_{\rm i}}
\newcommand{\Bout}{B_{\rm o}}
\newcommand{\btheta}{\bm{\theta}}
\newcommand{\Da}{D_g}
\newcommand{\dcc}{d_{\rm cc}}
\newcommand{\dG}{d_{\G}}
\newcommand{\Din}{D_{\rm i}}
\newcommand{\Dm}{D_{\rm m}}
\newcommand{\Dout}{D_{\rm o}}
\newcommand{\E}{{E}_{\rm u}}
\newcommand{\Eb}{{E}_{\rm b}}
\newcommand{\Ed}{{E}_{\rm d}}
\newcommand{\G}{{G}_{\rm u}}
\newcommand{\Gb}{{G}_{\rm b}}
\newcommand{\Gd}{{G}_{\rm d}}
\newcommand{\hae}{g_{e}}
\newcommand{\hse}{f_{e}}
\newcommand{\Idm}{I_m}
\newcommand{\Lb}{L_{\rm b}}
\newcommand{\Ld}{L_{\rm d}}
\newcommand{\Lp}{L_{\rm p}}
\newcommand{\Ls}{L_f}
\newcommand{\Lu}{L_{\rm u}}
\newcommand{\mas}{\alpha_{\max,e}}
\newcommand{\mss}{\sigma_{\min,e}}
\newcommand{\mswD}{d_{\rm w}}
\newcommand{\Rg}{R({\bm \gamma})}
\newcommand{\Sw}{S_{\epsilon}}
\newcommand{\V}{{V}}
\newcommand{\wcell}{\Omega({\bf u};\Sigma)}
\newcommand{\wmap}{{\bf q}_\Sigma}
\renewcommand{\bibnumfmt}[1]{#1.}
\begin{document}

\title{Multistability and anomalies in oscillator models of lossy power grids}
\author{Robin Delabays,$^{1,*}$ Saber Jafarpour,$^{1,2}$ and Francesco Bullo$^1$}
\affiliation{$^1$Center for Control, Dynamical Systems, and Computation, UC Santa Barbara, Santa Barbara CA 93106-5070, USA. \\
$^2$School of Electrical and Computer Engineering, Georgia Institute of Technology, Atlanta GA 30332-0250, USA. \\
$^*$Corresponding author {\tt (robindelabays@ucsb.edu)}.}
\date{\today}

\begin{abstract}
 The analysis of dissipatively coupled oscillators is challenging and highly relevant in power grids. 
 Standard mathematical methods are not applicable, due to the lack of network symmetry induced by dissipative couplings. 
 Here we demonstrate a close correspondence between stable synchronous states in dissipatively coupled oscillators, and the {winding partition} of their state space, a geometric notion induced by the network topology. 
 Leveraging this winding partition, we accompany this article with an algorithms to compute all synchronous solutions of complex networks of dissipatively coupled oscillators. 
 These geometric and computational tools allow us to identify anomalous behaviors of lossy networked systems. 
 Counterintuitively, we show that loop flows and dissipation can increase the system's transfer capacity, and that dissipation can promote multistability.   
 We apply our geometric framework to compute power flows on the IEEE RTS-96 test system, where we identify two high voltage solutions with distinct loop flows. 
\end{abstract}

\maketitle


{\bf Synchronization and flow networks.}
The history of scientific investigation about synchronization is traditionally traced back to Huygens' observation of an "odd kind of sympathy" in the XVIIth century.~\cite{Huy93} 
It is however only in the last decades that a tractable framework has been developed,~\cite{Are08,Str00,Ace05} thanks in particular to the pioneering works of Winfree in the 1960s,~\cite{Win67} and Kuramoto in the 1970s-80s.~\cite{Kur75,Kur84} 
Shortly thereafter, the problem of synchronization has been embedded in the framework of network systems,~\cite{Sak87,Str88b,Dai88} first based mostly on numerical simulations, evolving progressively towards more and more analytical results.~\cite{Ace05,Are08,Dor14} 
Even in the simplest form of coupled oscillators, the interplay between dynamics and network structures leads to rich and sometimes unexpected behaviors. 

The interactions between synchronizing oscillators is naturally interpreted as a flow of information or commodity between the nodes of a network. 
This dual interpretation of synchronization and flows is predominant in the modeling of voltage dynamics in high voltage power grids.~\cite{Ber81,Dor13}
Indeed, in power systems, a rotating turbine in a plant accumulates kinetic energy and accelerate if all the power it produces is not transmitted to its neighboring buses. 
While there is a natural link between synchronization and flow balance in power grids, a similar duality underlies dynamical systems as diverse as spring-connected rotating masses,~\cite{Dor14} motion planning,~\cite{Pal07,Leo12} or chemical oscillations~\cite{Kur84} to name but a few. 
The rate of change of an oscillator's state is then determined by the imbalance of the flows received from or sent to its neighbors. 
When the flows of commodity balance out at each agent, such that all agents have identical rate of change, then the relative positions of the agents are constant in time: we say that they are {synchronized}. 
In particular, this is the desired state for an AC power grid. 

{\bf Lossless oscillator networks.}
One of the simplest models of synchronization considers a set of oscillators, each described by a phase $\theta\in\mathbb{S}^1\simeq[-\pi,\pi)$, interacting with each other through a $2\pi$-periodic coupling, function of their phase difference, 
\begin{align}\label{eq:dyn}
 m_i\ddot{\theta}_i + d_i\dot{\theta}_i &= \omega_i - \sum_{j=1}^n a_{ij} f_{ij}(\theta_i - \theta_j)\, , 
\end{align}
for $i\in\{1,...,n\}$. 
The real parameters $m_i$ and $d_i$ represent the effective {inertia} and {damping} of oscillator $i$ respectively, and $\omega_i$ is the {natural frequencies} that the oscillator would hold without interactions. 
The function $f_{ij}$ is the coupling between nodes $i$ and $j$ and $a_{ij}\in\mathbb{R}_{\geq 0}$ is the edge weight that scales the strength of the coupling and determines the underlying network structure. 
These coefficients are nonzero if and only if oscillators $i$ and $j$ interact. 
The system described by equation~\eqref{eq:dyn} evolves on the $n$-torus $\mathbb{T}^n=(\mathbb{S}^1)^n$.
The oscillators are {frequency synchronized} (or {phase locked}) if, at some point in time, $\dot{\theta}_i \equiv \varphi$, for all $i$. 
The intrinsic compact nature of each oscillator's domain and the continuity of the coupling function require $f_{ij}$ to be nonlinear, and periodic in the systems of interest here. 
Whereas linear networked systems are well-understood,~\cite{Bul22} the nonlinear nature of the coupling between oscillators can lead to rich and intricate behaviors.~\cite{Dor13,Lev17} 

The vast majority of the literature about synchronization of coupled oscillators assumes symmetric couplings, i.e., two coupled oscillators influence each other with the same strength. 
In the flow network interpretation, symmetric couplings correspond to lossless flows, i.e., the flow of commodity between $i$ and $j$ contributes with equal magnitude and opposite sign to each end of the edge between $i$ and $j$. 
Shortly put in mathematical terms, $f_{ij}(x) = -f_{ji}(-x)$. 
Consequences of this strong relation between $f_{ij}$ and $f_{ji}$ are in particular: 
(i) conservation of the total flow in the system, simplifying the calculation of the asymptotics of equation~\eqref{eq:dyn} and;
(ii) symmetry of the Jacobian matrix of the system, guaranteeing nice and convenient spectral features. 
The properties of systems with lossless couplings allowed to derive a long list of results about their dynamics: conditions for existence and uniqueness of their synchronous states;~\cite{Dor14,Sim18a,Sim18b} multistability;~\cite{Dor13,Jaf22} and clustering~\cite{Aey08,Xia11} to name but a few. 
An approach, common to various works, is to design a fixed-point iteration,~\cite{Sim18a,Sim18b,Jaf22} whose convergence is guaranteed under some convexity properties of the {energy landscape} of the system.~\cite{Dvi15,Lin19}

{\bf Challenges in the lossy oscillator systems.}
While the lossless assumption is reasonable in many cases, it is often not realistic and can lead to inaccurate predictions (see the power flow problem in the Results Section). 
In the flow interpretation, the transfer of a commodity, say electric power, is subject to dissipation, e.g., due to line resistance, meaning that the amount sent from $i$ to $j$ is strictly larger than the amount received by $j$ from $i$. 
In mathematical terms, dissipation are introduced in equation~\eqref{eq:dyn} by adding a term to the coupling function
\begin{align}\label{eq:dyndiss}
 m_i\ddot{\theta}_i + d_i\dot{\theta}_i &= \omega_i - \sum_{j=1}^n a_{ij}\left[f_{ij}(\theta_i-\theta_j) + g_{ij}(\theta_i-\theta_j)\right]\, ,
\end{align}
satisfying $g_{ij}(x) = g_{ji}(-x)$. 
Note that any pair of coupling functions (from $i$ to $j$ and from $j$ to $i$) can be decomposed as the sum of $f_{ij}+g_{ij}$ with the above properties [see equations~\eqref{eq:hse} and \eqref{eq:hae} in the Methods Section].

The importance of understanding the more realistic case of dissipative couplings motivated the early work by Sakaguchi and Kuramoto~\cite{Sak86,Sak88} and is still an active field of research. 
Recent numerical investigations~\cite{Bal19,Hel20} as well as analytical studies in regular systems~\cite{Bro18,Ber21,Mih19} are beginning to shed light on a more in-depth understanding of dissipative networks. 
More generally, the extension of standard approaches to more realistic systems is gaining momentum in the fields of synchronization and complex networks.~\cite{Zha20,Bat20,Zha21}

Up to this day, it is unclear to what extent the properties enjoyed by lossless networks are preserved in more realistic, dissipative systems. 
In the global scientific aim of a faithful modeling of real systems, it is of utmost importance to decipher the impact of dissipation in standard models of networked dynamics. 
Indeed, conditions for existence, uniqueness, and multiplicity of synchronous states or for the emergence of clustering in lossless systems~\cite{Dor14,Dor13,Jaf22,Aey08,Xia11} are yet to be adapted to their dissipative counterpart. 
Furthermore, it is now largely documented~\cite{Mat19} that phase frustration can lead to the occurrence of {solitary} and {chimera states},~\cite{Abr04,Abr08,Wol11,Mar13,Hel20,Zha20b} that are extensively studied, but still only partially understood. 

Understanding dissipative systems is challenging for a number of reasons. 
In such systems, flow conservation is lost and the linearization of the system typically loses its symmetry. 
Furthermore, while equation~\eqref{eq:dyn} can be formalized as a gradient system over an energy landscape, this property immediately fails in dissipative systems such as equation~\eqref{eq:dyndiss}. 
Therefore, technical approaches based upon energy landscapes are not applicable any longer. 
Incorporating dissipation in the system even requires to re-think the intuitive vectorial formulation of equation~\eqref{eq:dyn}, in order to recognize the directionality of flows. 
Notice that, surprisingly, even a clear vectorial form of the dissipative dynamics is lacking in the literature.

{\bf Models of lossy power grids.}
In the context of power grids, taking 
\begin{align}
 f_{ij}(x) &= \sin(x)\, ,
\end{align}
in equation~\eqref{eq:dyn} yields precisely the lossless approximation of the {swing equations},~\cite{Ber81,Dor13} describing the time evolution of voltage phase angles, with $\omega_i$ being the power injection or consumption at bus $i$. 
In normal operation, it is desired that the system is maintained in the vicinity of a synchronous state of the swing equations, which is reached at the solution of the lossless power flow equations (see the Methods Section). 

As discussed above, while the lossless approximation can be fair and useful in power grids modeling, it is never exact. 
Therefore, an accurate mathematical analysis of voltage dynamics and power flows requires to take dissipation into consideration. 
Considering resistive losses in the swing equations boils down to taking (see Methods Section)
\begin{align}
 f_{ij}(x) &= \cos(\phi_{ij})\sin(x)\, , &
 g_{ij}(x) &= \sin(\phi_{ij})[1 - \cos(x)]\, .
\end{align}
The mathematical challenges encountered when relaxing the lossless line assumption confined most of the literature to numerical investigations.~\cite{Hel20,Bal19}

{\bf Objectives and contributions.}
The overarching goal of this article is to develop an analysis framework to characterize the location, properties, and stability of synchronous solutions of dissipative oscillator networks. 
In the task of globally characterizing synchronous states of lossless oscillator networks, an instructive and effective approach has been to leverage the concepts of {winding numbers} and {winding cells}.~\cite{Jaf22} 
Given a cycle of oscillators $\sigma = (\theta_1,...,\theta_{|\sigma|},\theta_1)$, the associated winding number $q_\sigma(\btheta)$ counts the number of times the oscillators' angles wrap around the origin when following $\sigma$ (a rigorous definition is given in the Results Section).
A winding cell is a subset of the $n$-torus $\mathbb{T}^n$ whose points share the same winding number around each cycle. 
Winding cells form a partition of the $n$-torus (the {winding partition}) and directly result from the network structure of the system. 
The concepts of winding numbers and winding cells are illustrated in Fig.~\ref{fig:winding3}. 

\begin{figure}
 \centering
 \includegraphics[width=\columnwidth]{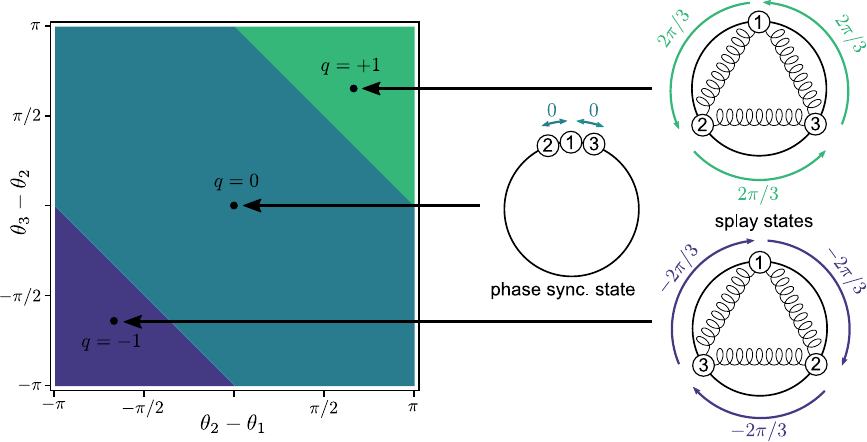}
 \caption{{\bf Winding cell partition and equilibria for a 3-node spring network.}
 Projected winding cell partition of the 3-torus (left) and representation of three equilibria of a spring network (right). 
 Points on the 3-torus are projected on the 2-dimensional space of angular differences $\theta_2-\theta_1$ and $\theta_3-\theta_2$. 
 A winding number $q\in\mathbb{Z}$ is associated to each equilibrium, counting the number of times its angles wind counterclockwise around the origin. 
 The three dots (left) are equilibria of the spring network, labeled with their winding numbers $q$. 
 The {phase synchronous state} has all angles identical and therefore $q=0$. 
 Equilibria with $q=\pm 1$ are the so-called {splay states} or {twisted states}. 
 The colored areas in the left panel represent the set of points with same winding number, i.e., the winding cells, forming a partition of the 3-torus. 
 }
 \label{fig:winding3}
\end{figure}

In this article, we rigorously draw the link between the winding cells and the occurrence of a series of surprising behaviors of dissipative oscillator systems, that have escaped analysis up to this day. 
Specifically, we show that, in some cases, increased dissipation can lead to more robust and more stable systems. 
We argue that the winding partition provides a clear phase portrait for the analysis of such behaviors. 

Motivated by these first observations, we proceed to the second contribution of this article. 
Namely, we provide an analytical, statistical, and computational understanding of the solutions of dissipative flow problems. 
In particular, we show that, exactly as in lossless networks, there is at most one solution of the dissipative flow problem in each winding cell. 
This at most uniqueness property is rigorously proven for a small amount of dissipation and verified numerically for a wide range thereof. 
For acyclic networks we provide an algorithm computing the unique solution, if it exists. 
For general, cyclic graphs, we provide an iteration map that, under technical assumptions, converges to the unique solution in a given winding cell. 
An implementation of both algorithms is provided freely online.~\cite{repo}

We conclude by providing a compelling example of a realistic power grid where multiple solutions of the power flows are accurately captured by the distinct winding cells. 
Namely, we find two power flow solutions on the IEEE RTS-96 test system,~\cite{Gri99} belonging to two different winding cells. 

Overall, in this article, we illustrate our findings with the {Kuramoto-Sakaguchi model} (see the Methods Section).
This model is particularly appealing in our context: 
it is a natural extension of the (lossless) Kuramoto model;
it is a first order version of the lossy swing equations;
and the amount of dissipation in the coupling can easily be tuned with a continuous parameter, namely the phase frustration $\phi\in\mathbb{R}$.  
Nevertheless, our analytical results are valid for a much broader class of coupling functions $f_{ij}+g_{ij}$, that will be of interest to the dynamical systems and network science communities. 

\begin{rmk}
In addition to the demonstrations provided, the framework proposed here is naturally suited to the analytical study of networked dynamical system with directed interactions. 
For sake of conciseness and clarity, we limit our focus to dissipative interactions over undirected edges, but the framework covers naturally any type of directed interactions. 
We discuss these generalizations to a greater extent in the Discussion Section.
\end{rmk}


\section{RESULTS}
After a formal definition of the winding partition of the $n$-torus, we provide a careful description of a series of unexpected behaviors of dissipative oscillator networks. 
This section culminates with a presentation of our rigorous mathematical results. 
We conclude by giving an example of two solutions of the power flow equations, coexisting on the IEEE RTS-96 test system. 
A detailed formalism can be found in the Methods Section and proofs are deferred to the Supplementary Information.

{\bf Algebraic graph theory on the torus.}
Our framework is inspired by Ref.~\citenum{Jaf22}. 
The states of the system of equation~\eqref{eq:dyndiss} are points $\btheta$ in the $n$-torus $\mathbb{T}^n$, each component being a point $\theta_i$ of the circle $\mathbb{S}^1$. 
Comparing points on $\mathbb{S}^1$ requires to define angular differences, which is somewhat arbitrary. 
In this article, we use the {counterclockwise difference} 
\begin{align}
 \dcc(\theta_1,\theta_2) &= {\rm mod}(\theta_1 - \theta_2 + \pi, 2\pi) - \pi \in [-\pi,\pi)\, .
\end{align}
Intuitively, the counterclockwise difference is a projection of the angular difference on the interval $[-\pi,\pi)$. 

\begin{figure*}
 \centering
 \includegraphics[width=\textwidth]{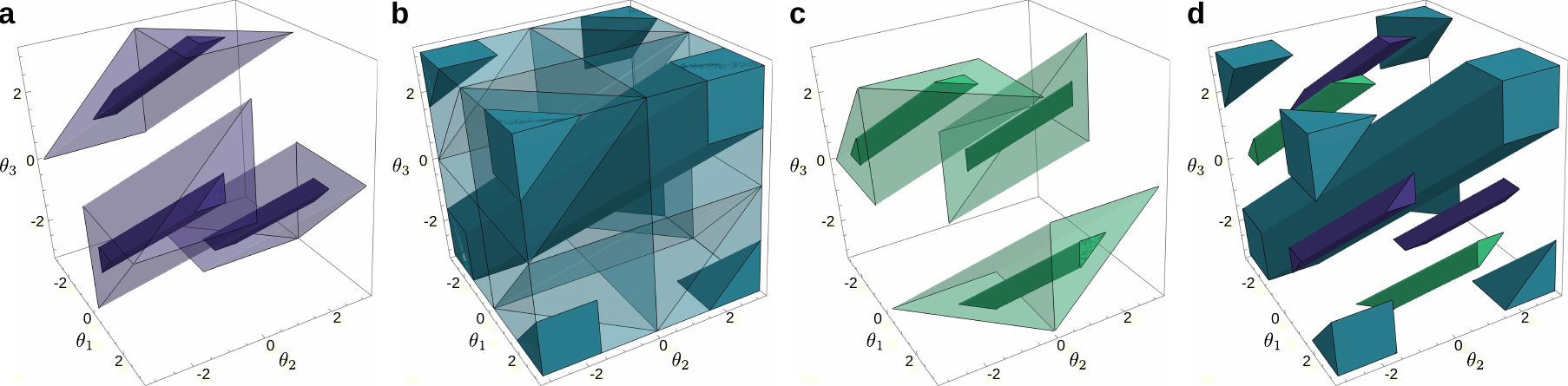}
 \caption{{\bf Winding cells and cohesive sets in a 3-node system.}
 Winding cells and their cohesive subsets, for a cycle of length $n=3$. 
 The 3-dimensional plots shows the unfolded $3$-torus, where each dimension parametrizes one of the three angles and the winding cells become polytopes. 
 The sides of the cube have then to be considered as identified (left-right, top-bottom, front-back). 
 {\bf a} The transparent volume is $\Omega(+1;\sigma)$, the winding cell of winding number $q=+1$, and the solid volume is the $3\pi/4$-cohesive set, i.e., the subset of $\Omega(+1;\sigma)$ where the counterclockwise differences do not exceed $3\pi/4$. 
 {\bf b} The transparent volume is $\Omega(0;\sigma)$, the winding cell of winding number $q=0$, and the solid volume is the $\pi/2$-cohesive set.
 {\bf c} The transparent volume is $\Omega(-1;\sigma)$, the winding cell of winding number $q=-1$, and the solid volume is the $3\pi/4$-cohesive set. 
 {\bf d} Union of the cohesive sets of the previous panels. 
 Each color corresponds to a winding cell. 
 A key result of this article is that there is at most one solution to equation~\eqref{eq:dyndiss} in a certain cohesive subset of each winding cell, i.e., in each solid volume. }
 \label{fig:winding_cell}
\end{figure*}

Given a cycle $\sigma=(i_1,...,i_{|\sigma|},i_1)$ in a graph $\G$ (see Methods for details), one can calculate the {winding number} around cycle $\sigma$ associated with the state $\btheta\in\mathbb{T}^n$, 
\begin{align}\label{eq:winding_nr}
 q_\sigma(\btheta) &= (2\pi)^{-1}\sum_{j=1}^{|\sigma|} \dcc(\theta_{i_j},\theta_{i_{j+1}}) \in \mathbb{Z}\, .
\end{align}
Three states with different winding numbers are illustrated in Fig.~\ref{fig:winding3} for the 3-cycle. 
Intuitively, the winding number counts the number of times the angles in $\btheta$ wind around the origin when following the cycle $\sigma$. 
Then, given a cycle basis $\Sigma = \{\sigma_1,...,\sigma_c\}$ of the graph, we naturally define the {winding vector} associated to a state $\btheta\in\mathbb{T}^n$,
\begin{align}\label{eq:winding_vec}
 \wmap(\btheta) &= [q_{\sigma_1}(\btheta),...,q_{\sigma_c}(\btheta)]^\top \in \mathbb{Z}^c\, .
\end{align}
Nonzero winding numbers are typically associated to {loop flows},~\cite{Jaf22,Del16,Man17} i.e., a commodity flow of constant magnitude around a cycle of the network. 
Such loop flows occupy line capacity, but do not deliver commodity anywhere. 

\begin{rmk}
 The winding number is a natural extension to complex networks, of the quantification of vortex flows in regular lattices, that arise in statistical physics [e.g., superfluids~\cite{Fey55} or superconductors~\cite{Abr57}]. 
 As far as we can tell, the notion of winding numbers in systems of coupled oscillators can be traced back to Refs.~\citenum{Kor72} (referee discussion) and \citenum{Erm85b}. 
\end{rmk}

For a graph with $c$ cycles, a winding vector ${\bf u}\in\mathbb{Z}^c$ can be uniquely associated to each state in $\mathbb{T}^n$. 
Therefore we can define the {winding cell} associated with winding vector ${\bf u}$,
\begin{align}\label{eq:winding_cell}
 \wcell &= \left\{\btheta \in \mathbb{T}^n \colon \wmap(\btheta) = {\bf u}\right\}\, .
\end{align}
The counterclockwise difference is bounded, and so are the winding numbers.
There is then a finite number of winding cells for a given graph $\G$, forming a finite partition of $\mathbb{T}^n$. 
See Fig.~\ref{fig:winding_cell} for an illustration of winding cells in a cycle of $n=3$ oscillators.


\begin{figure*}
 \flushleft
 \includegraphics[width=\textwidth]{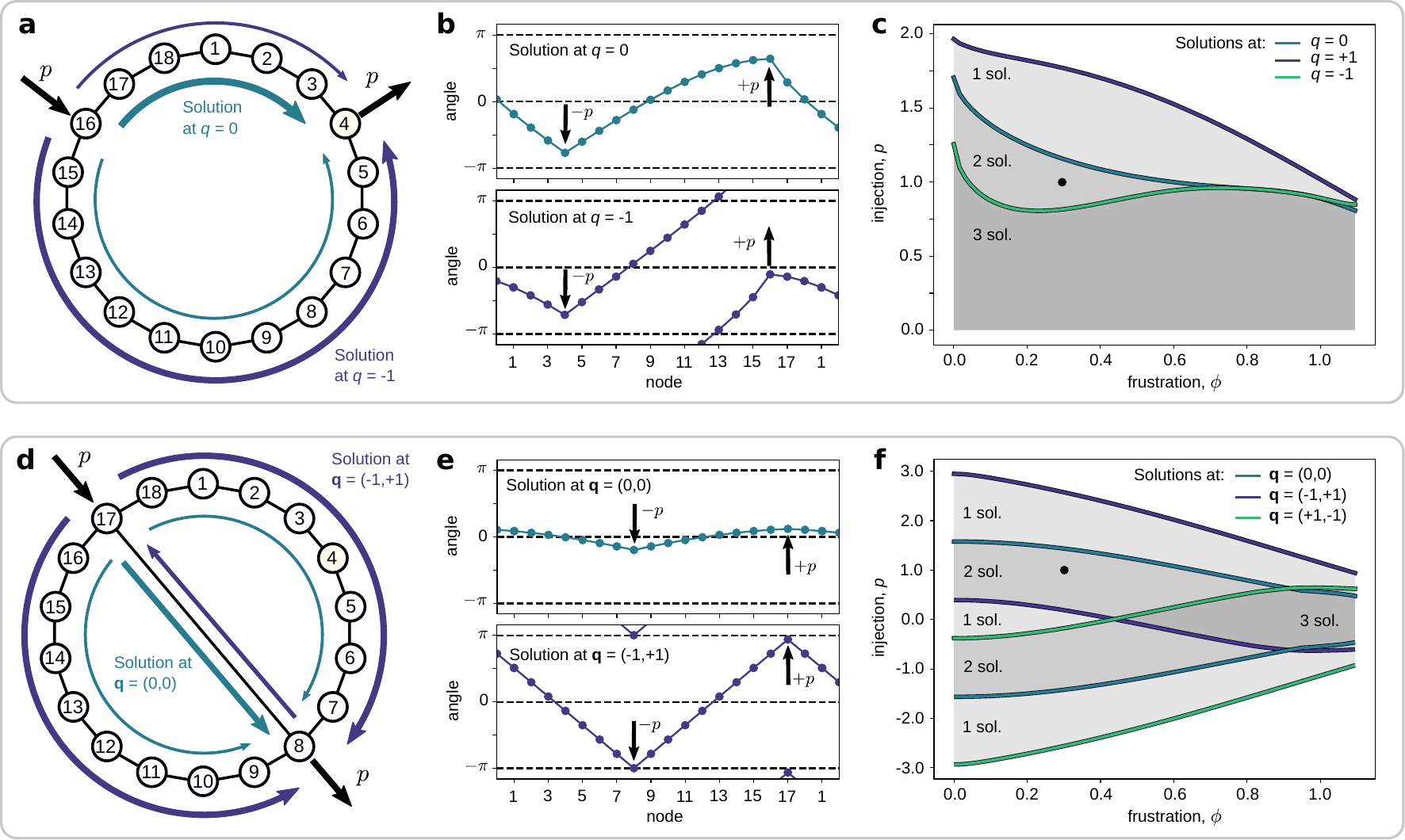}
 \caption{{\bf Anomalous behaviors in the Kuramoto-Sakaguchi model.}
 Illustration of the anomalous behaviors identified for the Kuramoto-Sakaguchi model on a cycle network (a, b, c) and a 2-cycle network (d, e, f), with unit coupling weights. 
 {\bf a} Qualitative distributions of flows over a cycle of $n=18$ oscillators, with commodity injection $+p$ (resp. withdrawal $-p$) at node $16$ (resp. $4$). 
 The arrows of two different colors visualize different flow solutions, with different winding numbers. 
 {\bf b} Angles corresponding to the two solutions of panel a. 
 One clearly sees that, for the solution at $q=+1$, the angles wrap around the circle, but not in the solution at $q=0$. 
 {\bf c} Boundaries (colored curves) of the existence regions for solutions at different winding numbers, in the parameter space of phase frustration $\phi$ and injection magnitude $p$. 
 The solutions in panels a and b were obtained for $(\phi,p)=(0.3,1.0)$. 
 The darkness of each area in the parameter space represents the number of existing synchronous states. 
 It is surprising that (i) the solution at $q=+1$, i.e., with larger winding number, can carry a larger flow than the solution at $q=0$, and (ii) for the solution at $q=-1$, the maximal tolerated commodity injection is not monotone in the frustration. 
 {\bf d, e, f}: Same (a, b, c) respectively, for the 2-cycle network in panel d. 
 {\bf f} Surprisingly, this network has fewer solutions for light load and frustration, $(\phi,p)\approx(0.0,0.0)$, rather than for larger parameter values, e.g., $(\phi,p)=(0.3,1.0)$. 
 }
 \label{fig:anomalies}
\end{figure*}

{\bf Anomalous behaviors of dissipative systems.}
Three unexpected behaviors of lossy oscillator networks are illustrated in Fig.~\ref{fig:anomalies} for the Kuramoto-Sakaguchi model. 
The first one is a direct extension of a phenomenon already noted for lossless systems.~\cite{Jaf22,Col16b} 
The two other have not been reported to the best of our knowledge. 

Anomaly 1, loop flows increase capacity:
One would expect the presence of a loop flow (i.e., nonzero winding number) to reduce the transmission capacity of the system, because lines are occupied by the aformentioned loop flow. 
In Figs.~\ref{fig:anomalies}c and \ref{fig:anomalies}f, we see that the solutions with larger winding numbers tolerate larger commodity transfer. 
Even though such observations have been documented in the past for lossless systems,~\cite{Jaf22,Col16b} it remains somewhat counterintuitive. 

Anomaly 2, dissipation increases capacity:
An initial reasoning would suggest that increasing dissipation would reduce the robustness and reliability of a system. 
Indeed, if part of the transmitted commodity is lost on the way, then more of it needs to be injected and the system is operated closer to criticality. 
However, the relation between dissipation and robustness is not that simple, as we illustrate in Fig.~\ref{fig:anomalies}c. 
Indeed, for a nonzero winding number, the ability of the system to synchronize can evolve non-monotonously with respect to the dissipation (see solution at $q=-1$). 
Such phenomenon is quite unexpected and, to the best of our knowledge, has not been reported so far. 

Anomaly 3, dissipation promotes multistability:
Different solutions differ by a collection of loop flows,~\cite{Del16} i.e., for some solutions, the lines are more loaded than for others. 
Similarly as in the previous anomaly, one would expect that increased dissipation would prevent the occurrence of loop flows and therefore of multiple solutions. 
However, according to Fig.~\ref{fig:anomalies}f, a system with low dissipation ($\phi\in[0,0.3]$) and low injection ($p\approx 0$) can have fewer solutions than more loaded and dissipative systems. 
Indeed, one would assume that lower loads and lower frustration leads to a larger margin of freedom in the system. 
Apparently, this is not necessarily the case and this can be attributed to the underlying network structure. 

The anomalous behaviors identified above are typically related to the coexistence of different solutions. 
As we show in this article, there is a strong and direct link between different solutions and the winding partition of the $n$-torus.

{\bf Problem setup and solution: Synchronous states with dissipative couplings.}
We now formalize the problem of flow distribution in dissipative networks and present our main formal results. 
We provide a summary of the main notation symbols in the Methods Section (Table~\ref{tab:notation}). 

Let $\G$ be the undirected graph describing the interactions in equation~\eqref{eq:dyndiss}.
Each edge $e=\{i,j\}$ of $\G$ is endowed with two coupling functions,
\begin{align}
 h_{ij}(x) &= f_{ij}(x) + g_{ij}(x)\, ,&
 h_{ji}(x) &= f_{ji}(x) + g_{ji}(x)\, ,
\end{align}
one for each orientation. 
Without loss of generality, we incorporate the edge weight $a_{ij}$ in the coupling functions $f_{ij}$, $g_{ij}$, and $h_{ij}$.
For each edge, we choose an arbitrary orientation, say $(i,j)$. 
We will refer to the coupling functions as $h_e=h_{ij}$ and $h_{\bar{e}}=h_{ji}$, with $\bar{e}$ denoting edge $e$ with reversed orientation. 

In our framework, equation~\eqref{eq:dyndiss} can be written in vectorial form as 
\begin{align}
 M\ddot{\btheta} + D\dot{\btheta} &= \bm{\omega} - \Bout \left[{\bf f}(\B^\top\btheta) + {\bf g}(\B^\top\btheta)\right] \notag\\
 &= \bm{\omega} - \Bout {\bf h}(\B^\top\btheta)\, , \label{eq:dyn_vec}
\end{align}
using the diagonal inertia and damping matrices $M$ and $D$, the incidence matrix $\B$, and the out-incidence matrix $\Bout$ induced by the chosen orientation for the incidence matrix [Methods Section, equations~\eqref{eq:incidence} and \eqref{eq:bidir-out-inc}]. 
The coupling function ${\bf h}\colon \mathbb{R}^m \to \mathbb{R}^{2m}$ relates a vector of angular differences over the $m$ undirected edges to the flows that are distinct for each edge orientation, hence ${\bf h}$ has $2m$ components,  
\begin{align}\label{eq:boldh}
 [{\bf h}({\bf y})]_e &= h_e(y_e)\, ,& [{\bf h}({\bf y})]_{e+m} &= h_{\bar{e}}(-y_e)\, .
\end{align}
The edge indices $e\in\{1,...,m\}$ follow the orientation induced by the incidence matrix $\B$. 
We refer to the discussion about the Kuramoto-Sakaguchi model in the Methods Section for an instructive example of the construction of equation~\eqref{eq:dyn_vec}. 

From now on, it will be convenient to formulate the problem in terms of angular difference variables $\bDelta\in\mathbb{R}^m$, rather than in terms of angle variables $\btheta\in\mathbb{R}^n$. 
Constructing the vector of angular differences $\bDelta$ from a vector of angles $\btheta$ is straightforward, using the transpose of the incidence matrix, $\bDelta = \B^\top\btheta$. 
The other direction however is not that direct. 
Indeed, from a difference vector $\bDelta$, one can recover the associated angle vector $\btheta$ over a spanning tree of the graph. 
Now, the angular difference vector is consistent with the graph structure only if some {cycle constraints}.
Namely, over the remaining edges of the graph $e=\{i,j\}$, that are not in the spanning tree, the constraint is $\theta_i-\theta_j=\Delta_e+2\pi k$, $k\in\mathbb{Z}$. 
The integer multiple of $2\pi$ does not matter because the angles are compact variables over $\mathbb{S}^1$.
Mathematically speaking, these cycle constraints can be formalized using the {cycle-edge incidence matrix} $C_\Sigma$ associated to a cycle basis $\Sigma=(\sigma_1,...,\sigma_c)$ [formally defined in the Methods Section, equation~\eqref{eq:c-e_incidence}]
\begin{align}\label{eq:cycle_constr}
 C_\Sigma \bDelta &= 2\pi{\bf u}\, ,
\end{align}
for some winding vector ${\bf u} \in \mathbb{Z}^c$. 
In summary, an angular difference vector $\bDelta$ needs to satisfy equation~\eqref{eq:cycle_constr} in order to be consistent with an angle vector $\btheta$. 

From now on, we will search for stable synchronous states of equation~\eqref{eq:dyn_vec}, i.e., we require the Jacobian matrix of the system to have its spectrum in the left complex half-plane.~\cite{Kha02}
Following Gershgorin Circles Theorem,~\cite{Hor94} stability is guaranteed if $h_e'(\Delta_e),h_{\bar{e}}'(-\Delta_e)>0$ for all $e$. 
Formally, we assume that, in a neighborhood of the origin, both $h_e$ and $h_{\bar{e}}$ are strictly increasing, and for each edge $e$ of $\G$, we require $|\Delta_e|\leq \gamma_e$, so derivatives are positive. 
The vector of angular differences is then restricted to the hypercube  
\begin{align}\label{eq:Rg}
 \Rg &= \bigcap_{e\in\E} [-\gamma_e,\gamma_e] \subset \mathbb{R}^m\, . 
\end{align}
The set of points $\btheta\in\mathbb{T}^n$ whose angular differences along the edges of $\G$ are in $\Rg$ is referred to as a {$\bm{\gamma}$-cohesive set}. 
The solid volumes in Fig.~\ref{fig:winding_cell} show the intersections of the various winding cells of a $3$-cycle and $\Rg$ for different values of $\gamma_e$. 

Gathering the above observations, we formulate the following problem, whose solutions are in one-to-one correspondence with synchronous states of equation~\eqref{eq:dyn_vec}. 

\begin{problem}[Dissipative Flow Network]
 Given a connected graph $\G$ with $n$ nodes, $m$ edges, and cycle basis $\Sigma$, a vector of natural frequencies $\bm{\omega}\in\mathbb{R}^n$, and appropriate coupling functions $h_e, h_{\bar{e}}$, associated to each edge $e$, find a solution $\bDelta\in \Rg$ of 
 \begin{subequations}\label{eq:asym_sync}
 \begin{align}
  &\Bout {\bf h}(\bDelta) - \bm{\omega} = \varphi \bm{1}_n\, , \label{eq:asym_sync1}\\
  &C_\Sigma \bDelta = 2\pi {\bf u}\, , \label{eq:asym_sync2}
 \end{align}
 \end{subequations}
 for some synchronous frequency $\varphi\in\mathbb{R}$ and winding vector ${\bf u}\in\mathbb{Z}^{m-n+1}$.
\end{problem}

In contrast with previous works on lossless systems, the flow map $h_e$ is not odd, meaning that we do not impose the constraint $h_e(\theta_i-\theta_j) = -h_{\bar{e}}(\theta_j-\theta_i)$, hence our need of the out-incidence matrix $\Bout$ instead in equation~\eqref{eq:asym_sync1}. 
Note that, even though in our example of the Kuramoto-Sakaguchi model all coupling functions are identical, in full generality, we allow $h_e\neq h_{\bar{e}}$. 

In the Methods Section, we provide a series of rigorous results, proving the following claim. 

\begin{claim}
 There is at most one solution to the Dissipative Flow Network problem in each winding cell of the $n$-torus.
\end{claim}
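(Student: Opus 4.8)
The plan is to argue from the \emph{strict monotonicity} of the coupling functions on the cohesive hypercube $\Rg$, reducing uniqueness to a spectral property of a (non-symmetric) weighted graph Laplacian. Suppose $\bDelta^{(1)}$ and $\bDelta^{(2)}$ are two solutions of the Dissipative Flow Network problem lying in the same winding cell, with synchronous frequencies $\varphi_1$ and $\varphi_2$. First I would exploit the cycle constraint \eqref{eq:asym_sync2}: since both solutions share the winding vector ${\bf u}$, their difference $\bm\delta := \bDelta^{(1)} - \bDelta^{(2)}$ satisfies $C_\Sigma\bm\delta = \bm 0$, i.e.\ $\bm\delta$ lies in $\ker C_\Sigma$, which is exactly the cut space $\operatorname{range}(\B^\top)$. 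Hence $\bm\delta = \B^\top\bm\psi$ for some $\bm\psi\in\mathbb{R}^n$; equivalently, the two states are lifts of angle vectors differing by $\bm\psi$.

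Next I would subtract the two copies of the balance equation \eqref{eq:asym_sync1} to obtain $\Bout[{\bf h}(\bDelta^{(1)}) - {\bf h}(\bDelta^{(2)})] = (\varphi_1-\varphi_2)\bm 1_n$, and linearize the left-hand side componentwise. Applying the scalar mean value theorem to each of the $2m$ entries of ${\bf h}$ in \eqref{eq:boldh} gives $[{\bf h}(\bDelta^{(1)})-{\bf h}(\bDelta^{(2)})]_e = h_e'(\xi_e)\,\delta_e$ and $[{\bf h}(\bDelta^{(1)})-{\bf h}(\bDelta^{(2)})]_{e+m} = -h_{\bar{e}}'(\eta_e)\,\delta_e$, where $\xi_e,\eta_e\in[-\gamma_e,\gamma_e]$ because $\Rg$ is convex [see \eqref{eq:Rg}]. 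The defining property of $\Rg$ then makes every $h_e'(\xi_e)$ and $h_{\bar{e}}'(\eta_e)$ strictly positive. Reading the resulting identity node by node and substituting $\bm\delta=\B^\top\bm\psi$ collapses it to $\sum_{j\sim i} w_{ij}(\psi_i-\psi_j) = \varphi_1-\varphi_2$ for every node $i$, with strictly positive effective weights $w_{ij}$ (equal to $h_e'(\xi_e)$ or $h_{\bar{e}}'(\eta_e)$ depending on the orientation of the incident edge). In matrix form this reads $L\bm\psi=(\varphi_1-\varphi_2)\bm 1_n$, where $L$ is a weighted graph Laplacian with zero row sums.

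Finally I would invoke the spectral structure of $L$. Because each undirected edge contributes positive weights in \emph{both} directions, the weighted digraph underlying $L$ is strongly connected whenever $\G$ is connected; by the directed matrix--tree theorem, $L$ has rank $n-1$ with $\ker L=\operatorname{span}(\bm 1_n)$ and a strictly positive left null vector $\bm v$. Left-multiplying $L\bm\psi=(\varphi_1-\varphi_2)\bm 1_n$ by $\bm v^\top$ annihilates the left-hand side, so $(\varphi_1-\varphi_2)\,\bm v^\top\bm 1_n=0$; since $\bm v^\top\bm 1_n>0$, the two synchronous frequencies coincide. Then $L\bm\psi=\bm 0$ forces $\bm\psi\in\operatorname{span}(\bm 1_n)$, hence $\bm\delta=\B^\top\bm\psi=\bm 0$ and $\bDelta^{(1)}=\bDelta^{(2)}$, which is the claim.

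The step I expect to be the main obstacle is precisely the one where dissipation bites. In the lossless case $w_{ij}=w_{ji}$ makes $L$ symmetric positive semidefinite with $\bm 1_n$ as both left and right null vector, so the whole argument is elementary; dissipation yields $h_e'\neq h_{\bar{e}}'$ and hence an \emph{asymmetric} $L$. One can no longer appeal to a symmetric-Laplacian or energy argument and must instead establish that the asymmetric $L$ retains a one-dimensional kernel and admits a strictly positive left null vector, via Perron--Frobenius / matrix--tree arguments for digraphs. Controlling the positivity of all the $w_{ij}$ simultaneously is what ties the argument to the cohesive set $\Rg$, and ensuring that the solutions of interest actually lie inside a sufficiently large $\Rg$ is where a smallness assumption on the dissipation would enter.
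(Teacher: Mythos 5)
Your proof is correct, and it takes a genuinely different---and in one important respect stronger---route than the paper's. The paper splits the claim in two: for acyclic graphs it runs a recursive, leaf-by-leaf comparison of two putative solutions (Theorem~\ref{thm:acyc}, Supplementary Section S1), and for cyclic graphs it introduces the flow mismatch iteration $\Sw$ of equation~\eqref{eq:Sw}, shows that it preserves winding cells, and proves it is \emph{contracting} provided the dissipation is small in the spectral sense of equation~\eqref{eq:contr_cond}, i.e., $\sup_{{\bf x}\in\Rg}(\Da)_{ii}<\inf_{{\bf x}\in\Rg}\lambda_2(\Ls)$ (Theorems~\ref{thm:contr} and \ref{thm:contraction}, Corollary~\ref{cor:contraction}). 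That smallness condition is forced on the authors because contraction in the Euclidean norm requires the \emph{symmetric part} of the linearized operator (the matrix $\Lp+\Dm$ of the Supplementary Information) to be positive definite on the cut space, and the even, dissipative part $\Dm$ can push its eigenvalues negative. You never ask for contraction: after essentially the same linearization the paper performs (their integral-averaged matrices $\Lambda_1,\Lambda_2$ play the role of your mean-value coefficients $h_e'(\xi_e)$, $h_{\bar e}'(\eta_e)$), you only need \emph{injectivity}, i.e., that the asymmetric matrix $L$ has kernel exactly ${\rm span}(\bm{1}_n)$ and admits a strictly positive left null vector. For the Laplacian of a strongly connected weighted digraph this holds for \emph{any} positive weights, however asymmetric, by Perron--Frobenius/matrix-tree arguments, so your argument requires no bound on the dissipation at all---only the standing assumption that the couplings are strictly increasing on $\Rg$, which is what makes all your $w_{ij}$ positive. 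Written out carefully, your proof therefore establishes the at-most-uniqueness claim in the full generality in which the Dissipative Flow Network problem is posed, which is precisely the strengthening the paper leaves as a conjecture ("Corollary~\ref{cor:contraction} is actually very conservative"); what it gives up relative to the paper is the algorithmic content, since the contraction of $\Sw$ doubles as the paper's computational tool, whereas your argument is purely a uniqueness statement and also handles the acyclic and cyclic cases uniformly rather than via Theorem~\ref{thm:acyc}.

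Two small points to tighten. First, the mean value theorem needs $C^1$ couplings; you can avoid it entirely by using difference quotients $w=[h_e(\Delta_e^{(1)})-h_e(\Delta_e^{(2)})]/\delta_e$, which are strictly positive by strict monotonicity whenever $\delta_e\neq 0$ (and irrelevant when $\delta_e=0$). Second, your closing worry is misplaced: no smallness assumption enters anywhere for the claim as stated, because the Dissipative Flow Network problem already restricts solutions to $\Rg$ by definition [equation~\eqref{eq:Rg}]; a smallness condition would only matter if you wanted to extend the statement to synchronous states lying outside the cohesive set.
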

 
More precisely, we distinguish the cases of acyclic graphs and of general graphs with cycles. 
For acyclic graphs, the winding partition is trivial and the whole $n$-torus is a single winding cell. 
By proving that there is at most one solution to the Dissipative Flow Network problem for such graphs, the claim is proven (see Theorem~\ref{thm:acyc}). 

In the case of cyclic graph, we design a iterative map that we prove to be contracting within winding cells, under reasonable conditions (see Corollary~\ref{cor:contraction}). 
Solutions of the Dissipative Flow Network problem are fixed points of this iteration map and therefore, there is at most one solution in each winding cell. 

The formal results are formulated in the Methods Section and proof are deferred to the Supplementary Information. 


{\bf Multistability in power grids.}
The last decades have seen a large scale effort of the complex systems community to provide an analytical description of the power flow equations and of their solutions (see the Methods Section). 
In 1972 already, Korsak~\cite{Kor72} showed that, mathematically speaking, the power flow equations tolerate multiple solutions on cyclic networks. 
Since then, there has been a plethora of evidence, both analytical and numerical, that the power flow equations allow the coexistence of different solutions.~\cite{Tam83,Klo91,Ma93,Dor13,Meh16} 
Even some "real-world" events advocate in this direction.~\cite{Cas98,Whi08} 
However, a large proportion of the work mentioned above relies on the {lossless line assumption}, namely, neglecting dissipation, voltage amplitude dynamics, and reactive power flows. 
Recently, there has been a common effort in trying to pursue a more realistic mathematical analysis of power grids, by incorporating reactive power flows,~\cite{Sim18a,Sim18b} voltage amplitude dynamics,~\cite{Sim16,Sch19} and dissipation.~\cite{Bal19,Hel20}
In particular, the recent linear stability analysis of the extendend swing equations proposed in Ref. \citenum{Bot22} show that the conditions for the local stability of a syncrhonous state in lossy systems are very similar to those for lossless sytems. 
Despite all this work, there is still neither a clear extension of the winding partition to the full active-reactive power flows, nor a global phase portrait for lossy oscillator systems, even though there are some notable related preliminary works.~\cite{Dvi15b,Par21} 
Our results are an advance in the aforementioned collective effort. 

To put our results in perspective with the resolution of the power flow equations, we solved both the Dissipative Flow Network Problem and the power flow equations on an adapted version of the IEEE RTS-96 test case.~\cite{Gri99,repo} 
In Fig.~\ref{fig:complex_volt}, we compare synchronous states of the Kuramoto-Sakaguchi model (panels b and c, inner circle), with the corresponding solutions of the full power flow equations (panels b and c, outer annulus). 
We elaborate on the resolution of the full power flow equations in the Methods Section. 
First of all, one clearly sees that the main qualitative features (e.g., winding number, cohesiveness, clustering) of the power flow solutions are captured by the corresponding synchronous states of the Kuramoto-Sakaguchi model. 
Furthermore, it is remarkable that two solutions to the full power flow equations coexist, satisfying all voltage amplitude constraints as well as voltage angle stability. 
This example shows that loop flows and the winding partition are fundamental features of power flow solutions. 
Our work is a contribution to the joint and long lasting effort in the quest of an accurate mathematical analysis of power grids, which is a landmark in the area of power grids analysis.

\begin{figure*}
 \centering
 \includegraphics[width=\textwidth]{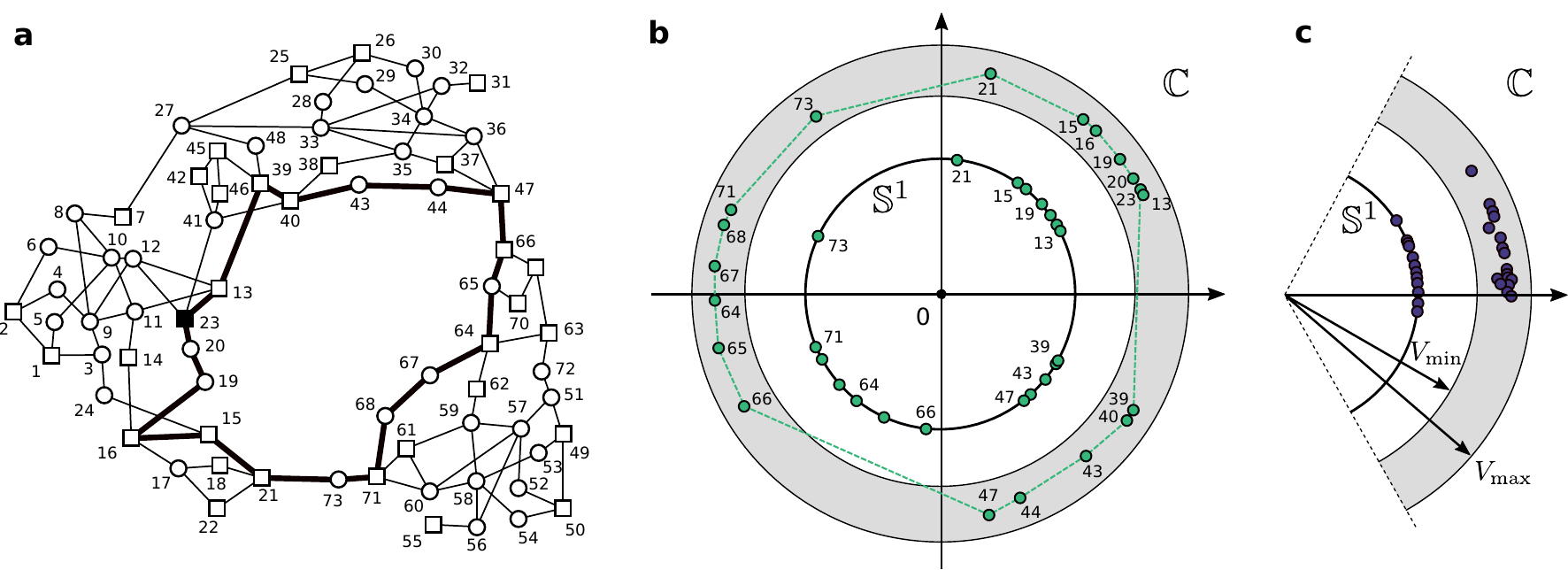}
 \caption{{\bf Multistability in the full power flow equations for the IEEE RTS-96 test case.}
 Comparison of the power flow solutions and Kuramoto-Sakaguchi synchronous states on the IEEE RTS-96 test case.~\cite{Gri99} 
 (a) Geographic representation of the system. 
 Circles are loads and squares are generators. 
 The network is composed of $n=73$ nodes, $m=108$ edges, and therefore $c=36$ independent cycles. 
 The long cycle with thick edges is of particular interest, because its length promotes the existence of loop flows while keeping angular differences small [see Refs.~\citenum{Del16,Man17} for an extended discussion]. 
 (b, c) Combined representations of: (outer annulus) the complex voltages for solutions to the full power flow equations for an adapted version of the IEEE RTS-96 test case; (inner circle) the phase angles of synchronous states of the Kuramoto-Sakaguchi model on the same system. 
 For sake of readability, only the values of the nodes around the long cycle of panel a are represented. 
 The outer annulus represents the tolerated margin of variation for the voltage amplitudes in the power flow equations. 
 The power flow solution in panel b has a nonzero winding number ($q=+1$) and there is a reasonable correspondence (ordering, clustering) between its voltage phases and the angles of the Kuramoto-Sakaguchi synchronous state. 
 Similarly, both the power flow solution and the Kuramoto-Sakaguchi synchronous states in panel c have zero winding number, with all angles in a relatively short arc. 
 }
 \label{fig:complex_volt}
\end{figure*}


\section{DISCUSSION}
Theorem~\ref{thm:acyc} and Corollary~\ref{cor:contraction} (see Methods Section) rigorously prove that, in each winding cell of the $n$-torus, there is at most a unique synchronous solution for dissipative networks of oscillators. 
In acyclic networks, the whole $n$-torus is trivially the unique winding cell, and there is therefore at most one solution to the Dissipative Flow Network problem (Theorem~\ref{thm:acyc}), independently of the amount of dissipation. 
For systems over more general, cyclic graphs, the winding partition provides a natural decomposition of the $n$-torus in subsets containing at most one solution. 
These results are a straight generalization of Ref.~\citenum{Jaf22} to dissipative systems. 

Even though the relation established in Corollary~\ref{cor:contraction} is formally valid for relatively small amounts of dissipation, numerical experiments did not lead to any counterexample. 
Indeed, we empirically observed for a large range of network structures, frustration parameters, and initial conditions, that the iteration map defined in Theorem~\ref{thm:contraction} [Methods Section, equation~\eqref{eq:Sw}] can always be made contracting by taking a sufficiently small value of $\epsilon>0$. 
We have therefore strong numerical evidence that the aforementioned iteration map can be made contracting at any point of the $\bm{\gamma}$-cohesive set, with $\gamma_e$ taken such that each coupling function $h_e$ is strictly increasing on $[-\gamma_e,\gamma_e]$. 
We conjecture that Corollary~\ref{cor:contraction} is actually very conservative in general and that the at most uniqueness property therein is valid for a much broader range of dissipation-to-coupling ratio.
Furthermore, the comparison between coupling and dissipation in equation~\eqref{eq:contr_cond} clearly pinpoints how dissipation works against synchronization. 
When considering lossless couplings, one realizes that the inequality in \eqref{eq:contr_cond} is always satisfied, recovering the results of Ref.~\citenum{Jaf22}. 

Both the proofs of Theorem~\ref{thm:acyc} and Corollary~\ref{cor:contraction} are algorithmic by nature. 
Namely, the proof of Theorem~\ref{thm:acyc} considers recursively the flows on the edges of the acyclic graph, and Corollary~\ref{cor:contraction} relies on an iteration map [$\Sw$, equation~\eqref{eq:Sw}]. 
It is therefore straightforward to actually implement the proofs as routines, which we provide online.~\cite{repo} 

The anomalies illustrated in Fig.~\ref{fig:anomalies} emphasize that the introduction of dissipation in the coupling between oscillators has a nontrivial and surprising impact of the dynamics. 
The fact that both loop flows and dissipation can increase the transmission capacity of a system (Anomalies 1 and 2) is arguably counterintuitive. 
We remark that both Anomalies 1 and 2 occur for solutions with nonzero winding numbers ($q=+1$ and $q=-1$ respectively). 
It is also quite unexpected that more loaded and dissipative systems can possess more flow network solutions for a given network structure (Anomaly 3). 
Again, this last anomaly involves solutions in different winding cells. 
All anomalies identified in Fig.~\ref{fig:anomalies} are strongly linked to solutions with nontrivial winding numbers. 
A general and thorough description of the different operating states of dissipative networks of oscillators then requires to tackle these systems through the prism of the winding partition. 
On top of that, through our realistic example on the IEEE RTS-96 test system, we show that the winding partition will be relevant in the analysis of multiple solutions to the full power flow equations. 

We trust that the notion of winding partition has the potential to contribute elucidating many open problems in the fascinating phenomenon of synchronization in complex networks. 
We reiterate that even though we restricted our discussion to bidirected interactions for sake of clarity, the whole framework developed in this article naturally applies to any system with directed interactions. 
Namely, our formalism is a first step towards a unified analysis of synchronization in any network of coupled oscillators, no matter the nature of the interactions.


\section{METHODS}
We first provide the necessary grounds of directed and undirected graph theory, as well as a link between them. 
We point to Ref.~\citenum{Bul22} for an extended discussion about graph and digraph theory. 
We then discuss the Kuramoto-Sakaguchi model and its link with the power flow equations. 
We conclude this section with the formulation of our theoretical results. 

\begin{table}
 \centering
 \begin{tabular}{l|l}
  Symbol & Name/Description \\
  \hline
  $\V$ & Set of vertices. \\
  $\G$, $\E$ & Undirected graph, undirected edge set. \\
  $\B$, $\Lu$ & Incidence, and Laplacian matrices of \\& an undirected graph [equation~\eqref{eq:incidence}]. \\
  $C_\Sigma$ & cycle-edge incidence matrix of the set \\& of cycles $\Sigma$ [equation~\eqref{eq:c-e_incidence}]. \\
  \hline
  $\Gd$, $\Ed$ & Directed graph, set of directed edges. \\
  $\Gb$, $\Eb$ & Bidirected counterpart of the undirected \\& graph $\G$ and its set of directed edges. \\
  $s_e$, $t_e$ & Source and target of edge $e$. \\
  $\bar{e}$ & Edge $e$ with opposite direction. \\
  $\Ad$, $\Bd$, $\Ld$ & Adjacency, incidence, and Laplacian \\& matrices of a digraph [equations~\eqref{eq:dir-adj}, \eqref{eq:dincidence}]. \\
  $\Bb$ & Incidence matrix of a bidirected graph. \\
  $\Bout$, $\Bin$ & Out- and in-incidence matrices of a \\& digraph. \\
  \hline
  $\btheta=(\theta_1,...,\theta_n)^\top$ & Vector of phase angles. \\
  $\bm{\omega}=(\omega_1,...,\omega_n)^\top$ & Vector of natural frequencies. \\
  $a_{ij}$, $\phi_{ij}$ & Coupling strength and phase \\& frustration between nodes $i$ and $j$. \\
  \hline
  $\gamma_e$ & Bound on the angular difference over \\& the edge $e$. \\
  $h_e$, $h_{\bar{e}}$ & Coupling functions over the edge $e$. \\
  $\hse$, $\hae$ & Odd and even parts of the coupling \\& over edge $e$ [equation~\eqref{eq:hse}, \eqref{eq:hae}]. \\
  $\Rg$ & Domain of bounded angular \\& differences [equation~\eqref{eq:Rg}]. \\
  \hline
  $\wmap$ & Winding map for the cycles in $\Sigma$ \\& [equation~\eqref{eq:winding_vec}]. \\
  $\wcell$ & Winding cell with winding vector ${\bf u}$ in \\& the graph $\G$ [equation~\eqref{eq:winding_cell}]. \\
  $\Sw$ & Flow mismatch iteration [equation~\eqref{eq:Sw}]. \\ 
 \end{tabular}
 \caption{List of symbols.}
 \label{tab:notation}
\end{table}


{\bf Directed graphs.}
A {directed graph} (or {digraph}) $\Gd$ is the pair $(\V,\Ed)$ composed of a set of vertices (or nodes) $\V=\{1,...,n\}$ and a set of directed edges $\Ed\subset \V\times\V$, which are ordered pairs of vertices. 
For an edge $e=(i,j)\in\Ed$, $i$ is the {source} of $e$, denoted $s_e$, and $j$ is its {target}, denoted $t_e$, i.e., $e = (s_e,t_e)$. 
We denote the edge with opposite direction as $\bar{e}=(t_e,s_e)$. 
The existence of edges is encoded in the graph's {adjacency matrix} 
\begin{align}\label{eq:dir-adj}
 (\Ad)_{ij} &= \left\{
 \begin{array}{ll}
  1\, , & \text{if } (i,j)\in\Ed\, , \\
  0\, , & \text{otherwise.}
 \end{array}
 \right.
\end{align}
The {out-degrees} (resp. {in-degrees}) matrix is obtained as $\Dout={\rm diag}(\Ad\bm{1})$ (resp. $\Din={\rm diag}(\Ad^\top\bm{1})$). 
We define the {Laplacian matrix} of $\Gd$ as $\Ld = \Dout - \Ad$. 
For a digraph with $n$ vertices and $m$ directed edges, we define the $n\times m$ {out-incidence} and {in-incidence} matrices 
\begin{align}\label{eq:out-inc}
 (\Bout)_{ie} &= \left\{
 \begin{array}{ll}
  1\, , & \text{if } e=(i,j) \text{ for some } j\, , \\
  0\, , & \text{otherwise,}
 \end{array}
 \right. \\
 (\Bin)_{ie} &= \left\{
 \begin{array}{ll}
  1\, , & \text{if } e=(j,i) \text{ for some } j\, , \\
  0\, , & \text{otherwise,}
 \end{array}
 \right.
\end{align}
which form the standard {incidence matrix}
\begin{align}\label{eq:dincidence}
 \Bd &= \Bout - \Bin\, .
\end{align}
 
We notice the following relations, the fourth being unknown as far as we can tell. 

\begin{prop}\label{prop:digraph_mtx}
 The adjacency matrix $\Ad$, the out- and in-degree matrices $\Dout$ and $\Din$, and the Laplacian matrix $\Ld$ of a directed graph can be written in terms of its out- and in-incidence matrices $\Bout$ and $\Bin$:
 \begin{align} 
  \Dout &= \Bout \Bout^\top\, , & \Din &= \Bin \Bin^\top\, , \notag\\
  \Ad &= \Bout \Bin^\top\, , & \Ld &= \Bout B^\top\, .
 \end{align}
\end{prop}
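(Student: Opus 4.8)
The plan is to verify each of the four identities entrywise, exploiting the key structural fact that every column of $\Bout$ and of $\Bin$ contains exactly one nonzero entry. Indeed, from the definitions in equation~\eqref{eq:out-inc}, the column of $\Bout$ indexed by an edge $e$ has a single $1$ in row $s_e$ (its source) and zeros elsewhere, while the column of $\Bin$ indexed by $e$ has a single $1$ in row $t_e$ (its target). All four identities are then bookkeeping exercises about which edges contribute to a given matrix product, and the last one follows purely algebraically once the first three are in hand.

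First I would treat the two degree identities. The $(i,k)$ entry of $\Bout\Bout^\top$ is $\sum_{e\in\Ed}(\Bout)_{ie}(\Bout)_{ke}$, and a summand is nonzero only for edges $e$ with $s_e=i=k$. Hence all off-diagonal entries vanish, and the $(i,i)$ entry counts the edges emanating from $i$, which is precisely the out-degree $(\Dout)_{ii}$; this gives $\Dout=\Bout\Bout^\top$. Repeating the argument with targets in place of sources, i.e. using columns of $\Bin$, yields $\Din=\Bin\Bin^\top$ in the same way.

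Next, for the adjacency identity, the $(i,j)$ entry of $\Bout\Bin^\top$ is $\sum_{e\in\Ed}(\Bout)_{ie}(\Bin)_{je}$, and a summand is nonzero exactly when $e$ has source $i$ and target $j$, i.e. $e=(i,j)$. Since $\Ed$ contains at most one such edge, the sum equals $1$ if $(i,j)\in\Ed$ and $0$ otherwise, which matches the definition of $\Ad$ in equation~\eqref{eq:dir-adj}. Finally, the Laplacian identity follows from the previous three together with the definitions $\Bd=\Bout-\Bin$ and $\Ld=\Dout-\Ad$:
\begin{align}
 \Bout\Bd^\top &= \Bout(\Bout-\Bin)^\top = \Bout\Bout^\top - \Bout\Bin^\top = \Dout - \Ad = \Ld\, . \notag
\end{align}

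No individual step poses a genuine obstacle; the calculation is elementary. The only point requiring care is the adjacency identity, where one must invoke that the digraph is \emph{simple} (no parallel edges) so that the count of edges from $i$ to $j$ collapses to a $0/1$ value rather than an edge multiplicity. Were parallel edges or self-loops allowed, this step would need separate bookkeeping (and the definition of $\Ad$ itself would have to be revisited), but under the standing assumption of simple digraphs the argument goes through verbatim.
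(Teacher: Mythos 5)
Your proof is correct, and for the one identity the paper actually proves in-line --- the Laplacian formula $\Ld=\Bout\Bd^\top$, which the authors flag as the new observation --- your computation coincides exactly with theirs: $\Ld=\Dout-\Ad=\Bout\Bout^\top-\Bout\Bin^\top=\Bout(\Bout-\Bin)^\top=\Bout\Bd^\top$. The only difference is in how the first three identities are handled: the paper does not prove them at all, instead citing Lemmas 3.1 and 4.1 of Balbuena et al.\ (Ref.~\citenum{Bal03}), whereas you give a self-contained entrywise verification based on the observation that each column of $\Bout$ (resp.\ $\Bin$) has a single $1$ in the row of the edge's source (resp.\ target). That makes your write-up more elementary and self-contained, at the cost of reproving standard facts; the argument itself is the same bookkeeping the cited lemmas encode. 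Your caveat about simple digraphs is consistent with the paper's setting: since $\Ed\subset\V\times\V$ is a \emph{set} of ordered pairs, parallel edges are excluded by definition, so no additional hypothesis is needed for the adjacency identity to produce $0/1$ entries.
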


\begin{proof}
 The proofs for the adjacency matrix $\Ad$ and for the degree matrices $\Dout$ and $\Din$ can be found in Ref.~\citenum{Bal03} (Lemmas 3.1 and 4.1). 
 The proof for the Laplacian matrix directly follows, 
 \begin{align}
  \Ld &= \Dout - \Ad = \Bout \Bout^\top - \Bout \Bin^\top \notag\\
  &= \Bout (\Bout - \Bin)^\top = \Bout B^\top\, .
 \end{align}
\end{proof}

\begin{rmk}
 The same proof is straightforwardly adapted to weighted directed graphs. 
\end{rmk}


{\bf Undirected graphs.} 
An {(undirected) graph} $\G$ is a pair $(\V,\E)$ composed of a set of $n$ vertices (or nodes) $\V=\{1,...,n\}$ and a set of $m$ edges, which are unordered pairs of vertices, $\E\subset\left\{\{i,j\} \colon i,j\in\V\right\}$. 
A {cycle} of $\G$ is an ordered sequence of vertices $\sigma = (i_0,i_1,...,i_\ell = i_0)$, such that $\{i_j,i_{j+1}\}\in\E$ and $i_j\neq i_k$ for any $j,k\in\{1,...,\ell\}$. 

Let us now choose an arbitrary orientation [$(i,j)$ or $(j,i)$] for each undirected edge $\{i,j\}\in\E$. 
We can then define the {incidence matrix} of $\G$, 
\begin{align}\label{eq:incidence}
 (\B)_{ie} &= \left\{
 \begin{array}{ll}
  1\, , & \text{if } e=(i,j) \text{ for some } j\, , \\
  -1\, , & \text{if } e=(j,i) \text{ for some } j\, , \\
  0\, , & \text{otherwise.}
 \end{array}
 \right.
\end{align}
The {Laplacian matrix} of $\G$ can be computed as $\Lu=\B\B^\top$. 
Note that the incidence matrix is not unique and depends on the choice of edge orientations, whereas the Laplacian does not. 
Given a cycle $\sigma$, we define the {cycle vector} ${\bf v}_\sigma\in\{-1,0,+1\}^m$, indexed by edges, as 
\begin{align}
 ({\bf v}_\sigma)_e &= \left\{
 \begin{array}{ll}
  +1\, , & \text{if } e = (i_{k-1},i_{k}) \text{ for some } k\, , \\
  -1\, , & \text{if } e = (i_{k},i_{k-1}) \text{ for some } k\, , \\
  0\, , & \text{otherwise.}
 \end{array}
 \right.
\end{align}
The {cycle space} of $\G$ is the span of the cycle vectors of all cycles of $\G$, which is equivalently defined as the kernel of the incidence matrix $\B$. 
A set of cycles $\Sigma=\{\sigma_1,...,\sigma_c\}$ is a cycle basis of $\G$ if and only if the set of corresponding cycle vectors forms a basis of the cycle space. 

Finally, given a cycle basis $\Sigma$ of the graph $\G$, we define the {cycle-edge incidence matrix}, 
\begin{align}\label{eq:c-e_incidence}
 C_\Sigma &= \left({\bf v}_{\sigma_1}, \cdots, {\bf v}_{\sigma_c}\right)^\top \in \mathbb{R}^{c\times m}\, .
\end{align}


{\bf Bidirected graphs.}
Dissipative couplings intrinsically require to distinguish the two orientations of each edge. 
Given an undirected graph $\G = (\V,\E)$, its {bidirected counterpart} is the directed graph $\Gb = (\V,\Eb)$ with the same vertex set $\V$ and where each undirected edge $\{i,j\}\in\E$ is doubled in the set of directed edges $(i,j),(j,i)\in\Eb$. 
A bidirected graph is a directed graph induced by an undirected graph. 

If the undirected graph $\G$ has incidence matrix $\B\in\mathbb{R}^{n\times m}$ [equation~\eqref{eq:incidence}], then, with an appropriate indexing of the directed edges, the incidence matrix of $\Gb$ can be written as $\Bb = (\B, -\B)\in\mathbb{R}^{n\times 2m}$. 
Interestingly, we note that the Laplacian matrices of $\G$ and $\Gb$ are the same, namely (see Prop.~\ref{prop:digraph_mtx}),
\begin{align}
 \Lu &= \B\B^\top = \Lb = \Bout \Bb^\top\, ,
\end{align}
where the {out-incidence matrix} $\Bout = [\Bb]_+$ is the positive part of $\Bb$. 
Notice that here, 
\begin{align}\label{eq:bidir-out-inc}
 \Bout &= \left([\B]_+, ~ [\B]_-\right)\, .
\end{align}


{\bf The Kuramoto-Sakaguchi model.}
We illustrate the results of this article with the {generalized Kuramoto-Sakaguchi model},~\cite{Sak86,Sak87,Sak88} 
\begin{align}\label{eq:ks1}
 \dot{\theta}_i &= \omega_i - \sum_{j=1}^n a_{ij}\left[\sin(\theta_i-\theta_j-\phi_{ij}) + \sin(\phi_{ij})\right] \, ,
\end{align}
for $i \in \{1,...,n\}$, where $\theta_i\in\mathbb{S}^1$ and $\omega_i\in\mathbb{R}$ are respectively the phase angle and the natural frequency of the $i$-th oscillator, $\phi_{ij}\in(-\pi/2,\pi/2)$ is the {phase frustration} between oscillators, and in this case, $a_{ij}\in\mathbb{R}_{\geq 0}$ is the coupling strength between oscillators $i$ and $j$. 
The Kuramoto-Sakaguchi model directly translates to the framework of equation~\eqref{eq:dyndiss}, with $m_i\equiv 0$, $d_i\equiv 1$, $f_{ij}(x) = a_{ij}\cos(\phi_{ij})\sin(x)$, and $g_{ij}(x)=a_{ij}\sin(\phi_{ij})[1-\cos(x)]$. 
It is a natural extension of the original Kuramoto model, which is recovered for $\phi_{ij}=0$. 
The coupling function of the Kuramoto-Sakaguchi model is illustrated in Fig.~\ref{fig:couplings}a. 

\begin{rmk}
 While in its original formulation, the Kuramoto-Sakaguchi model assumes homogeneous, all-to-all couplings, here we take the couplings to be given by an underlying network structure. 
 For the sake of simplicity, in our examples we consider $a_{ij}=a_{ji}$ and $\phi_{ij}=\phi_{ji}$ for all connected nodes $i$ and $j$. 
 Nevertheless, we keep in mind that these assumptions are not necessary for our results and that our framework is appropriate for much more general cases. 
\end{rmk}

\begin{figure}
 \centering
 \includegraphics[width=\columnwidth]{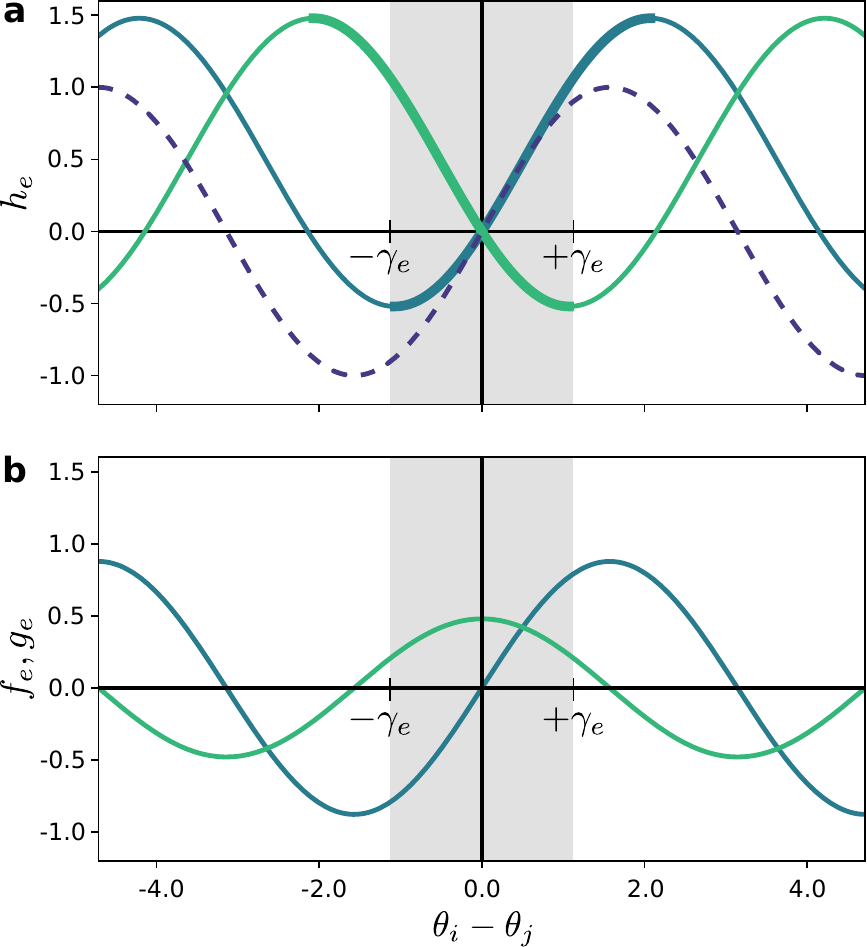} 
 \caption{{\bf Coupling functions for the Kuramoto-Sakaguhchi model.}
 (a) Comparison between coupling functions for Kuramoto [dashed dark blue, $h_e(x) = \sin(x)$] and Kuramoto-Sakaguchi [plain cyan, $h_e(x) = \sin(x-\phi) + \sin(\phi)$], with $\phi=0.5$. 
 The light green curve illustrates the coupling on the same edge, but with opposite orientation [$h_{\bar{e}}(-x) = \sin(-x-\phi)+\sin(\phi)$]. 
 The thick parts (cyan and green) emphasize the region where the curve is increasing (resp. decreasing). 
 The shaded gray area shows the interval where the coupling in both orientations is strictly monotone. 
 (b) Odd (cyan) and even (green) parts of the Kuramoto-Sakaguchi coupling function, as defined in equation~\eqref{eq:ks_hse_hae}. 
 }
 \label{fig:couplings}
\end{figure}


In order to illustrate some fundamental complications that arise in the Kuramoto-Sakaguchi model, compared to the original Kuramoto model, we detail two simple examples below. 

\begin{ex}[2-node system, vectorial form]
 \label{ssec:2-node}
 Consider a system of two coupled Kuramoto-Sakaguchi oscillators with unit coupling and identical frustration, whose dynamics is given by 
 \begin{align}\label{eq:2-node_ks}
 \begin{split}
  \dot{\theta}_1 &= \omega_1 - \left[\sin(\theta_1 - \theta_2 - \phi) + \sin(\phi)\right]\, , \\
  \dot{\theta}_2 &= \omega_2 - \left[\sin(\theta_2 - \theta_1 - \phi) + \sin(\phi)\right]\, ,
 \end{split}
 \end{align}
 with $\phi\in(-\pi/2,\pi/2)$. 
  
 In the Kuramoto model ($\phi=0$), it is standard to write the dynamics in vectorial form as 
 \begin{align}
  \dot{\btheta} &= \bm{\omega} - \B{\cal A}\sin\left(\B^\top\btheta\right)\, ,
 \end{align}
 where $\B\in\mathbb{R}^{n\times m}$ is the incidence matrix of the (undirected) coupling graph of the system, and ${\cal A}\in\mathbb{R}^m$ is the diagonal matrix of the edge weights $a_{ij}$. 
 In the Kuramoto-Sakaguchi model, this vectorial form is not that simple.   
 Direct computation shows that naively writing
 \begin{align}\label{eq:ks_wrong}
  \dot{\btheta} &= \bm{\omega} - \B{\cal A}\left[\sin\left(\B^\top\btheta - \phi\bm{1}_m\right) + \sin(\phi)\bm{1}_m\right]\, ,
 \end{align}
 does not yields the desired equations~(\ref{eq:2-node_ks}). 
\end{ex}

In order to write equation~\eqref{eq:ks1} in vectorial form, we need to distinguish the two orientation of each edge and consider $\Gb$, the bidirected counterpart of $\G$. 
We need to introduce the out-incidence matrix $\Bout$ and the incidence matrix $\Bb$ of the bidirected coupling graph, as well as ${\cal A}_{\rm b}\in\mathbb{R}^{2m}$, defined to be a block diagonal matrix whose two diagonal blocks are equal to ${\cal A}$. 
The proof of the following proposition follows from direct computation.

\begin{prop}
 The Kuramoto-Sakaguchi model in equation~(\ref{eq:ks1}) is written in vectorial form as
 \begin{align}\label{eq:ks_vec}
  \dot{\btheta} &= \bm{\omega} - \Bout{\cal A}_{\rm b}\left[\sin\left(\Bb^\top \btheta - \phi\bm{1}_{2m}\right) + \sin(\phi)\bm{1}_{2m}\right]\, . 
 \end{align}
\end{prop}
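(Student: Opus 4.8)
The plan is to carry out the "direct computation" the proposition rests on by tracking each matrix factor componentwise and showing that the $i$-th component of $\Bout{\cal A}_{\rm b}[\sin(\Bb^\top\btheta-\phi\bm{1}_{2m})+\sin(\phi)\bm{1}_{2m}]$ reproduces exactly the sum $\sum_{j}a_{ij}[\sin(\theta_i-\theta_j-\phi)+\sin(\phi)]$ of equation~\eqref{eq:ks1}. The conceptual point, which motivates the entire bidirected construction, is that the Sakaguchi coupling is \emph{not} odd, so the two orientations of an edge carry genuinely different flows and must be kept separate; the data $\Bb$ and $\Bout$ are precisely the objects that perform this bookkeeping, and the proof is essentially the verification that $\Bout$ (rather than $\B$) is the correct contraction.

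First I would unfold the innermost vector. Indexing the $2m$ directed edges as the forward block $e\in\{1,\dots,m\}$ followed by the reversed block $e+m$, the identity $\Bb=(\B,-\B)$ gives $(\Bb^\top\btheta)_e=\theta_{s_e}-\theta_{t_e}$ and $(\Bb^\top\btheta)_{e+m}=\theta_{t_e}-\theta_{s_e}$. Applying $\sin(\,\cdot\,-\phi)+\sin(\phi)$ componentwise and then the block-diagonal weight matrix ${\cal A}_{\rm b}=\mathrm{diag}({\cal A},{\cal A})$ yields a vector $\mathbf{w}\in\mathbb{R}^{2m}$ with $w_e=a_e[\sin(\theta_{s_e}-\theta_{t_e}-\phi)+\sin(\phi)]$ on the forward block and $w_{e+m}=a_e[\sin(\theta_{t_e}-\theta_{s_e}-\phi)+\sin(\phi)]$ on the reversed block, where $a_e$ is the common weight of the two copies of edge $e$.

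Next I would apply $\Bout=([\B]_+,[\B]_-)$, using $([\B]_+)_{ie}=1$ iff $s_e=i$ and $([\B]_-)_{ie}=1$ iff $t_e=i$, so that the $i$-th row selects $w_e$ for the forward edges leaving $i$ and $w_{e+m}$ for the reversed edges emanating from $i$:
\[
 (\Bout\mathbf{w})_i=\sum_{e:\,s_e=i}w_e+\sum_{e:\,t_e=i}w_{e+m}\, .
\]
The crucial step is to check that this sum is independent of the arbitrary orientation chosen for each undirected edge. For a neighbor $j$ of $i$, if $\{i,j\}$ was oriented $(i,j)$ it enters the first sum as $a_{ij}[\sin(\theta_i-\theta_j-\phi)+\sin(\phi)]$; if it was oriented $(j,i)$ it enters the second sum, where $(\Bb^\top\btheta)_{e+m}=\theta_{t_e}-\theta_{s_e}=\theta_i-\theta_j$, as the \emph{same} term $a_{ij}[\sin(\theta_i-\theta_j-\phi)+\sin(\phi)]$. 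Hence each undirected neighbor is counted once with the correct argument, giving $(\Bout\mathbf{w})_i=\sum_{j=1}^n a_{ij}[\sin(\theta_i-\theta_j-\phi)+\sin(\phi)]$ after invoking $a_{ij}=0$ for non-neighbors, so $\dot{\btheta}=\bm{\omega}-\Bout\mathbf{w}$ is exactly equation~\eqref{eq:ks1}.

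The only real obstacle is this orientation bookkeeping: confirming that routing each forward flow out of its source and each reversed flow out of its target (which is the defining feature of the out-incidence matrix) reconstructs the symmetric neighbor sum no matter how the edges were oriented. Once this orientation-independence is established the statement follows immediately, and the argument extends verbatim to edge-dependent frustrations $\phi_{ij}$ by replacing the scalar $\phi$ with the corresponding entries on each of the two blocks.
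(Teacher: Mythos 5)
Your proposal is correct and is exactly the ``direct computation'' the paper invokes as its entire proof: unfolding $\Bb^\top\btheta$ blockwise, applying the coupling and weights componentwise, and verifying that the out-incidence contraction $\Bout$ reconstructs the neighbor sum of equation~\eqref{eq:ks1} independently of the arbitrary edge orientations. The orientation-independence check you single out is indeed the only nontrivial point, and your verification of it is sound.
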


\begin{ex}[6-node cycle, sync. frequency]
 \label{ssec:6-node}
 Let us consider six Kuramoto-Sakaguchi oscillators, coupled in a cycle, with identical, vanishing natural frequency, i.e., 
 \begin{align}\label{eq:6-node_ks}
  \dot{\theta}_i &= - \sin(\theta_i - \theta_{i-1}-\phi) - \sin(\theta_i - \theta_{i+1} -\phi) + 2\sin(\phi) \, ,
 \end{align}
 for $i\in\{1,...,6\}$, where we used periodic indexing. 
 One straightforwardly verifies that $\btheta_0=(0,...,0)^\top$ is an equilibrium of equation~(\ref{eq:6-node_ks}) (and then a synchronous state). 
  
 One can also verify that the {splay state} $\btheta_1 = (0,-\pi/3,-2\pi/3,\pi,2\pi/3,\pi/3)^\top$ is also a synchronous state. 
 Indeed, in this case, equation~(\ref{eq:6-node_ks}) gives 
 \begin{align}
 \begin{split}
  \dot{\theta}_i &= -\sin(-\pi/3 - \phi) - \sin(\pi/3 - \phi) - 2\sin(\phi) \\
  &= 2\cos(\pi/3)\sin(\phi) - 2\sin(\phi) 
  = -\sin(\phi)\, ,
 \end{split}
 \end{align}
 independently of $i\in\{1,...,6\}$. 
 The state $\btheta_1$ is then synchronous, but it is an equilibrium only for the Kuramoto model ($\phi=0$). 
\end{ex}

There are at least two main messages that can be taken from these examples. 
First, by extending our framework to directed graphs, we are able to write the Kuramoto-Sakaguchi model in vectorial from, in equations~\eqref{eq:ks_vec}.
Note that a similar vectorial formulation of the Kuramoto-Sakaguchi model has recently been proposed in Ref.~\citenum{Arn21}, which, while more general than equation~\eqref{eq:ks_vec}, does not provide as much insight in the underlying network structure.

Second, unlike the Kuramoto model, the average frequency of the system is not preserved along arbitrary trajectories. 
Also, if multiple synchronous states exist, then they have, in general, different synchronous frequencies. 
These claims are backed up by showing that the average frequency of the system depends on angular differences, 
\begin{align}\label{eq:cycle_freq}
 \sum_i\dot{\theta}_i &= \sum_i \omega_i - \sum_{i,j} a_{ij}\sin(\phi)\left[1 - 2\cos(\theta_i-\theta_j)\right]\, , 
\end{align}
which is time varying over the trajectories of the system, and not identical for different synchronous states. 

On top of that, we reiterate that, contrary to the Kuramoto model, the Kuramoto-Sakaguchi model is not the gradient of any function (even locally).
Therefore, the energy landscape approaches, valid for $\phi=0$,~\cite{Dor13,Lin19} are not directly applicable when $\phi\neq 0$. 


{\bf The power flow equations.}
Under the assumption that voltage amplitudes are fixed, synchronous states of the Kuramoto-Sagauchi model are in direct correspondence with the solutions of the active {power flow equations}.~\cite{Mac08,Dor13} 
The power flow equations relate the the balance of active ($P_i$) and reactive powers ($Q_i$) to the voltage amplitude ($V_i$) and phase ($\theta_i$) at each node $i\in\{1,...,n\}$, 
\begin{align}
 P_i &= \sum_{j=1}^n V_iV_j [{\cal B}_{ij}\sin(\theta_i-\theta_j) + {\cal G}_{ij}\cos(\theta_i-\theta_j)]\, , \label{eq:pf_a}\\
 Q_i &= \sum_{j=1}^n V_iV_j [{\cal G}_{ij}\sin(\theta_i-\theta_j) - {\cal B}_{ij}\cos(\theta_i-\theta_j)]\, , \label{eq:pf_r}
\end{align}
with ${\cal G}_{ij}$ and ${\cal B}_{ij}$ being lines conductance and susceptance respectively. 
Defining 
\begin{align}
 a_{ij} &= V_iV_j\sqrt{{\cal B}_{ij}^2+{\cal G}_{ij}^2}\, , & \phi_{ij} &= \arctan(-{\cal G}_{ij}/{\cal B}_{ij})\, , 
\end{align}
one verifies that solutions of equation~\eqref{eq:pf_a} are steady states of equation~\eqref{eq:ks1}. 

Equations~\eqref{eq:pf_a} and \eqref{eq:pf_r} are usually solved by iterative methods. 
In Fig.~\ref{fig:complex_volt}b and c, outer annulus, we used a Newton-Raphson scheme~\cite{Ber00} with different, carefully chosen initial conditions to solve the full power flow equations on our version of the IEEE RTS-96 test case.~\cite{Gri99} 
The squares are PV buses, the circles are PQ buses, and the slack bus is node 23. 
The synchronous states of the Kuramoto-Sakaguchi models were computed by the flow mismatch iteration $\Sw$ [equation~\eqref{eq:Sw}], with $\epsilon=0.01$. 
All data are available online.~\cite{repo}


{\bf The Dissipative Flow Network Problem on acyclic graphs.}
In the case where $\G$ is acyclic, we show that there is at most a unique solution to the Dissipative Flow Network Problem. 
Here there are obviously no cycle constraint and thus equation~\eqref{eq:asym_sync2} is trivially satisfied. 

\begin{thm}\label{thm:acyc}
 Consider the Dissipative Flow Network Problem on a connected acyclic undirected graph $\G$. 
 Then there is at most one $\bDelta\in\Rg$ that satisfies equation~(\ref{eq:asym_sync1}). 
\end{thm}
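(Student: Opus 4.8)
The plan is to exploit acyclicity directly: a connected acyclic graph is a tree, so $m=n-1$ and the cycle constraint \eqref{eq:asym_sync2} is vacuous, leaving only the $n$ nodal balance equations \eqref{eq:asym_sync1}, namely $[\Bout{\bf h}(\bDelta)]_i=\omega_i+\varphi$ for every node $i$, in the $m=n-1$ edge variables together with the single scalar $\varphi$. I would root the tree at an arbitrary node $r$ and process its nodes from the leaves inward, showing by induction that every edge flow, hence every component of $\bDelta$, is forced to be a strictly monotone function of $\varphi$ alone. Because $\bDelta$ is confined to $\Rg$, on which both orientations $h_e$ and $h_{\bar e}$ are strictly increasing on $[-\gamma_e,\gamma_e]$, each nodal balance is uniquely invertible in the one edge variable it introduces.

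Concretely, for a node $v$ that has become a leaf of the reduced tree, with parent $p$, let $x_v=\theta_v-\theta_p$ denote the flow variable on its parent edge (so that the directed couplings $h_e,h_{\bar e}$ of the paper appear as $h_{(v,p)}(x_v)$ and $h_{(p,v)}(-x_v)$). Its balance reads $h_{(v,p)}(x_v)=\omega_v+\varphi-\sum_{c}h_{(v,c)}(-x_c)$, where the sum runs over the already-processed children $c$ of $v$. By the inductive hypothesis each $x_c$ is a strictly increasing function of $\varphi$; since $h_{(v,c)}$ is increasing on $[-\gamma_e,\gamma_e]$, the term $-h_{(v,c)}(-x_c(\varphi))$ is increasing in $\varphi$, so the entire right-hand side is strictly increasing in $\varphi$. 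Inverting the strictly increasing $h_{(v,p)}$ then exhibits $x_v$ as a strictly increasing function of $\varphi$, closing the induction (leaves of the original tree are the base case, where the sum is empty). Processing terminates at the root, whose balance $\sum_{c}h_{(r,c)}(-x_c(\varphi))=\omega_r+\varphi$ has a left-hand side that is strictly decreasing in $\varphi$ (each $-x_c$ decreases and $h_{(r,c)}$ increases) while the right-hand side strictly increases; hence at most one $\varphi$ can satisfy it. Since a fixed $\varphi$ determines every $x_v$, and thus all of $\bDelta$, uniquely, the Dissipative Flow Network Problem admits at most one $\bDelta\in\Rg$.

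The crux — and the reason I would organise the recursion leaf-by-leaf rather than by summing balances over whole subtrees — is that the per-edge dissipation $h_e(x)+h_{\bar e}(-x)$ need not be monotone in $x$, so a subtree-sum would carry terms of indefinite sign and would fail to produce a monotone equation for $\varphi$. Leaf-stripping sidesteps this entirely: it only ever transfers a single directed flow $h_{(p,v)}(-x_v)$ to modify the parent's balance, and such a term has a definite (increasing-when-negated) dependence on $\varphi$. The argument rests solely on the cohesiveness restriction to $\Rg$, which guarantees that both $h_e$ and $h_{\bar e}$ are strictly increasing on $[-\gamma_e,\gamma_e]$ (the Gershgorin positivity used to define $\Rg$), and this is the only place dissipation could have caused trouble. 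Note that I would not claim existence — the required values may fall outside the ranges of the $h_e$ — but only injectivity, which is exactly the at-most-one assertion of the statement.
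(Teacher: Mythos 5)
Your proposal is correct and is essentially the paper's own argument: the paper likewise strips leaves one at a time (its node renumbering is exactly a rooting of the tree), uses strict monotonicity of both $h_e$ and $h_{\bar{e}}$ on $[-\gamma_e,\gamma_e]$ to propagate a one-sided comparison edge by edge toward the last node, and closes the argument there by pinning down the synchronous frequency, claiming only at-most-uniqueness. The only cosmetic difference is that the paper phrases the recursion as a direct comparison of two hypothetical solutions with $\varphi\leq\varphi'$, whereas you phrase it as strict monotone dependence of each edge variable on $\varphi$ — the same inequalities in different clothing.
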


The proof of Theorem~\ref{thm:acyc} proceeds recursively and we provide it in the Supplementary Information. 
An implementation of an algorithm deciding the existence of the unique solution is provided online.~\cite{repo}

\begin{rmk}
 Theorem~\ref{thm:acyc} is the dissipative version of Theorem 2.2 in Ref.~\citenum{Jaf22}. 
 The spirit of Theorem~\ref{thm:acyc} is somewhat similar to Ref.~\citenum{Bal19}, even though therein, the authors restrict their investigation to the Kuramoto-Sakaguchi model and cannot extend their approach to more general couplings. 
\end{rmk}


{\bf The Dissipative Flow Network Problem on general graphs.}
The presence of cycles in the network can induce the existence of multiple solutions to the Dissipative Flow Network Problem [see Fig.~\ref{fig:anomalies} or Ref.~\citenum{Bal19}]. 
We rigorously show here that winding vectors characterize these solutions for {sufficiently moderate dissipation}. 

To do so, we define the {flow mismatch iteration} $\Sw$ over the space of angular differences $\mathbb{R}^m$, whose fixed points are exactly the solutions of equation~\eqref{eq:asym_sync1}. 
Namely, let 
\begin{align}\label{eq:Sw}
\begin{split}
 \Sw \colon \mathbb{R}^m &\to \mathbb{R}^m \\
 \bDelta &\mapsto \bDelta - \epsilon \B^\top \Lu^\dagger \left(\Bout {\bf h}(\bDelta) - \bm{\omega}\right)\, ,
\end{split}
\end{align}
where $\epsilon>0$ is a small step size and $\Lu^\dagger$ is the pseudoinverse of the graph Laplacian matrix.
The flow mismatch iteration $\Sw$ updates the vector of angular differences according to the mismatch between the input/output of commodities $\bm{\omega}$ and the distribution of flows that corresponds to the current angular differences. 
It has two major properties:
\begin{enumerate}[label=(\Roman*)]
 \item the vector $\bDelta^*\in \Rg$ is a fixed point of $\Sw$ if and only if it is a solution of equation~\eqref{eq:asym_sync1};
 
 \item \label{item:p2} the map $\Sw$ leaves each winding cell invariant, because $C_\Sigma \B^\top = \bm{0}$. 
 It means that fixing the winding vector of the initial conditions, imposes the winding vector of the fixed point of $\Sw$, if ever it exists. 
\end{enumerate}

One of the main lessons from Ref.~\citenum{Jaf22} is that different solutions of the flow network problem on the $n$-torus are better understood when put in the context of their winding cell. 
Accordingly, and thanks to the property \ref{item:p2} above, we split the Dissipative Flow Network Problem in each winding cell of the $n$-torus induced by the network structure. 
Fixing a winding vector ${\bf u}\in\mathbb{Z}^{m-n+1}$, we are guaranteed that if the initial conditions $\bDelta_0$ satisfy equation~\eqref{eq:asym_sync2}, then each following iteration $\bDelta_{k+1}=\Sw(\bDelta_k)$ will satisfy it as well. 

We summarize the above observations in the following theorem, whose proof is a direct consequence of the compactness of $\Rg$. 

\begin{thm}\label{thm:contr}
 If the flow mismatch iteration $\Sw$ is contracting, then there is at most one synchronous state of equation~(\ref{eq:dyndiss}) in each winding cell.
\end{thm}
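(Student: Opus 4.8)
The plan is to reduce the statement to the uniqueness clause of the Banach fixed-point theorem, exploiting the two structural properties already listed for $\Sw$. First I would recall that synchronous states of equation~\eqref{eq:dyndiss} are, through the one-to-one correspondence with solutions of the Dissipative Flow Network Problem, exactly the points $\bDelta\in\Rg$ satisfying equation~\eqref{eq:asym_sync1}; by property~(I) these are precisely the fixed points of $\Sw$ lying in $\Rg$. By property~(II), $\Sw$ preserves the winding vector, since $C_\Sigma\B^\top=\bm{0}$ yields $C_\Sigma\Sw(\bDelta)=C_\Sigma\bDelta$; hence each winding cell intersected with $\Rg$, namely $\wcell\cap\Rg=\{\bDelta\in\Rg\colon C_\Sigma\bDelta=2\pi{\bf u}\}$, is invariant under $\Sw$. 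It therefore suffices to show that $\Sw$ admits at most one fixed point in each such slice.

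For the uniqueness itself I would argue directly from the contraction inequality rather than invoking existence. Suppose $\bDelta_1^*,\bDelta_2^*\in\Rg$ are two fixed points sharing the same winding vector. Since $\Sw$ is contracting with some Lipschitz constant $L<1$,
\begin{align}
 \|\bDelta_1^*-\bDelta_2^*\| = \|\Sw(\bDelta_1^*)-\Sw(\bDelta_2^*)\| \le L\,\|\bDelta_1^*-\bDelta_2^*\|\, ,
\end{align}
which forces $\bDelta_1^*=\bDelta_2^*$. This is where compactness enters: the hypothesis that $\Sw$ is contracting is most naturally a pointwise statement on the operator norm of the Jacobian of $\Sw$, and I would use that $\Rg$ is compact and convex to promote it to a uniform constant. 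Concretely, the continuous function assigning to each $\bDelta$ the operator norm of that Jacobian attains its maximum $L$ over the compact set $\Rg$, and if this maximum is strictly below one, the mean-value inequality along the segment joining any two points of the convex set $\Rg$ yields the global bound used above.

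The main obstacle I anticipate is pinning down the exact meaning of ``contracting'' and making the local-to-global passage airtight: one must verify that the domain on which $\Sw$ contracts is convex, so that the segment between two candidate fixed points stays inside and the mean-value estimate is licensed, and that the supremum of the pointwise contraction rate is strictly less than one rather than merely below one at each point---this strictness is exactly what the compactness of $\Rg$ buys. I would also stress that, because only the uniqueness half of the fixed-point theorem is needed, I can sidestep the more delicate question of whether $\Sw$ maps each slice $\wcell\cap\Rg$ into itself (which would be required for existence): the contraction inequality between two hypothetical fixed points is all that the ``at most one'' conclusion demands.
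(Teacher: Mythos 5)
Your proposal is correct and follows essentially the same route as the paper: the paper's own proof is just the observation that the statement follows from property~(I) (fixed points of $\Sw$ in $\Rg$ are exactly the solutions), property~(II) (winding-cell invariance via $C_\Sigma\B^\top=\bm{0}$), and the contraction inequality, with compactness of $\Rg$ invoked to make the contraction uniform. Your additional remark that the ``at most one'' conclusion needs only the contraction inequality between two hypothetical fixed points in the same slice---so that neither invariance of $\wcell\cap\Rg$ under $\Sw$ nor compactness is strictly required for uniqueness---is a valid and slightly sharper reading of the same argument.
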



In what follows, we provide sufficient conditions for contractivity of $\Sw$. 
We rely on the decomposition of the coupling functions as $h_e(x) = \hse(x)+\hae(x)$, which implies, 
\begin{align}
 \hse(x) &= [h_e(x) - h_{\bar{e}}(-x)]/2\, , \label{eq:hse}\\
 \hae(x) &= [h_e(x) + h_{\bar{e}}(-x)]/2\, . \label{eq:hae}
\end{align}
One readily verifies the identities 
\begin{align}
 \hse'(x) &= f_{\bar{e}}'(-x)\, , &
 \hae'(x) &= -g_{\bar{e}}'(-x)\, .
\end{align}
Remind that $\hae$ quantifies to what extent the coupling is dissipative. 
In the particular case where the coupling is lossless, then $\hae=0$. 

Equipped with this decomposition of the couplings, we define two state dependent matrices, for ${\bf x}\in\Rg$:
\begin{enumerate}[label=(\alph*)]
 \item the {odd weighted Laplacian matrix}, which is the Laplacian matrix of $\G$ weighed by the derivatives of the odd parts 
 \begin{align}\label{eq:Ls}
  \Ls({\bf x}) &= \B \cdot {\rm diag}[\hse'(x_e)] \cdot \B^\top\, .
 \end{align}
 We emphasize that the choice of orientation for each edge $e$ does not matter in the definition of $\Ls$. 
 Also, the graph $\G$ being connected and the above weights being positive, it is standard result of algebraic graph theory that $\lambda_2$, the smallest nonzero eigenvalue of $\Ls$ (a.k.a., the algebraic connectivity), is positive;
 
 \item the {even weighted degree matrix}, which is the diagonal matrix weighted by the absolute derivatives of the even parts, 
 \begin{align}\label{eq:Da}
  [\Da({\bf x})]_{ii} &= \sum_{e\in E_i} |\hae'(x_e)|\, ,
 \end{align}
 where $E_i$ is the set of (undirected) edges incident to node $i$. 
 The diagonal terms of $\Da$ quantify the dissipativity of the couplings. 
 In particular, for lossless couplings, $\Da=0$.
\end{enumerate}

\begin{ex}
 In the case of the Kuramoto-Sakaguchi model, the coupling functions are 
 \begin{align}
  h_e(x) &= h_{\bar{e}}(x) = a_e[\sin(x - \phi) + \sin(\phi)]\, ,
 \end{align}
 and trigonometric identities yield 
 \begin{align}\label{eq:ks_hse_hae}
 \begin{split}
  \hse(x) &= a_e\cos(\phi)\sin(x)\, , \\
  \hae(x) &= a_e\sin(\phi)[1 - \cos(x)]\, ,
 \end{split}
 \end{align}
 which we illustrate in Fig.~\ref{fig:couplings}b. 
 We clearly see here the relation between $\hae$ and the dissipativity or frustration of the coupling. 
 When $\phi=0$, we recover the original Kuramoto model, where the coupling is lossless, and $\hae=0$.
\end{ex}

We are now ready to formulate the main theorem of this work. 
It clearly separates the impact of network connectivity, that promote the contractivity of $\Sw$, and of the dissipation, that works against contractivity of $\Sw$. 
We defer the proof to the Supplementary Information. 

\begin{thm}\label{thm:contraction}
 Given a Dissipative Flow Network Problem, define the odd weighted Laplacian $\Ls$ and the even weighted degree matrix $\Da$. 
 If, for all $i\in\{1,...,n\}$, 
 \begin{align}\label{eq:contr_cond}
  \sup_{{\bf x}\in\Rg}(\Da)_{ii} &< \inf_{{\bf x}\in\Rg}\lambda_2(\Ls)\, ,
 \end{align}
 then there exists a sufficiently small step size $\epsilon > 0$ such that the flow mismatch iteration $\Sw$ [equation~(\ref{eq:Sw})] is contracting. 
\end{thm}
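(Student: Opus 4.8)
The plan is to prove contractivity by controlling the Jacobian of $\Sw$ along the only directions in which the iteration moves. Since $\Sw(\bDelta)=\bDelta-\epsilon\,\B^\top\Lu^\dagger(\Bout{\bf h}(\bDelta)-\bm{\omega})$, its Jacobian is $D\Sw(\bDelta)=I-\epsilon\,M(\bDelta)$ with $M(\bDelta)=\B^\top\Lu^\dagger\Bout\,D{\bf h}(\bDelta)$. The first step is to evaluate the $2m\times m$ Jacobian $D{\bf h}$: differentiating the two families of components in equation~\eqref{eq:boldh} and inserting $h_e=\hse+\hae$, the $e$-th row equals $\hse'(\Delta_e)+\hae'(\Delta_e)$ and the $(e+m)$-th row equals $\hae'(\Delta_e)-\hse'(\Delta_e)$. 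Passing this through $\Bout=([\B]_+,[\B]_-)$ from equation~\eqref{eq:bidir-out-inc} and regrouping, the signed combination $[\B]_+-[\B]_-$ reconstructs $\B$ while the unsigned combination $[\B]_++[\B]_-$ reconstructs the unsigned incidence matrix, giving the clean identity
\begin{align}
 \Bout\,D{\bf h}(\bDelta) &= \B\,{\rm diag}[\hse'(\Delta_e)] + \big([\B]_++[\B]_-\big)\,{\rm diag}[\hae'(\Delta_e)]\, .
\end{align}

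Because every column of $M$ lies in ${\rm range}(\B^\top)$ and $\Sw$ leaves each winding cell, a coset of $\mathcal{R}:={\rm range}(\B^\top)$, invariant by property~\ref{item:p2}, it suffices to make $D\Sw$ a contraction on $\mathcal{R}$. I would therefore estimate the quadratic form $v^\top M v$ for $v=\B^\top w$, choosing $w\in\bm{1}^\perp$ without loss of generality (since $\B^\top\bm{1}=\bm{0}$). Using $\Lu=\B\B^\top$, the factor $\Lu\Lu^\dagger$ is the orthogonal projection onto $\bm{1}^\perp={\rm range}(\Lu)$ and acts as the identity on $w$; the odd part collapses to $w^\top\Ls(\bDelta)\,w=\sum_e\hse'(\Delta_e)\,v_e^2$, while the unsigned term yields $\sum_e\hae'(\Delta_e)(w_{s_e}+w_{t_e})(w_{s_e}-w_{t_e})$, so that
\begin{align}
 v^\top M v &= w^\top\Ls(\bDelta)\,w + \sum_{e\in\E}\hae'(\Delta_e)\,\big(w_{s_e}^2-w_{t_e}^2\big)\, .
\end{align}

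The crux is to dominate the second, dissipative sum, which is the only sign-indefinite piece. Regrouping it nodewise as $\sum_i w_i^2\big(\sum_{s_e=i}\hae'(\Delta_e)-\sum_{t_e=i}\hae'(\Delta_e)\big)$ and bounding each bracket by $\sum_{e\in E_i}|\hae'(\Delta_e)|=(\Da)_{ii}$, this sum is at least $-\max_i(\Da)_{ii}\,\|w\|^2$. Combined with $w^\top\Ls w\ge\lambda_2(\Ls)\,\|w\|^2$ on $\bm{1}^\perp$, the hypothesis~\eqref{eq:contr_cond} produces a uniform positive gap $v^\top M v\ge\big(\inf_{\Rg}\lambda_2(\Ls)-\max_i\sup_{\Rg}(\Da)_{ii}\big)\|w\|^2=:\delta'\|w\|^2$, and since $\|v\|^2=w^\top\Lu w\le\lambda_{\max}(\Lu)\|w\|^2$ this gives $v^\top M v\ge\delta\|v\|^2$ with $\delta:=\delta'/\lambda_{\max}(\Lu)>0$. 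Then for $v\in\mathcal{R}$ one has $\|D\Sw\,v\|^2=\|v\|^2-2\epsilon\,v^\top M v+\epsilon^2\|Mv\|^2\le(1-2\epsilon\delta+\epsilon^2C^2)\|v\|^2$, with $C:=\sup_{\Rg}\|M\|$ finite by compactness of $\Rg$. Choosing $\epsilon\in(0,2\delta/C^2)$ pins the prefactor to a constant $\rho^2<1$ uniformly on $\Rg$, and the mean value inequality along segments of the convex set $\{C_\Sigma\bDelta=2\pi{\bf u}\}\cap\Rg$ (displacements of which lie in $\mathcal{R}$) upgrades this to a genuine contraction within each winding cell.

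I expect the main obstacle to be the dissipative term in the second display: it is precisely the contribution that breaks the symmetry of the Jacobian and for which the lossless energy-landscape machinery is unavailable. The entire argument rests on recognizing that this sign-indefinite term is dominated nodewise by the even weighted degree matrix $\Da$, which is exactly what turns condition~\eqref{eq:contr_cond} into the correct comparison between connectivity $\lambda_2(\Ls)$ and dissipation $\Da$. A secondary but essential technical point is ensuring all bounds hold uniformly over the compact cohesive set $\Rg$, so that a single $\epsilon$ works throughout the winding cell.
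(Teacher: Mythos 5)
Your proof is correct and follows essentially the same route as the paper's: your identity $\Bout\, D{\bf h}(\bDelta)=\B\,{\rm diag}[\hse'(\Delta_e)]+|\B|\,{\rm diag}[\hae'(\Delta_e)]$ is the pointwise (Jacobian) version of the paper's mean-value decomposition $\Bout\Lambda=\B\Lambda_{\rm p}+|\B|\Lambda_{\rm m}$, and the subsequent steps --- restricting to ${\rm range}(\B^\top)=\ker(C_\Sigma)$, bounding the odd (Laplacian) part below by $\lambda_2(\Ls)$ and the sign-indefinite even part nodewise by $\Da$ --- coincide exactly with the paper's treatment of the matrices $\Lp$ and $\Dm$. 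The only differences are packaging: the paper argues with finite differences of a linearly extended flow function ${\bf h}_\gamma$ (so the contraction estimate holds on each whole affine slice), whereas you integrate the Jacobian along segments and invoke convexity of $\Rg\cap\{C_\Sigma\bDelta=2\pi{\bf u}\}$, giving contraction within the cohesive portion of each winding cell, which serves Theorem~\ref{thm:contr} and Corollary~\ref{cor:contraction} equally well.
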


The left-hand-side of equation~\eqref{eq:contr_cond} quantifies the amount of dissipation that is "seen" at each node of the network, which vanishes for lossless couplings. 
The right-hand-side accounts both for the strength of the coupling between each pair of oscillators, through the weights, and for the connectedness of the graph, $\lambda_2$ being the {algebraic connectivity}.~\cite{Fie73} 
Under our assumptions [$\G$ is connected, couplings are strictly increasing on $\Rg$], the right-hand-side of equation~\eqref{eq:contr_cond} is necessarily positive.  For couplings with {sufficiently low dissipation}, equation~\eqref{eq:contr_cond} is then satisfied, which, combined with Theorem~\ref{thm:contr}, yields the following corollary. 

\begin{cor}\label{cor:contraction}
 If equation~(\ref{eq:contr_cond}) is satisfied, then there is at most a unique synchronous state of equation~(\ref{eq:dyndiss}) in each winding cell. 
 The number of synchronous states is then bounded from above by the number of winding cells.
\end{cor}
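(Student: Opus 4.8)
The plan is to read Corollary~\ref{cor:contraction} off the two preceding results and then to supply the substantive ingredient, Theorem~\ref{thm:contraction}, which carries all the work. Assuming \eqref{eq:contr_cond}, Theorem~\ref{thm:contraction} produces a step size $\epsilon>0$ making $\Sw$ contracting, and Theorem~\ref{thm:contr} then gives at most one synchronous state of \eqref{eq:dyndiss} per winding cell. Since the winding cells partition $\mathbb{T}^n$, every solution of the Dissipative Flow Network problem carries a single winding vector $\wmap(\bDelta)$ and hence lies in exactly one cell; combined with per-cell uniqueness, the number of synchronous states is at most the number of cells. I would prove the uniqueness step without invoking existence: two fixed points $\bDelta_1,\bDelta_2\in\Rg$ in the same cell differ by a vector of $\ker C_\Sigma=\mathrm{range}(\B^\top)$, the segment between them stays in the convex set $\Rg\cap\{C_\Sigma\bDelta=2\pi{\bf u}\}$, and a mean-value estimate with the contraction factor forces $\bDelta_1=\bDelta_2$.

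For Theorem~\ref{thm:contraction} I would start from the Jacobian $D\Sw=I_m-\epsilon\,\B^\top\Lu^\dagger\,\Bout\,D{\bf h}(\bDelta)$ and simplify the block $\Bout D{\bf h}$. Differentiating \eqref{eq:boldh} gives $D{\bf h}=(\mathrm{diag}(h_e'(\Delta_e)),-\mathrm{diag}(h_{\bar e}'(-\Delta_e)))^\top$, and \eqref{eq:hse}--\eqref{eq:hae} yield $h_e'=\hse'+\hae'$ and $h_{\bar e}'(-\,\cdot\,)=\hse'-\hae'$. Using $\Bout=([\B]_+,[\B]_-)$ with $[\B]_++[\B]_-=|\B|$ and $[\B]_+-[\B]_-=\B$, these combine to the clean identity $\Bout D{\bf h}=\B\,\mathrm{diag}(\hse'(\Delta_e))+|\B|\,\mathrm{diag}(\hae'(\Delta_e))$, cleanly separating an odd (Laplacian) part from an even (dissipative) part.

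Next I would restrict to the directions that matter. Because $C_\Sigma\B^\top=\bm{0}$, all increments of $\Sw$ lie in $W:=\mathrm{range}(\B^\top)$, so it suffices to contract $D\Sw$ on $W$. Writing a direction as $v=\B^\top w$ with $w\in\bm{1}^\perp$ (a bijection onto $W$ since $\ker\B^\top=\mathrm{span}(\bm{1})$), one finds $D\Sw\,\B^\top w=\B^\top(I-\epsilon B)w$ with $B=\Lu^\dagger(\Ls+|\B|\,\mathrm{diag}(\hae'(\Delta_e))\,\B^\top)$, and the identity $\|\B^\top z\|_2^2=z^\top\Lu z$ identifies the Euclidean norm on $W$ with the $\Lu$-weighted norm on $\bm{1}^\perp$. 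Expanding $\|(I-\epsilon B)w\|_\Lu^2=\|w\|_\Lu^2-2\epsilon\,Q(w)+\epsilon^2\|Bw\|_\Lu^2$, where $Q(w)=w^\top(\Ls+|\B|\,\mathrm{diag}(\hae'(\Delta_e))\,\B^\top)w$ for $w\in\bm{1}^\perp$, reduces contractivity to a uniform lower bound $Q(w)\ge\delta\|w\|^2$.

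The decisive computation---and what I expect to be the main obstacle---is controlling the non-symmetric, non-gradient dissipative term, i.e.\ showing it is dominated by $\Da$. A telescoping identity gives $w^\top|\B|\,\mathrm{diag}(\hae'(\Delta_e))\,\B^\top w=\sum_e\hae'(\Delta_e)\,(w_{s_e}^2-w_{t_e}^2)$ (orientation independent, as $\Sw$ is), whence $\big|\,\sum_e\hae'(\Delta_e)(w_{s_e}^2-w_{t_e}^2)\,\big|\le\sum_e|\hae'(\Delta_e)|(w_{s_e}^2+w_{t_e}^2)=\sum_i(\Da)_{ii}\,w_i^2\le\max_i(\Da)_{ii}\,\|w\|^2$, which is exactly where $\Da$ enters. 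Combining this with $w^\top\Ls w\ge\lambda_2(\Ls)\|w\|^2$ on $\bm{1}^\perp$ gives $Q(w)\ge[\lambda_2(\Ls)-\max_i(\Da)_{ii}]\|w\|^2$, and hypothesis \eqref{eq:contr_cond} provides the uniform gap $\delta=\inf_{\Rg}\lambda_2(\Ls)-\max_i\sup_{\Rg}(\Da)_{ii}>0$. Since $\|Bw\|_\Lu^2$ is bounded uniformly on the compact set $\Rg$, choosing $\epsilon$ small enough makes the contraction factor strictly below $1$, establishing Theorem~\ref{thm:contraction} and thereby the corollary.
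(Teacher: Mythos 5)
Your proposal is correct and takes essentially the same route as the paper's proof: the same odd/even splitting $\Bout D{\bf h}=\B\,{\rm diag}(\hse')+|\B|\,{\rm diag}(\hae')$, the same reduction to the node space orthogonal to $\bm{1}$, and the same two estimates --- $\lambda_2(\Ls)$ bounding the odd (Laplacian) part from below and $\max_i(\Da)_{ii}$ bounding the even (dissipative) part from above, your telescoping identity $\sum_e \hae'(\Delta_e)\,(w_{s_e}^2-w_{t_e}^2)$ being precisely the paper's observation that the symmetrized even term $\Dm$ is a diagonal matrix with entries $\sum_{e\in E_i}\pm(\Lambda_{\rm m})_{ee}$. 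The only difference is packaging: you differentiate (Jacobian $D\Sw$ plus a mean-value estimate along segments of the convex slice $\Rg\cap\{C_\Sigma\bDelta=2\pi{\bf u}\}$), whereas the paper integrates (secant matrices $\Lambda_1,\Lambda_2$ built from mean-value integrals of the derivatives, together with a globally extended flow function ${\bf h}_\gamma$); these are equivalent formulations of the same argument.
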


Theorem~\ref{thm:contraction} and Corollary~\ref{cor:contraction} give a rigorous, even though conservative, sufficient condition for at most uniqueness of synchronous states in each winding cell. 
However, computing the eigenvalues of the odd weighted Laplacian can be time consuming. 
We therefore propose some lower bounds on $\lambda_2(\Ls)$ in the Supplementary Information (Prop.~S2) that are state-independent and may ease the verification of equation~\eqref{eq:contr_cond}. 
The bounds are adapted from standard results of algebraic graph theory.

\section*{Data and code availability}
The data used and the codes created for this article have been deposited on the {\tt DFNSolver}~\cite{repo} repository: \url{http://dx.doi.org/10.5281/zenodo.5899408}.



\begin{thebibliography}{67}%
\makeatletter
\providecommand \@ifxundefined [1]{%
 \@ifx{#1\undefined}
}%
\providecommand \@ifnum [1]{%
 \ifnum #1\expandafter \@firstoftwo
 \else \expandafter \@secondoftwo
 \fi
}%
\providecommand \@ifx [1]{%
 \ifx #1\expandafter \@firstoftwo
 \else \expandafter \@secondoftwo
 \fi
}%
\providecommand \natexlab [1]{#1}%
\providecommand \enquote  [1]{``#1''}%
\providecommand \bibnamefont  [1]{#1}%
\providecommand \bibfnamefont [1]{#1}%
\providecommand \citenamefont [1]{#1}%
\providecommand \href@noop [0]{\@secondoftwo}%
\providecommand \href [0]{\begingroup \@sanitize@url \@href}%
\providecommand \@href[1]{\@@startlink{#1}\@@href}%
\providecommand \@@href[1]{\endgroup#1\@@endlink}%
\providecommand \@sanitize@url [0]{\catcode `\\12\catcode `\$12\catcode
  `\&12\catcode `\#12\catcode `\^12\catcode `\_12\catcode `\%12\relax}%
\providecommand \@@startlink[1]{}%
\providecommand \@@endlink[0]{}%
\providecommand \url  [0]{\begingroup\@sanitize@url \@url }%
\providecommand \@url [1]{\endgroup\@href {#1}{\urlprefix }}%
\providecommand \urlprefix  [0]{URL }%
\providecommand \Eprint [0]{\href }%
\providecommand \doibase [0]{http://dx.doi.org/}%
\providecommand \selectlanguage [0]{\@gobble}%
\providecommand \bibinfo  [0]{\@secondoftwo}%
\providecommand \bibfield  [0]{\@secondoftwo}%
\providecommand \translation [1]{[#1]}%
\providecommand \BibitemOpen [0]{}%
\providecommand \bibitemStop [0]{}%
\providecommand \bibitemNoStop [0]{.\EOS\space}%
\providecommand \EOS [0]{\spacefactor3000\relax}%
\providecommand \BibitemShut  [1]{\csname bibitem#1\endcsname}%
\let\auto@bib@innerbib\@empty
\bibitem [{\citenamefont {Huygens}(1993)}]{Huy93}%
  \BibitemOpen
  \bibfield  {author} {\bibinfo {author} {\bibfnamefont {C.}~\bibnamefont
  {Huygens}},\ }\href {\doibase 10.5962/bhl.title.21031} {\emph {\bibinfo
  {title} {Oeuvres compl\`etes de {Christiaan} {Huygens}}}},\ edited by\
  \bibinfo {editor} {\bibfnamefont {M.}~\bibnamefont {Nijhoff}}\ (\bibinfo
  {publisher} {Soci\'et\'e hollandaise des sciences},\ \bibinfo {year}
  {1993})\BibitemShut {NoStop}%
\bibitem [{\citenamefont {Arenas}\ \emph {et~al.}(2008)\citenamefont {Arenas},
  \citenamefont {D{\'\i}az-Guilera}, \citenamefont {Kurths}, \citenamefont
  {Moreno},\ and\ \citenamefont {Zhou}}]{Are08}%
  \BibitemOpen
  \bibfield  {author} {\bibinfo {author} {\bibfnamefont {A.}~\bibnamefont
  {Arenas}}, \bibinfo {author} {\bibfnamefont {A.}~\bibnamefont
  {D{\'\i}az-Guilera}}, \bibinfo {author} {\bibfnamefont {J.}~\bibnamefont
  {Kurths}}, \bibinfo {author} {\bibfnamefont {Y.}~\bibnamefont {Moreno}}, \
  and\ \bibinfo {author} {\bibfnamefont {C.}~\bibnamefont {Zhou}},\ }\bibfield
  {title} {\enquote {\bibinfo {title} {Synchronization in complex networks},}\
  }\href {\doibase 10.1016/j.physrep.2008.09.002} {\bibfield  {journal}
  {\bibinfo  {journal} {Phys. Rep.}\ }\textbf {\bibinfo {volume} {469}},\
  \bibinfo {pages} {93--153} (\bibinfo {year} {2008})}\BibitemShut {NoStop}%
\bibitem [{\citenamefont {Strogatz}(2000)}]{Str00}%
  \BibitemOpen
  \bibfield  {author} {\bibinfo {author} {\bibfnamefont {S.~H.}\ \bibnamefont
  {Strogatz}},\ }\bibfield  {title} {\enquote {\bibinfo {title} {From
  {Kuramoto} to {Crawford}: exploring the onset of synchronization in
  populations of coupled oscillators},}\ }\href {\doibase
  10.1016/S0167-2789(00)00094-4} {\bibfield  {journal} {\bibinfo  {journal}
  {Physica D}\ }\textbf {\bibinfo {volume} {143}},\ \bibinfo {pages} {1--20}
  (\bibinfo {year} {2000})}\BibitemShut {NoStop}%
\bibitem [{\citenamefont {Acebr\'on}\ \emph {et~al.}(2005)\citenamefont
  {Acebr\'on}, \citenamefont {Bonilla}, \citenamefont {P\'erez~Vicente},
  \citenamefont {Ritort},\ and\ \citenamefont {Spigler}}]{Ace05}%
  \BibitemOpen
  \bibfield  {author} {\bibinfo {author} {\bibfnamefont {J.~A.}\ \bibnamefont
  {Acebr\'on}}, \bibinfo {author} {\bibfnamefont {L.~L.}\ \bibnamefont
  {Bonilla}}, \bibinfo {author} {\bibfnamefont {C.~J.}\ \bibnamefont
  {P\'erez~Vicente}}, \bibinfo {author} {\bibfnamefont {F.}~\bibnamefont
  {Ritort}}, \ and\ \bibinfo {author} {\bibfnamefont {R.}~\bibnamefont
  {Spigler}},\ }\bibfield  {title} {\enquote {\bibinfo {title} {The {Kuramoto}
  model: A simple paradigm for synchronization phenomena},}\ }\href {\doibase
  10.1103/RevModPhys.77.137} {\bibfield  {journal} {\bibinfo  {journal} {Rev.
  Mod. Phys.}\ }\textbf {\bibinfo {volume} {77}},\ \bibinfo {pages} {137}
  (\bibinfo {year} {2005})}\BibitemShut {NoStop}%
\bibitem [{\citenamefont {Winfree}(1967)}]{Win67}%
  \BibitemOpen
  \bibfield  {author} {\bibinfo {author} {\bibfnamefont {A.~T.}\ \bibnamefont
  {Winfree}},\ }\bibfield  {title} {\enquote {\bibinfo {title} {Biological
  rhythms and the behavior of populations of coupled oscillators},}\ }\href
  {\doibase 10.1016/0022-5193(67)90051-3} {\bibfield  {journal} {\bibinfo
  {journal} {J. Theoret. Biol.}\ }\textbf {\bibinfo {volume} {16}},\ \bibinfo
  {pages} {15--42} (\bibinfo {year} {1967})}\BibitemShut {NoStop}%
\bibitem [{\citenamefont {Kuramoto}(1975)}]{Kur75}%
  \BibitemOpen
  \bibfield  {author} {\bibinfo {author} {\bibfnamefont {Y.}~\bibnamefont
  {Kuramoto}},\ }\bibfield  {title} {\enquote {\bibinfo {title}
  {Self-entrainment of a population of coupled non-linear oscillators},}\ }in\
  \href {\doibase 10.1007/BFb0013365} {\emph {\bibinfo {booktitle} {Lecture
  Notes in Physics, vol. 39. International Symposium on Mathematical Problems
  in Theoretical Physics}}},\ \bibinfo {editor} {edited by\ \bibinfo {editor}
  {\bibfnamefont {Huzihiro}\ \bibnamefont {Araki}}}\ (\bibinfo  {publisher}
  {Springer},\ \bibinfo {address} {Berlin},\ \bibinfo {year} {1975})\ pp.\
  \bibinfo {pages} {420--422}\BibitemShut {NoStop}%
\bibitem [{\citenamefont {Kuramoto}(1984)}]{Kur84}%
  \BibitemOpen
  \bibfield  {author} {\bibinfo {author} {\bibfnamefont {Y.}~\bibnamefont
  {Kuramoto}},\ }\bibfield  {title} {\enquote {\bibinfo {title} {Cooperative
  dynamics of oscillator community a study based on lattice of rings},}\ }\href
  {\doibase 10.1143/PTPS.79.223} {\bibfield  {journal} {\bibinfo  {journal}
  {Prog. Theor. Phys. Suppl.}\ }\textbf {\bibinfo {volume} {79}},\ \bibinfo
  {pages} {223--240} (\bibinfo {year} {1984})}\BibitemShut {NoStop}%
\bibitem [{\citenamefont {Sakaguchi}\ \emph {et~al.}(1987)\citenamefont
  {Sakaguchi}, \citenamefont {Shinomoto},\ and\ \citenamefont
  {Kuramoto}}]{Sak87}%
  \BibitemOpen
  \bibfield  {author} {\bibinfo {author} {\bibfnamefont {H.}~\bibnamefont
  {Sakaguchi}}, \bibinfo {author} {\bibfnamefont {S.}~\bibnamefont
  {Shinomoto}}, \ and\ \bibinfo {author} {\bibfnamefont {Y.}~\bibnamefont
  {Kuramoto}},\ }\bibfield  {title} {\enquote {\bibinfo {title} {Local and
  global self-entrainments in oscillator lattices},}\ }\href {\doibase
  10.1143/PTP.77.1005} {\bibfield  {journal} {\bibinfo  {journal} {Prog. Theor.
  Phys.}\ }\textbf {\bibinfo {volume} {77}},\ \bibinfo {pages} {1005--1010}
  (\bibinfo {year} {1987})}\BibitemShut {NoStop}%
\bibitem [{\citenamefont {Strogatz}\ and\ \citenamefont
  {Mirollo}(1988)}]{Str88b}%
  \BibitemOpen
  \bibfield  {author} {\bibinfo {author} {\bibfnamefont {S.~H.}\ \bibnamefont
  {Strogatz}}\ and\ \bibinfo {author} {\bibfnamefont {R.~E.}\ \bibnamefont
  {Mirollo}},\ }\bibfield  {title} {\enquote {\bibinfo {title} {Phase-locking
  and critical phenomena in lattices of coupled nonlinear oscillators with
  random intrinsic frequencies},}\ }\href {\doibase
  10.1016/0167-2789(88)90074-7} {\bibfield  {journal} {\bibinfo  {journal}
  {Physica D}\ }\textbf {\bibinfo {volume} {31}},\ \bibinfo {pages} {143--168}
  (\bibinfo {year} {1988})}\BibitemShut {NoStop}%
\bibitem [{\citenamefont {Daido}(1988)}]{Dai88}%
  \BibitemOpen
  \bibfield  {author} {\bibinfo {author} {\bibfnamefont {H.}~\bibnamefont
  {Daido}},\ }\bibfield  {title} {\enquote {\bibinfo {title} {Lower critical
  dimension for populations of oscillators with randomly distributed
  frequencies: {A} renormalization-group analysis},}\ }\href {\doibase
  10.1103/PhysRevLett.61.231} {\bibfield  {journal} {\bibinfo  {journal} {Phys.
  Rev. Lett.}\ }\textbf {\bibinfo {volume} {61}},\ \bibinfo {pages} {231--234}
  (\bibinfo {year} {1988})}\BibitemShut {NoStop}%
\bibitem [{\citenamefont {D\"orfler}\ and\ \citenamefont
  {Bullo}(2014)}]{Dor14}%
  \BibitemOpen
  \bibfield  {author} {\bibinfo {author} {\bibfnamefont {F.}~\bibnamefont
  {D\"orfler}}\ and\ \bibinfo {author} {\bibfnamefont {F.}~\bibnamefont
  {Bullo}},\ }\bibfield  {title} {\enquote {\bibinfo {title} {Synchronization
  in complex networks of phase oscillators: A survey},}\ }\href {\doibase
  10.1016/j.automatica.2014.04.012} {\bibfield  {journal} {\bibinfo  {journal}
  {Automatica}\ }\textbf {\bibinfo {volume} {50}},\ \bibinfo {pages}
  {1539--1564} (\bibinfo {year} {2014})}\BibitemShut {NoStop}%
\bibitem [{\citenamefont {Bergen}\ and\ \citenamefont {Hill}(1981)}]{Ber81}%
  \BibitemOpen
  \bibfield  {author} {\bibinfo {author} {\bibfnamefont {A.~R.}\ \bibnamefont
  {Bergen}}\ and\ \bibinfo {author} {\bibfnamefont {D.~J.}\ \bibnamefont
  {Hill}},\ }\bibfield  {title} {\enquote {\bibinfo {title} {A structure
  preserving model for power system stability analysis},}\ }\href {\doibase
  10.1109/TPAS.1981.316883} {\bibfield  {journal} {\bibinfo  {journal} {IEEE
  Trans. Power App. Syst.}\ ,\ \bibinfo {pages} {25--35}} (\bibinfo {year}
  {1981})}\BibitemShut {NoStop}%
\bibitem [{\citenamefont {D\"orfler}\ \emph {et~al.}(2013)\citenamefont
  {D\"orfler}, \citenamefont {Chertkov},\ and\ \citenamefont {Bullo}}]{Dor13}%
  \BibitemOpen
  \bibfield  {author} {\bibinfo {author} {\bibfnamefont {F.}~\bibnamefont
  {D\"orfler}}, \bibinfo {author} {\bibfnamefont {M.}~\bibnamefont {Chertkov}},
  \ and\ \bibinfo {author} {\bibfnamefont {F.}~\bibnamefont {Bullo}},\
  }\bibfield  {title} {\enquote {\bibinfo {title} {Synchronization in complex
  oscillator networks and smart grids},}\ }\href {\doibase
  10.1073/pnas.1212134110} {\bibfield  {journal} {\bibinfo  {journal} {Proc.
  Natl. Acad. Sci. USA}\ }\textbf {\bibinfo {volume} {110}},\ \bibinfo {pages}
  {2005--2010} (\bibinfo {year} {2013})}\BibitemShut {NoStop}%
\bibitem [{\citenamefont {Paley}\ \emph {et~al.}(2007)\citenamefont {Paley},
  \citenamefont {Leonard}, \citenamefont {Sepulchre}, \citenamefont
  {Grunbaum},\ and\ \citenamefont {Parrish}}]{Pal07}%
  \BibitemOpen
  \bibfield  {author} {\bibinfo {author} {\bibfnamefont {D.~A.}\ \bibnamefont
  {Paley}}, \bibinfo {author} {\bibfnamefont {N.~E.}\ \bibnamefont {Leonard}},
  \bibinfo {author} {\bibfnamefont {R.}~\bibnamefont {Sepulchre}}, \bibinfo
  {author} {\bibfnamefont {D.}~\bibnamefont {Grunbaum}}, \ and\ \bibinfo
  {author} {\bibfnamefont {J.~K.}\ \bibnamefont {Parrish}},\ }\bibfield
  {title} {\enquote {\bibinfo {title} {Oscillator models and collective
  motion},}\ }\href {\doibase 10.1109/MCS.2007.384123} {\bibfield  {journal}
  {\bibinfo  {journal} {IEEE Control Syst.}\ }\textbf {\bibinfo {volume}
  {27}},\ \bibinfo {pages} {89--105} (\bibinfo {year} {2007})}\BibitemShut
  {NoStop}%
\bibitem [{\citenamefont {Leonard}\ \emph {et~al.}(2012)\citenamefont
  {Leonard}, \citenamefont {Shen}, \citenamefont {Nabet}, \citenamefont
  {Scardovi}, \citenamefont {Couzin},\ and\ \citenamefont {Levin}}]{Leo12}%
  \BibitemOpen
  \bibfield  {author} {\bibinfo {author} {\bibfnamefont {N.~E.}\ \bibnamefont
  {Leonard}}, \bibinfo {author} {\bibfnamefont {T.}~\bibnamefont {Shen}},
  \bibinfo {author} {\bibfnamefont {B.}~\bibnamefont {Nabet}}, \bibinfo
  {author} {\bibfnamefont {L.}~\bibnamefont {Scardovi}}, \bibinfo {author}
  {\bibfnamefont {I.~D.}\ \bibnamefont {Couzin}}, \ and\ \bibinfo {author}
  {\bibfnamefont {S.~A.}\ \bibnamefont {Levin}},\ }\bibfield  {title} {\enquote
  {\bibinfo {title} {Decision versus compromise for animal groups in motion},}\
  }\href {\doibase 10.1073/pnas.1118318108} {\bibfield  {journal} {\bibinfo
  {journal} {Proc. Natl. Acad. Sci. USA}\ }\textbf {\bibinfo {volume} {109}},\
  \bibinfo {pages} {227--232} (\bibinfo {year} {2012})}\BibitemShut {NoStop}%
\bibitem [{\citenamefont {Bullo}(2022)}]{Bul22}%
  \BibitemOpen
  \bibfield  {author} {\bibinfo {author} {\bibfnamefont {F.}~\bibnamefont
  {Bullo}},\ }\href {http://motion.me.ucsb.edu/book-lns} {\emph {\bibinfo
  {title} {Lectures on Network Systems, v1.6}}}\ (\bibinfo  {publisher} {Kindle
  Direct Publishing},\ \bibinfo {year} {2022})\BibitemShut {NoStop}%
\bibitem [{\citenamefont {Levis}\ \emph {et~al.}(2017)\citenamefont {Levis},
  \citenamefont {Pagonabarraga},\ and\ \citenamefont {D\'iaz-Guilera}}]{Lev17}%
  \BibitemOpen
  \bibfield  {author} {\bibinfo {author} {\bibfnamefont {D.}~\bibnamefont
  {Levis}}, \bibinfo {author} {\bibfnamefont {I.}~\bibnamefont
  {Pagonabarraga}}, \ and\ \bibinfo {author} {\bibfnamefont {A.}~\bibnamefont
  {D\'iaz-Guilera}},\ }\bibfield  {title} {\enquote {\bibinfo {title}
  {Synchronization in {Dynamical} {Networks} of {Locally} {Coupled}
  {Self}-{Propelled} {Oscillators}},}\ }\href {\doibase
  10.1103/PhysRevX.7.011028} {\bibfield  {journal} {\bibinfo  {journal} {Phys.
  Rev. X}\ }\textbf {\bibinfo {volume} {7}},\ \bibinfo {pages} {011028}
  (\bibinfo {year} {2017})}\BibitemShut {NoStop}%
\bibitem [{\citenamefont {Simpson-Porco}(2018{\natexlab{a}})}]{Sim18a}%
  \BibitemOpen
  \bibfield  {author} {\bibinfo {author} {\bibfnamefont {J.~W.}\ \bibnamefont
  {Simpson-Porco}},\ }\bibfield  {title} {\enquote {\bibinfo {title} {A theory
  of solvability for lossless power flow equations -- {Part} {I}: {Fixed}-point
  power flow},}\ }\href {\doibase 10.1109/TCNS.2017.2711433} {\bibfield
  {journal} {\bibinfo  {journal} {IEEE Trans. Control Netw. Syst.}\ }\textbf
  {\bibinfo {volume} {5}},\ \bibinfo {pages} {1361--1372} (\bibinfo {year}
  {2018}{\natexlab{a}})}\BibitemShut {NoStop}%
\bibitem [{\citenamefont {Simpson-Porco}(2018{\natexlab{b}})}]{Sim18b}%
  \BibitemOpen
  \bibfield  {author} {\bibinfo {author} {\bibfnamefont {J.~W.}\ \bibnamefont
  {Simpson-Porco}},\ }\bibfield  {title} {\enquote {\bibinfo {title} {A theory
  of solvability for lossless power flow equations -- {Part} {II}: {Conditions}
  for radial networks},}\ }\href {\doibase 10.1109/TCNS.2017.2711859}
  {\bibfield  {journal} {\bibinfo  {journal} {IEEE Trans. Control Netw. Syst.}\
  }\textbf {\bibinfo {volume} {5}},\ \bibinfo {pages} {1373--1385} (\bibinfo
  {year} {2018}{\natexlab{b}})}\BibitemShut {NoStop}%
\bibitem [{\citenamefont {Jafarpour}\ \emph {et~al.}(2022)\citenamefont
  {Jafarpour}, \citenamefont {Huang}, \citenamefont {Smith},\ and\
  \citenamefont {Bullo}}]{Jaf22}%
  \BibitemOpen
  \bibfield  {author} {\bibinfo {author} {\bibfnamefont {S.}~\bibnamefont
  {Jafarpour}}, \bibinfo {author} {\bibfnamefont {E.~Y.}\ \bibnamefont
  {Huang}}, \bibinfo {author} {\bibfnamefont {K.~D.}\ \bibnamefont {Smith}}, \
  and\ \bibinfo {author} {\bibfnamefont {F.}~\bibnamefont {Bullo}},\ }\bibfield
   {title} {\enquote {\bibinfo {title} {Flow and elastic networks on the
  $n$-torus: Geometry, analysis, and computation},}\ }\href {\doibase
  10.1137/18M1242056} {\bibfield  {journal} {\bibinfo  {journal} {SIAM Review}\
  }\textbf {\bibinfo {volume} {64}},\ \bibinfo {pages} {59--104} (\bibinfo
  {year} {2022})}\BibitemShut {NoStop}%
\bibitem [{\citenamefont {Aeyels}\ and\ \citenamefont {De~Smet}(2008)}]{Aey08}%
  \BibitemOpen
  \bibfield  {author} {\bibinfo {author} {\bibfnamefont {D.}~\bibnamefont
  {Aeyels}}\ and\ \bibinfo {author} {\bibfnamefont {F.}~\bibnamefont
  {De~Smet}},\ }\bibfield  {title} {\enquote {\bibinfo {title} {A mathematical
  model for the dynamics of clustering},}\ }\href {\doibase
  10.1016/j.physd.2008.02.024} {\bibfield  {journal} {\bibinfo  {journal}
  {Physica D}\ }\textbf {\bibinfo {volume} {237}},\ \bibinfo {pages}
  {2517--2530} (\bibinfo {year} {2008})}\BibitemShut {NoStop}%
\bibitem [{\citenamefont {Xia}\ and\ \citenamefont {Cao}(2011)}]{Xia11}%
  \BibitemOpen
  \bibfield  {author} {\bibinfo {author} {\bibfnamefont {W.}~\bibnamefont
  {Xia}}\ and\ \bibinfo {author} {\bibfnamefont {M.}~\bibnamefont {Cao}},\
  }\bibfield  {title} {\enquote {\bibinfo {title} {Clustering in diffusively
  coupled networks},}\ }\href {\doibase 10.1016/j.automatica.2011.08.043}
  {\bibfield  {journal} {\bibinfo  {journal} {Automatica}\ }\textbf {\bibinfo
  {volume} {47}},\ \bibinfo {pages} {2395--2405} (\bibinfo {year}
  {2011})}\BibitemShut {NoStop}%
\bibitem [{\citenamefont {Dvijotham}\ and\ \citenamefont
  {Chertkov}(2015)}]{Dvi15}%
  \BibitemOpen
  \bibfield  {author} {\bibinfo {author} {\bibfnamefont {K.}~\bibnamefont
  {Dvijotham}}\ and\ \bibinfo {author} {\bibfnamefont {M.}~\bibnamefont
  {Chertkov}},\ }\bibfield  {title} {\enquote {\bibinfo {title} {Convexity of
  structure preserving energy functions in power transmission: {Novel} results
  and applications},}\ }in\ \href {\doibase 10.1109/ACC.2015.7172123} {\emph
  {\bibinfo {booktitle} {{American} {Control} {Conference} ({ACC})}}}\
  (\bibinfo  {publisher} {IEEE},\ \bibinfo {address} {Chicago, IL, USA},\
  \bibinfo {year} {2015})\ pp.\ \bibinfo {pages} {5035--5042}\BibitemShut
  {NoStop}%
\bibitem [{\citenamefont {Ling}\ \emph {et~al.}(2019)\citenamefont {Ling},
  \citenamefont {Xu},\ and\ \citenamefont {Bandeira}}]{Lin19}%
  \BibitemOpen
  \bibfield  {author} {\bibinfo {author} {\bibfnamefont {S.}~\bibnamefont
  {Ling}}, \bibinfo {author} {\bibfnamefont {R.}~\bibnamefont {Xu}}, \ and\
  \bibinfo {author} {\bibfnamefont {A.~S.}\ \bibnamefont {Bandeira}},\
  }\bibfield  {title} {\enquote {\bibinfo {title} {On the {Landscape} of
  {Synchronization} {Networks}: {A} {Perspective} from {Nonconvex}
  {Optimization}},}\ }\href {\doibase 10.1137/18M1217644} {\bibfield  {journal}
  {\bibinfo  {journal} {SIAM J. Optim.}\ }\textbf {\bibinfo {volume} {29}},\
  \bibinfo {pages} {1879--1907} (\bibinfo {year} {2019})}\BibitemShut {NoStop}%
\bibitem [{\citenamefont {Sakaguchi}\ and\ \citenamefont
  {Kuramoto}(1986)}]{Sak86}%
  \BibitemOpen
  \bibfield  {author} {\bibinfo {author} {\bibfnamefont {H.}~\bibnamefont
  {Sakaguchi}}\ and\ \bibinfo {author} {\bibfnamefont {Y.}~\bibnamefont
  {Kuramoto}},\ }\bibfield  {title} {\enquote {\bibinfo {title} {A soluble
  active rotater model showing phase transitions via mutual entertainment},}\
  }\href {\doibase 10.1143/PTP.76.576} {\bibfield  {journal} {\bibinfo
  {journal} {Prog. Theor. Phys.}\ }\textbf {\bibinfo {volume} {76}},\ \bibinfo
  {pages} {576--581} (\bibinfo {year} {1986})}\BibitemShut {NoStop}%
\bibitem [{\citenamefont {Sakaguchi}\ \emph {et~al.}(1988)\citenamefont
  {Sakaguchi}, \citenamefont {Shinomoto},\ and\ \citenamefont
  {Kuramoto}}]{Sak88}%
  \BibitemOpen
  \bibfield  {author} {\bibinfo {author} {\bibfnamefont {H.}~\bibnamefont
  {Sakaguchi}}, \bibinfo {author} {\bibfnamefont {S.}~\bibnamefont
  {Shinomoto}}, \ and\ \bibinfo {author} {\bibfnamefont {Y.}~\bibnamefont
  {Kuramoto}},\ }\bibfield  {title} {\enquote {\bibinfo {title} {Mutual
  entrainment in oscillator lattices with nonvariational type interaction},}\
  }\href {\doibase 10.1143/PTP.79.1069} {\bibfield  {journal} {\bibinfo
  {journal} {Prog. Theor. Phys.}\ }\textbf {\bibinfo {volume} {79}},\ \bibinfo
  {pages} {1069--1079} (\bibinfo {year} {1988})}\BibitemShut {NoStop}%
\bibitem [{\citenamefont {Balestra}\ \emph {et~al.}(2019)\citenamefont
  {Balestra}, \citenamefont {Kaiser}, \citenamefont {Manik},\ and\
  \citenamefont {Witthaut}}]{Bal19}%
  \BibitemOpen
  \bibfield  {author} {\bibinfo {author} {\bibfnamefont {C.}~\bibnamefont
  {Balestra}}, \bibinfo {author} {\bibfnamefont {F.}~\bibnamefont {Kaiser}},
  \bibinfo {author} {\bibfnamefont {D.}~\bibnamefont {Manik}}, \ and\ \bibinfo
  {author} {\bibfnamefont {D.}~\bibnamefont {Witthaut}},\ }\bibfield  {title}
  {\enquote {\bibinfo {title} {Multistability in lossy power grids and
  oscillator networks},}\ }\href {\doibase 10.1063/1.5122739} {\bibfield
  {journal} {\bibinfo  {journal} {Chaos}\ }\textbf {\bibinfo {volume} {29}},\
  \bibinfo {pages} {123119} (\bibinfo {year} {2019})}\BibitemShut {NoStop}%
\bibitem [{\citenamefont {Hellmann}\ \emph {et~al.}(2020)\citenamefont
  {Hellmann}, \citenamefont {Schultz}, \citenamefont {Jaros}, \citenamefont
  {Levchenko}, \citenamefont {Kapitaniak}, \citenamefont {Kurths},\ and\
  \citenamefont {Maistrenko}}]{Hel20}%
  \BibitemOpen
  \bibfield  {author} {\bibinfo {author} {\bibfnamefont {F.}~\bibnamefont
  {Hellmann}}, \bibinfo {author} {\bibfnamefont {P.}~\bibnamefont {Schultz}},
  \bibinfo {author} {\bibfnamefont {P.}~\bibnamefont {Jaros}}, \bibinfo
  {author} {\bibfnamefont {R.}~\bibnamefont {Levchenko}}, \bibinfo {author}
  {\bibfnamefont {T.}~\bibnamefont {Kapitaniak}}, \bibinfo {author}
  {\bibfnamefont {J.}~\bibnamefont {Kurths}}, \ and\ \bibinfo {author}
  {\bibfnamefont {Y.}~\bibnamefont {Maistrenko}},\ }\bibfield  {title}
  {\enquote {\bibinfo {title} {Network-induced multistability through lossy
  coupling and exotic solitary states},}\ }\href {\doibase
  10.1038/s41467-020-14417-7} {\bibfield  {journal} {\bibinfo  {journal} {Nat.
  Commun.}\ }\textbf {\bibinfo {volume} {11}},\ \bibinfo {pages} {592}
  (\bibinfo {year} {2020})}\BibitemShut {NoStop}%
\bibitem [{\citenamefont {Bronski}\ \emph {et~al.}(2018)\citenamefont
  {Bronski}, \citenamefont {Carty},\ and\ \citenamefont {DeVille}}]{Bro18}%
  \BibitemOpen
  \bibfield  {author} {\bibinfo {author} {\bibfnamefont {J.~C.}\ \bibnamefont
  {Bronski}}, \bibinfo {author} {\bibfnamefont {T.}~\bibnamefont {Carty}}, \
  and\ \bibinfo {author} {\bibfnamefont {L.}~\bibnamefont {DeVille}},\
  }\bibfield  {title} {\enquote {\bibinfo {title} {Configurational stability
  for the {Kuramoto}-{Sakaguchi} model},}\ }\href {\doibase 10.1063/1.5029397}
  {\bibfield  {journal} {\bibinfo  {journal} {Chaos}\ }\textbf {\bibinfo
  {volume} {28}},\ \bibinfo {pages} {103109} (\bibinfo {year}
  {2018})}\BibitemShut {NoStop}%
\bibitem [{\citenamefont {Berner}\ \emph {et~al.}(2021)\citenamefont {Berner},
  \citenamefont {Yanchuk}, \citenamefont {Maistrenko},\ and\ \citenamefont
  {Schöll}}]{Ber21}%
  \BibitemOpen
  \bibfield  {author} {\bibinfo {author} {\bibfnamefont {R.}~\bibnamefont
  {Berner}}, \bibinfo {author} {\bibfnamefont {S.}~\bibnamefont {Yanchuk}},
  \bibinfo {author} {\bibfnamefont {Y.}~\bibnamefont {Maistrenko}}, \ and\
  \bibinfo {author} {\bibfnamefont {E.}~\bibnamefont {Schöll}},\ }\bibfield
  {title} {\enquote {\bibinfo {title} {Generalized splay states in phase
  oscillator networks},}\ }\href {\doibase 10.1063/5.0056664} {\bibfield
  {journal} {\bibinfo  {journal} {Chaos}\ }\textbf {\bibinfo {volume} {31}},\
  \bibinfo {pages} {073128} (\bibinfo {year} {2021})}\BibitemShut {NoStop}%
\bibitem [{\citenamefont {Mihara}\ and\ \citenamefont
  {Medrano-T}(2019)}]{Mih19}%
  \BibitemOpen
  \bibfield  {author} {\bibinfo {author} {\bibfnamefont {A.}~\bibnamefont
  {Mihara}}\ and\ \bibinfo {author} {\bibfnamefont {R.~O.}\ \bibnamefont
  {Medrano-T}},\ }\bibfield  {title} {\enquote {\bibinfo {title} {Stability in
  the {Kuramoto}–{Sakaguchi} model for finite networks of identical
  oscillators},}\ }\href {\doibase 10.1007/s11071-019-05210-3} {\bibfield
  {journal} {\bibinfo  {journal} {Nonlin. Dyn.}\ }\textbf {\bibinfo {volume}
  {98}},\ \bibinfo {pages} {539--550} (\bibinfo {year} {2019})}\BibitemShut
  {NoStop}%
\bibitem [{\citenamefont {Zhang}\ and\ \citenamefont {Motter}(2020)}]{Zha20}%
  \BibitemOpen
  \bibfield  {author} {\bibinfo {author} {\bibfnamefont {Y.}~\bibnamefont
  {Zhang}}\ and\ \bibinfo {author} {\bibfnamefont {A.~E.}\ \bibnamefont
  {Motter}},\ }\bibfield  {title} {\enquote {\bibinfo {title}
  {Symmetry-{Independent} {Stability} {Analysis} of {Synchronization}
  {Patterns}},}\ }\href {\doibase 10.1137/19M127358X} {\bibfield  {journal}
  {\bibinfo  {journal} {SIAM Review}\ }\textbf {\bibinfo {volume} {62}},\
  \bibinfo {pages} {817--836} (\bibinfo {year} {2020})}\BibitemShut {NoStop}%
\bibitem [{\citenamefont {Battiston}\ \emph {et~al.}(2020)\citenamefont
  {Battiston}, \citenamefont {Cencetti}, \citenamefont {Iacopini},
  \citenamefont {Latora}, \citenamefont {Lucas}, \citenamefont {Patania},
  \citenamefont {Young},\ and\ \citenamefont {Petri}}]{Bat20}%
  \BibitemOpen
  \bibfield  {author} {\bibinfo {author} {\bibfnamefont {F.}~\bibnamefont
  {Battiston}}, \bibinfo {author} {\bibfnamefont {G.}~\bibnamefont {Cencetti}},
  \bibinfo {author} {\bibfnamefont {I.}~\bibnamefont {Iacopini}}, \bibinfo
  {author} {\bibfnamefont {V.}~\bibnamefont {Latora}}, \bibinfo {author}
  {\bibfnamefont {M.}~\bibnamefont {Lucas}}, \bibinfo {author} {\bibfnamefont
  {A.}~\bibnamefont {Patania}}, \bibinfo {author} {\bibfnamefont {J.-G.}\
  \bibnamefont {Young}}, \ and\ \bibinfo {author} {\bibfnamefont
  {G.}~\bibnamefont {Petri}},\ }\bibfield  {title} {\enquote {\bibinfo {title}
  {Networks beyond pairwise interactions: {Structure} and dynamics},}\ }\href
  {\doibase 10.1016/j.physrep.2020.05.004} {\bibfield  {journal} {\bibinfo
  {journal} {Phys. Rep.}\ }\textbf {\bibinfo {volume} {874}},\ \bibinfo {pages}
  {1--92} (\bibinfo {year} {2020})}\BibitemShut {NoStop}%
\bibitem [{\citenamefont {Zhang}\ \emph {et~al.}(2021)\citenamefont {Zhang},
  \citenamefont {Ocampo-Espindola}, \citenamefont {Kiss},\ and\ \citenamefont
  {Motter}}]{Zha21}%
  \BibitemOpen
  \bibfield  {author} {\bibinfo {author} {\bibfnamefont {Y.}~\bibnamefont
  {Zhang}}, \bibinfo {author} {\bibfnamefont {J.~L.}\ \bibnamefont
  {Ocampo-Espindola}}, \bibinfo {author} {\bibfnamefont {I.~Z.}\ \bibnamefont
  {Kiss}}, \ and\ \bibinfo {author} {\bibfnamefont {A.~E.}\ \bibnamefont
  {Motter}},\ }\bibfield  {title} {\enquote {\bibinfo {title} {Random
  heterogeneity outperforms design in network synchronization},}\ }\href
  {\doibase 10.1073/pnas.2024299118} {\bibfield  {journal} {\bibinfo  {journal}
  {Proc. Natl. Acad. Sci. USA}\ }\textbf {\bibinfo {volume} {118}},\ \bibinfo
  {pages} {e2024299118} (\bibinfo {year} {2021})}\BibitemShut {NoStop}%
\bibitem [{\citenamefont {Matheny}\ \emph {et~al.}(2019)\citenamefont
  {Matheny}, \citenamefont {Emenheiser}, \citenamefont {Fon}, \citenamefont
  {Chapman}, \citenamefont {Salova}, \citenamefont {Rohden}, \citenamefont
  {Li}, \citenamefont {Hudoba~de Badyn}, \citenamefont {P\'osfai},
  \citenamefont {Duenas-Osorio}, \citenamefont {Mesbahi}, \citenamefont
  {Crutchfield}, \citenamefont {Cross}, \citenamefont {D’Souza},\ and\
  \citenamefont {Roukes}}]{Mat19}%
  \BibitemOpen
  \bibfield  {author} {\bibinfo {author} {\bibfnamefont {M.~H.}\ \bibnamefont
  {Matheny}}, \bibinfo {author} {\bibfnamefont {J.}~\bibnamefont {Emenheiser}},
  \bibinfo {author} {\bibfnamefont {W.}~\bibnamefont {Fon}}, \bibinfo {author}
  {\bibfnamefont {A.}~\bibnamefont {Chapman}}, \bibinfo {author} {\bibfnamefont
  {A.}~\bibnamefont {Salova}}, \bibinfo {author} {\bibfnamefont
  {M.}~\bibnamefont {Rohden}}, \bibinfo {author} {\bibfnamefont
  {J.}~\bibnamefont {Li}}, \bibinfo {author} {\bibfnamefont {M.}~\bibnamefont
  {Hudoba~de Badyn}}, \bibinfo {author} {\bibfnamefont {M.}~\bibnamefont
  {P\'osfai}}, \bibinfo {author} {\bibfnamefont {L.}~\bibnamefont
  {Duenas-Osorio}}, \bibinfo {author} {\bibfnamefont {M.}~\bibnamefont
  {Mesbahi}}, \bibinfo {author} {\bibfnamefont {J.~P.}\ \bibnamefont
  {Crutchfield}}, \bibinfo {author} {\bibfnamefont {M.~C.}\ \bibnamefont
  {Cross}}, \bibinfo {author} {\bibfnamefont {R.~M.}\ \bibnamefont
  {D’Souza}}, \ and\ \bibinfo {author} {\bibfnamefont {M.~L.}\ \bibnamefont
  {Roukes}},\ }\bibfield  {title} {\enquote {\bibinfo {title} {Exotic states in
  a simple network of nanoelectromechanical oscillators},}\ }\href {\doibase
  10.1126/science.aav7932} {\bibfield  {journal} {\bibinfo  {journal}
  {Science}\ }\textbf {\bibinfo {volume} {363}},\ \bibinfo {pages} {eaav7932}
  (\bibinfo {year} {2019})}\BibitemShut {NoStop}%
\bibitem [{\citenamefont {Abrams}\ and\ \citenamefont
  {Strogatz}(2004)}]{Abr04}%
  \BibitemOpen
  \bibfield  {author} {\bibinfo {author} {\bibfnamefont {D.~M.}\ \bibnamefont
  {Abrams}}\ and\ \bibinfo {author} {\bibfnamefont {S.~H.}\ \bibnamefont
  {Strogatz}},\ }\bibfield  {title} {\enquote {\bibinfo {title} {Chimera states
  for coupled oscillators},}\ }\href {\doibase 10.1103/PhysRevLett.93.174102}
  {\bibfield  {journal} {\bibinfo  {journal} {Phys. Rev. Lett.}\ }\textbf
  {\bibinfo {volume} {93}},\ \bibinfo {pages} {174102} (\bibinfo {year}
  {2004})}\BibitemShut {NoStop}%
\bibitem [{\citenamefont {Abrams}\ \emph {et~al.}(2008)\citenamefont {Abrams},
  \citenamefont {Mirollo}, \citenamefont {Strogatz},\ and\ \citenamefont
  {Wiley}}]{Abr08}%
  \BibitemOpen
  \bibfield  {author} {\bibinfo {author} {\bibfnamefont {D.~M.}\ \bibnamefont
  {Abrams}}, \bibinfo {author} {\bibfnamefont {R.}~\bibnamefont {Mirollo}},
  \bibinfo {author} {\bibfnamefont {S.~H.}\ \bibnamefont {Strogatz}}, \ and\
  \bibinfo {author} {\bibfnamefont {D.~A.}\ \bibnamefont {Wiley}},\ }\bibfield
  {title} {\enquote {\bibinfo {title} {Solvable model for chimera states of
  coupled oscillators},}\ }\href {\doibase 10.1103/PhysRevLett.101.084103}
  {\bibfield  {journal} {\bibinfo  {journal} {Phys. Rev. Lett.}\ }\textbf
  {\bibinfo {volume} {101}},\ \bibinfo {pages} {084103} (\bibinfo {year}
  {2008})}\BibitemShut {NoStop}%
\bibitem [{\citenamefont {Wolfrum}\ and\ \citenamefont
  {Omel’chenko}(2011)}]{Wol11}%
  \BibitemOpen
  \bibfield  {author} {\bibinfo {author} {\bibfnamefont {M.}~\bibnamefont
  {Wolfrum}}\ and\ \bibinfo {author} {\bibfnamefont {O.~E.}\ \bibnamefont
  {Omel’chenko}},\ }\bibfield  {title} {\enquote {\bibinfo {title} {Chimera
  states are chaotic transients},}\ }\href {\doibase
  10.1103/PhysRevE.84.015201} {\bibfield  {journal} {\bibinfo  {journal} {Phys.
  Rev. E}\ }\textbf {\bibinfo {volume} {84}},\ \bibinfo {pages} {015201(R)}
  (\bibinfo {year} {2011})}\BibitemShut {NoStop}%
\bibitem [{\citenamefont {Martens}\ \emph {et~al.}(2013)\citenamefont
  {Martens}, \citenamefont {Thutupalli}, \citenamefont {Fourriere},\ and\
  \citenamefont {Hallatschek}}]{Mar13}%
  \BibitemOpen
  \bibfield  {author} {\bibinfo {author} {\bibfnamefont {E.~A.}\ \bibnamefont
  {Martens}}, \bibinfo {author} {\bibfnamefont {S.}~\bibnamefont {Thutupalli}},
  \bibinfo {author} {\bibfnamefont {A.}~\bibnamefont {Fourriere}}, \ and\
  \bibinfo {author} {\bibfnamefont {O.}~\bibnamefont {Hallatschek}},\
  }\bibfield  {title} {\enquote {\bibinfo {title} {Chimera states in mechanical
  oscillator networks},}\ }\href {\doibase 10.1073/pnas.1302880110} {\bibfield
  {journal} {\bibinfo  {journal} {Proc. Natl. Acad. Sci. USA}\ }\textbf
  {\bibinfo {volume} {110}},\ \bibinfo {pages} {10563--10567} (\bibinfo {year}
  {2013})}\BibitemShut {NoStop}%
\bibitem [{\citenamefont {Zhang}\ \emph {et~al.}(2020)\citenamefont {Zhang},
  \citenamefont {Nicolaou}, \citenamefont {Hart}, \citenamefont {Roy},\ and\
  \citenamefont {Motter}}]{Zha20b}%
  \BibitemOpen
  \bibfield  {author} {\bibinfo {author} {\bibfnamefont {Y.}~\bibnamefont
  {Zhang}}, \bibinfo {author} {\bibfnamefont {Z.~G.}\ \bibnamefont {Nicolaou}},
  \bibinfo {author} {\bibfnamefont {J.~D.}\ \bibnamefont {Hart}}, \bibinfo
  {author} {\bibfnamefont {R.}~\bibnamefont {Roy}}, \ and\ \bibinfo {author}
  {\bibfnamefont {A.~E.}\ \bibnamefont {Motter}},\ }\bibfield  {title}
  {\enquote {\bibinfo {title} {Critical {Switching} in {Globally} {Attractive}
  {Chimeras}},}\ }\href {\doibase 10.1103/PhysRevX.10.011044} {\bibfield
  {journal} {\bibinfo  {journal} {Phys. Rev. X}\ }\textbf {\bibinfo {volume}
  {10}},\ \bibinfo {pages} {011044} (\bibinfo {year} {2020})}\BibitemShut
  {NoStop}%
\bibitem [{\citenamefont {Delabays}(2022)}]{repo}%
  \BibitemOpen
  \bibfield  {author} {\bibinfo {author} {\bibfnamefont {R.}~\bibnamefont
  {Delabays}},\ }\href {\doibase 10.5281/zenodo.5899408} {\enquote {\bibinfo
  {title} {{DFNSolver}: {Dissipative} {Flow} {Networks} {Solver} (v1.1),
  {Zenodo}},}\ } (\bibinfo {year} {2022})\BibitemShut {NoStop}%
\bibitem [{\citenamefont {Grigg}\ \emph {et~al.}(1999)\citenamefont {Grigg},
  \citenamefont {Wong}, \citenamefont {Albrecht}, \citenamefont {Allan},
  \citenamefont {Bhavaraju}, \citenamefont {Billinton}, \citenamefont {Chen},
  \citenamefont {Fong}, \citenamefont {Haddad}, \citenamefont {Kuruganty},
  \citenamefont {Li}, \citenamefont {Mukerji}, \citenamefont {Patton},
  \citenamefont {Rau}, \citenamefont {Reppen}, \citenamefont {Schneider},
  \citenamefont {Shahidehpour},\ and\ \citenamefont {Singh}}]{Gri99}%
  \BibitemOpen
  \bibfield  {author} {\bibinfo {author} {\bibfnamefont {C.}~\bibnamefont
  {Grigg}}, \bibinfo {author} {\bibfnamefont {P.}~\bibnamefont {Wong}},
  \bibinfo {author} {\bibfnamefont {P.}~\bibnamefont {Albrecht}}, \bibinfo
  {author} {\bibfnamefont {R.}~\bibnamefont {Allan}}, \bibinfo {author}
  {\bibfnamefont {M.}~\bibnamefont {Bhavaraju}}, \bibinfo {author}
  {\bibfnamefont {R.}~\bibnamefont {Billinton}}, \bibinfo {author}
  {\bibfnamefont {Q.}~\bibnamefont {Chen}}, \bibinfo {author} {\bibfnamefont
  {C.}~\bibnamefont {Fong}}, \bibinfo {author} {\bibfnamefont {S.}~\bibnamefont
  {Haddad}}, \bibinfo {author} {\bibfnamefont {S.}~\bibnamefont {Kuruganty}},
  \bibinfo {author} {\bibfnamefont {W.}~\bibnamefont {Li}}, \bibinfo {author}
  {\bibfnamefont {R.}~\bibnamefont {Mukerji}}, \bibinfo {author} {\bibfnamefont
  {D.}~\bibnamefont {Patton}}, \bibinfo {author} {\bibfnamefont
  {N.}~\bibnamefont {Rau}}, \bibinfo {author} {\bibfnamefont {D.}~\bibnamefont
  {Reppen}}, \bibinfo {author} {\bibfnamefont {A.}~\bibnamefont {Schneider}},
  \bibinfo {author} {\bibfnamefont {M.}~\bibnamefont {Shahidehpour}}, \ and\
  \bibinfo {author} {\bibfnamefont {C.}~\bibnamefont {Singh}},\ }\bibfield
  {title} {\enquote {\bibinfo {title} {The {IEEE} {Reliability} {Test}
  {System}-1996. {A} report prepared by the {Reliability} {Test} {System}
  {Task} {Force} of the {Application} of {Probability} {Methods}
  {Subcommittee}},}\ }\href {\doibase 10.1109/59.780914} {\bibfield  {journal}
  {\bibinfo  {journal} {IEEE Trans. Power Syst.}\ }\textbf {\bibinfo {volume}
  {14}},\ \bibinfo {pages} {1010--1020} (\bibinfo {year} {1999})}\BibitemShut
  {NoStop}%
\bibitem [{\citenamefont {Delabays}\ \emph {et~al.}(2016)\citenamefont
  {Delabays}, \citenamefont {Coletta},\ and\ \citenamefont {Jacquod}}]{Del16}%
  \BibitemOpen
  \bibfield  {author} {\bibinfo {author} {\bibfnamefont {R.}~\bibnamefont
  {Delabays}}, \bibinfo {author} {\bibfnamefont {T.}~\bibnamefont {Coletta}}, \
  and\ \bibinfo {author} {\bibfnamefont {Ph.}\ \bibnamefont {Jacquod}},\
  }\bibfield  {title} {\enquote {\bibinfo {title} {Multistability of
  phase-locking and topological winding numbers in locally coupled {Kuramoto}
  models on single-loop networks},}\ }\href {\doibase 10.1063/1.4943296}
  {\bibfield  {journal} {\bibinfo  {journal} {J. Math. Phys.}\ }\textbf
  {\bibinfo {volume} {57}},\ \bibinfo {pages} {032701} (\bibinfo {year}
  {2016})}\BibitemShut {NoStop}%
\bibitem [{\citenamefont {Manik}\ \emph {et~al.}(2017)\citenamefont {Manik},
  \citenamefont {Timme},\ and\ \citenamefont {Witthaut}}]{Man17}%
  \BibitemOpen
  \bibfield  {author} {\bibinfo {author} {\bibfnamefont {D.}~\bibnamefont
  {Manik}}, \bibinfo {author} {\bibfnamefont {M.}~\bibnamefont {Timme}}, \ and\
  \bibinfo {author} {\bibfnamefont {D.}~\bibnamefont {Witthaut}},\ }\bibfield
  {title} {\enquote {\bibinfo {title} {Cycle flows and multistability in
  oscillatory networks},}\ }\href {\doibase 10.1063/1.4994177} {\bibfield
  {journal} {\bibinfo  {journal} {Chaos}\ }\textbf {\bibinfo {volume} {27}},\
  \bibinfo {pages} {083123} (\bibinfo {year} {2017})}\BibitemShut {NoStop}%
\bibitem [{\citenamefont {Feynman}(1955)}]{Fey55}%
  \BibitemOpen
  \bibfield  {author} {\bibinfo {author} {\bibfnamefont {R.~P.}\ \bibnamefont
  {Feynman}},\ }\bibfield  {title} {\enquote {\bibinfo {title} {Chapter {II}
  {Application} of {Quantum} {Mechanics} to {Liquid} {Helium}},}\ }in\ \href
  {\doibase 10.1016/S0079-6417(08)60077-3} {\emph {\bibinfo {booktitle}
  {Progress in {Low} {Temperature} {Physics}}}},\ Vol.~\bibinfo {volume} {1}\
  (\bibinfo  {publisher} {Elsevier},\ \bibinfo {year} {1955})\ pp.\ \bibinfo
  {pages} {17--53}\BibitemShut {NoStop}%
\bibitem [{\citenamefont {Abrikosov}(1957)}]{Abr57}%
  \BibitemOpen
  \bibfield  {author} {\bibinfo {author} {\bibfnamefont {A.~A.}\ \bibnamefont
  {Abrikosov}},\ }\bibfield  {title} {\enquote {\bibinfo {title} {On the
  magnetic properties of superconductors of the second group},}\ }\href@noop {}
  {\bibfield  {journal} {\bibinfo  {journal} {Sov. Phys. JETP}\ }\textbf
  {\bibinfo {volume} {5}},\ \bibinfo {pages} {1174} (\bibinfo {year}
  {1957})}\BibitemShut {NoStop}%
\bibitem [{\citenamefont {Korsak}(1972)}]{Kor72}%
  \BibitemOpen
  \bibfield  {author} {\bibinfo {author} {\bibfnamefont {A.~J.}\ \bibnamefont
  {Korsak}},\ }\bibfield  {title} {\enquote {\bibinfo {title} {On the question
  of uniqueness of stable load-flow solutions},}\ }\href {\doibase
  10.1109/TPAS.1972.293463} {\bibfield  {journal} {\bibinfo  {journal} {IEEE
  Trans. Power App. Syst.}\ }\textbf {\bibinfo {volume} {91}},\ \bibinfo
  {pages} {1093--1100} (\bibinfo {year} {1972})}\BibitemShut {NoStop}%
\bibitem [{\citenamefont {Ermentrout}(1985)}]{Erm85b}%
  \BibitemOpen
  \bibfield  {author} {\bibinfo {author} {\bibfnamefont {G.~B.}\ \bibnamefont
  {Ermentrout}},\ }\bibfield  {title} {\enquote {\bibinfo {title} {The behavior
  of rings of coupled oscillators},}\ }\href {\doibase 10.1007/BF00276558}
  {\bibfield  {journal} {\bibinfo  {journal} {J. Math. Biol.}\ }\textbf
  {\bibinfo {volume} {23}},\ \bibinfo {pages} {55--74} (\bibinfo {year}
  {1985})}\BibitemShut {NoStop}%
\bibitem [{\citenamefont {Coletta}\ \emph {et~al.}(2016)\citenamefont
  {Coletta}, \citenamefont {Delabays}, \citenamefont {Adagideli},\ and\
  \citenamefont {Jacquod}}]{Col16b}%
  \BibitemOpen
  \bibfield  {author} {\bibinfo {author} {\bibfnamefont {T.}~\bibnamefont
  {Coletta}}, \bibinfo {author} {\bibfnamefont {R.}~\bibnamefont {Delabays}},
  \bibinfo {author} {\bibfnamefont {I.}~\bibnamefont {Adagideli}}, \ and\
  \bibinfo {author} {\bibfnamefont {P.}~\bibnamefont {Jacquod}},\ }\bibfield
  {title} {\enquote {\bibinfo {title} {Topologically protected loop flows in
  high voltage {AC} power grids},}\ }\href {\doibase
  10.1088/1367-2630/18/10/103042} {\bibfield  {journal} {\bibinfo  {journal}
  {New J. Phys.}\ }\textbf {\bibinfo {volume} {18}},\ \bibinfo {pages} {103042}
  (\bibinfo {year} {2016})}\BibitemShut {NoStop}%
\bibitem [{\citenamefont {Khalil}(2002)}]{Kha02}%
  \BibitemOpen
  \bibfield  {author} {\bibinfo {author} {\bibfnamefont {H.~K.}\ \bibnamefont
  {Khalil}},\ }\href@noop {} {\emph {\bibinfo {title} {Nonlinear systems}}},\
  \bibinfo {edition} {3rd}\ ed.\ (\bibinfo  {publisher} {Prentice Hall},\
  \bibinfo {year} {2002})\BibitemShut {NoStop}%
\bibitem [{\citenamefont {Horn}\ and\ \citenamefont {Johnson}(1994)}]{Hor94}%
  \BibitemOpen
  \bibfield  {author} {\bibinfo {author} {\bibfnamefont {R.~A.}\ \bibnamefont
  {Horn}}\ and\ \bibinfo {author} {\bibfnamefont {C.~R.}\ \bibnamefont
  {Johnson}},\ }\href@noop {} {\emph {\bibinfo {title} {Matrix Analysis}}}\
  (\bibinfo  {publisher} {Cambridge University Press},\ \bibinfo {address} {New
  York},\ \bibinfo {year} {1994})\BibitemShut {NoStop}%
\bibitem [{\citenamefont {Tamura}\ \emph {et~al.}(1983)\citenamefont {Tamura},
  \citenamefont {Mori},\ and\ \citenamefont {Iwamoto}}]{Tam83}%
  \BibitemOpen
  \bibfield  {author} {\bibinfo {author} {\bibfnamefont {Y.}~\bibnamefont
  {Tamura}}, \bibinfo {author} {\bibfnamefont {H.}~\bibnamefont {Mori}}, \ and\
  \bibinfo {author} {\bibfnamefont {S.}~\bibnamefont {Iwamoto}},\ }\bibfield
  {title} {\enquote {\bibinfo {title} {Relationship between voltage instability
  and multiple load flow solutions in electric power systems},}\ }\href
  {\doibase 10.1109/TPAS.1983.318052} {\bibfield  {journal} {\bibinfo
  {journal} {IEEE Trans. Power App. Syst.}\ }\textbf {\bibinfo {volume}
  {102}},\ \bibinfo {pages} {1115--1125} (\bibinfo {year} {1983})}\BibitemShut
  {NoStop}%
\bibitem [{\citenamefont {Klos}\ and\ \citenamefont {Wojcicka}(1991)}]{Klo91}%
  \BibitemOpen
  \bibfield  {author} {\bibinfo {author} {\bibfnamefont {A.}~\bibnamefont
  {Klos}}\ and\ \bibinfo {author} {\bibfnamefont {J.}~\bibnamefont
  {Wojcicka}},\ }\bibfield  {title} {\enquote {\bibinfo {title} {Physical
  aspects of the nonuniqueness of load flow solutions},}\ }\href {\doibase
  http://dx.doi.org/10.1016/0142-0615(91)90050-6} {\bibfield  {journal}
  {\bibinfo  {journal} {Int. J. Elect. Power Energy Syst.}\ }\textbf {\bibinfo
  {volume} {13}},\ \bibinfo {pages} {268 -- 276} (\bibinfo {year}
  {1991})}\BibitemShut {NoStop}%
\bibitem [{\citenamefont {Ma}\ and\ \citenamefont {Thorp}(1993)}]{Ma93}%
  \BibitemOpen
  \bibfield  {author} {\bibinfo {author} {\bibfnamefont {W.}~\bibnamefont
  {Ma}}\ and\ \bibinfo {author} {\bibfnamefont {J.~S.}\ \bibnamefont {Thorp}},\
  }\bibfield  {title} {\enquote {\bibinfo {title} {An efficient algorithm to
  locate all the load flow solutions},}\ }\href {\doibase 10.1109/59.260891}
  {\bibfield  {journal} {\bibinfo  {journal} {IEEE Trans. Power Syst.}\
  }\textbf {\bibinfo {volume} {8}},\ \bibinfo {pages} {1077--1083} (\bibinfo
  {year} {1993})}\BibitemShut {NoStop}%
\bibitem [{\citenamefont {Mehta}\ \emph {et~al.}(2016)\citenamefont {Mehta},
  \citenamefont {Molzahn},\ and\ \citenamefont {Turitsyn}}]{Meh16}%
  \BibitemOpen
  \bibfield  {author} {\bibinfo {author} {\bibfnamefont {D.}~\bibnamefont
  {Mehta}}, \bibinfo {author} {\bibfnamefont {D.~K.}\ \bibnamefont {Molzahn}},
  \ and\ \bibinfo {author} {\bibfnamefont {K.}~\bibnamefont {Turitsyn}},\
  }\bibfield  {title} {\enquote {\bibinfo {title} {Recent advances in
  computational methods for the power flow equations},}\ }in\ \href {\doibase
  10.1109/ACC.2016.7525170} {\emph {\bibinfo {booktitle} {{American} {Control}
  {Conference} ({ACC})}}}\ (\bibinfo  {publisher} {IEEE},\ \bibinfo {address}
  {Boston, MA, USA},\ \bibinfo {year} {2016})\ pp.\ \bibinfo {pages}
  {1753--1765}\BibitemShut {NoStop}%
\bibitem [{\citenamefont {Casazza}(1998)}]{Cas98}%
  \BibitemOpen
  \bibfield  {author} {\bibinfo {author} {\bibfnamefont {J.}~\bibnamefont
  {Casazza}},\ }\bibfield  {title} {\enquote {\bibinfo {title} {Blackouts: {Is}
  the risk increasing?}}\ }\href@noop {} {\bibfield  {journal} {\bibinfo
  {journal} {Electr. World}\ }\textbf {\bibinfo {volume} {212}},\ \bibinfo
  {pages} {62} (\bibinfo {year} {1998})}\BibitemShut {NoStop}%
\bibitem [{\citenamefont {Whitley}(2008)}]{Whi08}%
  \BibitemOpen
  \bibfield  {author} {\bibinfo {author} {\bibfnamefont {S.~G.}\ \bibnamefont
  {Whitley}},\ }\href@noop {} {\emph {\bibinfo {title} {Lake Erie Loop Flow
  Mitigation}}}\ (\bibinfo  {publisher} {Technical Report; New York Independent
  System Operator},\ \bibinfo {year} {2008})\BibitemShut {NoStop}%
\bibitem [{\citenamefont {Simpson-Porco}\ \emph {et~al.}(2016)\citenamefont
  {Simpson-Porco}, \citenamefont {D\"orfler},\ and\ \citenamefont
  {Bullo}}]{Sim16}%
  \BibitemOpen
  \bibfield  {author} {\bibinfo {author} {\bibfnamefont {J.~W.}\ \bibnamefont
  {Simpson-Porco}}, \bibinfo {author} {\bibfnamefont {F.}~\bibnamefont
  {D\"orfler}}, \ and\ \bibinfo {author} {\bibfnamefont {F.}~\bibnamefont
  {Bullo}},\ }\bibfield  {title} {\enquote {\bibinfo {title} {Voltage collapse
  in complex power grids},}\ }\href {\doibase 10.1038/ncomms10790} {\bibfield
  {journal} {\bibinfo  {journal} {Nat. Commun.}\ }\textbf {\bibinfo {volume}
  {7}},\ \bibinfo {pages} {10790} (\bibinfo {year} {2016})}\BibitemShut
  {NoStop}%
\bibitem [{\citenamefont {Schmietendorf}\ \emph {et~al.}(2019)\citenamefont
  {Schmietendorf}, \citenamefont {Kamps}, \citenamefont {Wolff}, \citenamefont
  {Lind}, \citenamefont {Maass},\ and\ \citenamefont {Peinke}}]{Sch19}%
  \BibitemOpen
  \bibfield  {author} {\bibinfo {author} {\bibfnamefont {K.}~\bibnamefont
  {Schmietendorf}}, \bibinfo {author} {\bibfnamefont {O.}~\bibnamefont
  {Kamps}}, \bibinfo {author} {\bibfnamefont {M.}~\bibnamefont {Wolff}},
  \bibinfo {author} {\bibfnamefont {P.~G.}\ \bibnamefont {Lind}}, \bibinfo
  {author} {\bibfnamefont {P.}~\bibnamefont {Maass}}, \ and\ \bibinfo {author}
  {\bibfnamefont {J.}~\bibnamefont {Peinke}},\ }\bibfield  {title} {\enquote
  {\bibinfo {title} {Bridging between load-flow and {Kuramoto}-like power grid
  models: {A} flexible approach to integrating electrical storage units},}\
  }\href {\doibase 10.1063/1.5099241} {\bibfield  {journal} {\bibinfo
  {journal} {Chaos}\ }\textbf {\bibinfo {volume} {29}},\ \bibinfo {pages}
  {103151} (\bibinfo {year} {2019})}\BibitemShut {NoStop}%
\bibitem [{\citenamefont {B\"ottcher}\ \emph {et~al.}(2022)\citenamefont
  {B\"ottcher}, \citenamefont {Witthaut},\ and\ \citenamefont {Rydin
  Gorj\~ao}}]{Bot22}%
  \BibitemOpen
  \bibfield  {author} {\bibinfo {author} {\bibfnamefont {P.~C.}\ \bibnamefont
  {B\"ottcher}}, \bibinfo {author} {\bibfnamefont {D.}~\bibnamefont
  {Witthaut}}, \ and\ \bibinfo {author} {\bibfnamefont {L.}~\bibnamefont {Rydin
  Gorj\~ao}},\ }\bibfield  {title} {\enquote {\bibinfo {title} {Dynamic
  stability of electric power grids: {Tracking} the interplay of the network
  structure, transmission losses, and voltage dynamics},}\ }\href {\doibase
  10.1063/5.0082712} {\bibfield  {journal} {\bibinfo  {journal} {Chaos}\
  }\textbf {\bibinfo {volume} {32}},\ \bibinfo {pages} {053117} (\bibinfo
  {year} {2022})}\BibitemShut {NoStop}%
\bibitem [{\citenamefont {Dvijotham}\ \emph {et~al.}(2015)\citenamefont
  {Dvijotham}, \citenamefont {Chertkov},\ and\ \citenamefont {Low}}]{Dvi15b}%
  \BibitemOpen
  \bibfield  {author} {\bibinfo {author} {\bibfnamefont {K.}~\bibnamefont
  {Dvijotham}}, \bibinfo {author} {\bibfnamefont {M.}~\bibnamefont {Chertkov}},
  \ and\ \bibinfo {author} {\bibfnamefont {S.}~\bibnamefont {Low}},\ }\bibfield
   {title} {\enquote {\bibinfo {title} {A differential analysis of the power
  flow equations},}\ }in\ \href {\doibase 10.1109/CDC.2015.7402082} {\emph
  {\bibinfo {booktitle} {{IEEE} {Conference} on {Decision} and {Control}
  ({CDC})}}}\ (\bibinfo  {publisher} {IEEE},\ \bibinfo {address} {Osaka,
  Japan},\ \bibinfo {year} {2015})\ pp.\ \bibinfo {pages} {23--30}\BibitemShut
  {NoStop}%
\bibitem [{\citenamefont {Park}\ \emph {et~al.}(2021)\citenamefont {Park},
  \citenamefont {Zhang}, \citenamefont {Lavaei},\ and\ \citenamefont
  {Baldick}}]{Par21}%
  \BibitemOpen
  \bibfield  {author} {\bibinfo {author} {\bibfnamefont {S.}~\bibnamefont
  {Park}}, \bibinfo {author} {\bibfnamefont {R.~Y.}\ \bibnamefont {Zhang}},
  \bibinfo {author} {\bibfnamefont {J.}~\bibnamefont {Lavaei}}, \ and\ \bibinfo
  {author} {\bibfnamefont {R.}~\bibnamefont {Baldick}},\ }\bibfield  {title}
  {\enquote {\bibinfo {title} {Uniqueness of {Power} {Flow} {Solutions} {Using}
  {Monotonicity} and {Network} {Topology}},}\ }\href {\doibase
  10.1109/TCNS.2020.3027783} {\bibfield  {journal} {\bibinfo  {journal} {IEEE
  Trans. Control Netw. Syst.}\ }\textbf {\bibinfo {volume} {8}},\ \bibinfo
  {pages} {319--330} (\bibinfo {year} {2021})}\BibitemShut {NoStop}%
\bibitem [{\citenamefont {Balbuena}\ \emph {et~al.}(2003)\citenamefont
  {Balbuena}, \citenamefont {Ferrero}, \citenamefont {Marcote},\ and\
  \citenamefont {Pelayo}}]{Bal03}%
  \BibitemOpen
  \bibfield  {author} {\bibinfo {author} {\bibfnamefont {C.}~\bibnamefont
  {Balbuena}}, \bibinfo {author} {\bibfnamefont {D.}~\bibnamefont {Ferrero}},
  \bibinfo {author} {\bibfnamefont {X.}~\bibnamefont {Marcote}}, \ and\
  \bibinfo {author} {\bibfnamefont {I.}~\bibnamefont {Pelayo}},\ }\bibfield
  {title} {\enquote {\bibinfo {title} {Algebraic properties of a digraph and
  its line digraph},}\ }\href {\doibase 10.1142/S0219265903000933} {\bibfield
  {journal} {\bibinfo  {journal} {J. Interconnect. Netw.}\ }\textbf {\bibinfo
  {volume} {4}},\ \bibinfo {pages} {377--393} (\bibinfo {year}
  {2003})}\BibitemShut {NoStop}%
\bibitem [{\citenamefont {Arnaudon}\ \emph {et~al.}(2021)\citenamefont
  {Arnaudon}, \citenamefont {Peach}, \citenamefont {Petri},\ and\ \citenamefont
  {Expert}}]{Arn21}%
  \BibitemOpen
  \bibfield  {author} {\bibinfo {author} {\bibfnamefont {A.}~\bibnamefont
  {Arnaudon}}, \bibinfo {author} {\bibfnamefont {R.~L.}\ \bibnamefont {Peach}},
  \bibinfo {author} {\bibfnamefont {G.}~\bibnamefont {Petri}}, \ and\ \bibinfo
  {author} {\bibfnamefont {P.}~\bibnamefont {Expert}},\ }\bibfield  {title}
  {\enquote {\bibinfo {title} {Connecting {Hodge} and {Sakaguchi}-{Kuramoto}: a
  mathematical framework for coupled oscillators on simplicial complexes},}\
  }\href {arxiv.org/abs/2111.11073} {\bibfield  {journal} {\bibinfo  {journal}
  {arXiv preprint: 2111.11073}\ } (\bibinfo {year} {2021})}\BibitemShut
  {NoStop}%
\bibitem [{\citenamefont {Machowski}\ \emph {et~al.}(2008)\citenamefont
  {Machowski}, \citenamefont {Bialek},\ and\ \citenamefont {Bumby}}]{Mac08}%
  \BibitemOpen
  \bibfield  {author} {\bibinfo {author} {\bibfnamefont {J.}~\bibnamefont
  {Machowski}}, \bibinfo {author} {\bibfnamefont {J.~W.}\ \bibnamefont
  {Bialek}}, \ and\ \bibinfo {author} {\bibfnamefont {J.~R.}\ \bibnamefont
  {Bumby}},\ }\href@noop {} {\emph {\bibinfo {title} {Power System
  Dynamics}}},\ \bibinfo {edition} {2nd}\ ed.\ (\bibinfo  {publisher} {Wiley},\
  \bibinfo {address} {Chichester, U.K},\ \bibinfo {year} {2008})\BibitemShut
  {NoStop}%
\bibitem [{\citenamefont {Bergen}\ and\ \citenamefont {Vittal}(2000)}]{Ber00}%
  \BibitemOpen
  \bibfield  {author} {\bibinfo {author} {\bibfnamefont {A.~R.}\ \bibnamefont
  {Bergen}}\ and\ \bibinfo {author} {\bibfnamefont {V.}~\bibnamefont
  {Vittal}},\ }\href@noop {} {\emph {\bibinfo {title} {Power {Systems}
  {Analysis}}}}\ (\bibinfo  {publisher} {Prentice Hall},\ \bibinfo {year}
  {2000})\BibitemShut {NoStop}%
\bibitem [{\citenamefont {Fiedler}(1973)}]{Fie73}%
  \BibitemOpen
  \bibfield  {author} {\bibinfo {author} {\bibfnamefont {M.}~\bibnamefont
  {Fiedler}},\ }\bibfield  {title} {\enquote {\bibinfo {title} {Algebraic
  connectivity of graphs},}\ }\href {\doibase 10.21136/CMJ.1973.101168}
  {\bibfield  {journal} {\bibinfo  {journal} {Czech. Math. J.}\ }\textbf
  {\bibinfo {volume} {23}},\ \bibinfo {pages} {298--305} (\bibinfo {year}
  {1973})}\BibitemShut {NoStop}%
\end{thebibliography}

\begin{thebibliography}{6}%
\makeatletter
\providecommand \@ifxundefined [1]{%
 \@ifx{#1\undefined}
}%
\providecommand \@ifnum [1]{%
 \ifnum #1\expandafter \@firstoftwo
 \else \expandafter \@secondoftwo
 \fi
}%
\providecommand \@ifx [1]{%
 \ifx #1\expandafter \@firstoftwo
 \else \expandafter \@secondoftwo
 \fi
}%
\providecommand \natexlab [1]{#1}%
\providecommand \enquote  [1]{``#1''}%
\providecommand \bibnamefont  [1]{#1}%
\providecommand \bibfnamefont [1]{#1}%
\providecommand \citenamefont [1]{#1}%
\providecommand \href@noop [0]{\@secondoftwo}%
\providecommand \href [0]{\begingroup \@sanitize@url \@href}%
\providecommand \@href[1]{\@@startlink{#1}\@@href}%
\providecommand \@@href[1]{\endgroup#1\@@endlink}%
\providecommand \@sanitize@url [0]{\catcode `\\12\catcode `\$12\catcode
  `\&12\catcode `\#12\catcode `\^12\catcode `\_12\catcode `\%12\relax}%
\providecommand \@@startlink[1]{}%
\providecommand \@@endlink[0]{}%
\providecommand \url  [0]{\begingroup\@sanitize@url \@url }%
\providecommand \@url [1]{\endgroup\@href {#1}{\urlprefix }}%
\providecommand \urlprefix  [0]{URL }%
\providecommand \Eprint [0]{\href }%
\providecommand \doibase [0]{http://dx.doi.org/}%
\providecommand \selectlanguage [0]{\@gobble}%
\providecommand \bibinfo  [0]{\@secondoftwo}%
\providecommand \bibfield  [0]{\@secondoftwo}%
\providecommand \translation [1]{[#1]}%
\providecommand \BibitemOpen [0]{}%
\providecommand \bibitemStop [0]{}%
\providecommand \bibitemNoStop [0]{.\EOS\space}%
\providecommand \EOS [0]{\spacefactor3000\relax}%
\providecommand \BibitemShut  [1]{\csname bibitem#1\endcsname}%
\let\auto@bib@innerbib\@empty
\bibitem [{\citenamefont {Delabays}(2022)}]{repo_si}%
  \BibitemOpen
  \bibfield  {author} {\bibinfo {author} {\bibfnamefont {R.}~\bibnamefont
  {Delabays}},\ }\href {\doibase 10.5281/zenodo.5899408} {\enquote {\bibinfo
  {title} {{DFNSolver}: {Dissipative} {Flow} {Networks} {Solver} (v1.1),
  {Zenodo}},}\ } (\bibinfo {year} {2022})\BibitemShut {NoStop}%
\bibitem [{\citenamefont {Horn}\ and\ \citenamefont {Johnson}(1994)}]{Hor94_si}%
  \BibitemOpen
  \bibfield  {author} {\bibinfo {author} {\bibfnamefont {R.~A.}\ \bibnamefont
  {Horn}}\ and\ \bibinfo {author} {\bibfnamefont {C.~R.}\ \bibnamefont
  {Johnson}},\ }\href@noop {} {\emph {\bibinfo {title} {Matrix Analysis}}}\
  (\bibinfo  {publisher} {Cambridge University Press},\ \bibinfo {address} {New
  York},\ \bibinfo {year} {1994})\BibitemShut {NoStop}%
\bibitem [{\citenamefont {Fiedler}(1973)}]{Fie73_si}%
  \BibitemOpen
  \bibfield  {author} {\bibinfo {author} {\bibfnamefont {M.}~\bibnamefont
  {Fiedler}},\ }\bibfield  {title} {\enquote {\bibinfo {title} {Algebraic
  connectivity of graphs},}\ }\href {\doibase 10.21136/CMJ.1973.101168}
  {\bibfield  {journal} {\bibinfo  {journal} {Czech. Math. J.}\ }\textbf
  {\bibinfo {volume} {23}},\ \bibinfo {pages} {298--305} (\bibinfo {year}
  {1973})}\BibitemShut {NoStop}%
\bibitem [{\citenamefont {Chung}(1997)}]{Chu97_si}%
  \BibitemOpen
  \bibfield  {author} {\bibinfo {author} {\bibfnamefont {F.~R.~K.}\
  \bibnamefont {Chung}},\ }\href@noop {} {\emph {\bibinfo {title} {Spectral
  Graph Theory}}},\ CBMS Regional Conference Series in Mathematics ; Number 92\
  (\bibinfo  {publisher} {Published for the Conference Board of the
  Mathematical Sciences by the American Mathematical Society},\ \bibinfo
  {address} {Providence, Rhode Island},\ \bibinfo {year} {1997})\BibitemShut
  {NoStop}%
\bibitem [{\citenamefont {Mohar}(1991)}]{Moh91_si}%
  \BibitemOpen
  \bibfield  {author} {\bibinfo {author} {\bibfnamefont {B.}~\bibnamefont
  {Mohar}},\ }\bibfield  {title} {\enquote {\bibinfo {title} {Eigenvalues,
  diameter, and mean distance in graphs},}\ }\href {\doibase
  10.1007/BF01789463} {\bibfield  {journal} {\bibinfo  {journal} {Graphs and
  Combin.}\ }\textbf {\bibinfo {volume} {7}},\ \bibinfo {pages} {53--64}
  (\bibinfo {year} {1991})}\BibitemShut {NoStop}%
\bibitem [{\citenamefont {Sedrakyan}\ and\ \citenamefont
  {Sedrakyan}(2018)}]{Sed18_si}%
  \BibitemOpen
  \bibfield  {author} {\bibinfo {author} {\bibfnamefont {H.}~\bibnamefont
  {Sedrakyan}}\ and\ \bibinfo {author} {\bibfnamefont {N.}~\bibnamefont
  {Sedrakyan}},\ }\href {\doibase 10.1007/978-3-319-77836-5} {\emph {\bibinfo
  {title} {Algebraic {Inequalities}}}},\ Problem {Books} in {Mathematics}\
  (\bibinfo  {publisher} {Springer International Publishing},\ \bibinfo
  {address} {Cham},\ \bibinfo {year} {2018})\BibitemShut {NoStop}%
\end{thebibliography}

%


\vspace{3mm}

\hrule

\vspace{3mm}

\small

\paragraph{\bf Acknowledgements.}
RD was supported by the Swiss National Science Foundation, under grant number P400P2\_194359. 
This work was partly supported by AFOSR grant FA9550-22-1-0059. 

\paragraph{\bf Authors contributions.}
All three authors designed the research; RD and SJ derived the mathematical results; RD did the simulations; RD and FB analyzed the simulations and wrote the paper.

\paragraph{\bf Competing interests.}
The authors declare that they have no competing interests.

\newpage

~

\newpage

\onecolumngrid

\titleformat{\subsection}[hang]{\normalfont\bfseries}{Section S\arabic{subsection}.}{.5em}{}[]
\renewcommand{\theequation}{S\arabic{equation}}
\renewcommand{\thefigure}{S\arabic{figure}}
\renewcommand{\thetable}{S\arabic{table}}
\renewcommand{\thethm}{S\arabic{thm}}
\renewcommand{\bibnumfmt}[1]{S#1.}
\renewcommand{\citenumfont}[1]{S#1}

\setcitestyle{super}


\begin{center}

{\Large Supplementary Information for} 

\vspace{2mm}

{\Large Multistability and Anomalies in Oscillator Models of Lossy Power Grids}

\vspace{5mm}

Robin Delabays,\textsuperscript{*} Saber Jafarpour, and Francesco Bullo

\vspace{5mm}

{\small \textsuperscript{*}Corresponding author. Email: robindelabays@ucsb.edu}

\end{center}

\subsection*{Content}
\begin{description}
 \item[Section S1] The proof of Theorem 3;
 \item[Section S2] The proof of Theorem 5;
 \item[Section S3] Additional bounds on the algebraic connectivity.
\end{description}

\subsection{Section S1. Proof of Theorem~3}\label{sec1}
 Without loss of generality, let us renumber the nodes such that node $1$ is a leaf of the graph $\G$, and such that, for all $i\in\{2,...,n-1\}$, node $i$ is a leaf of the subgraph $G_i\subset\G$ where nodes with index up to $i-1$ are pruned. 
 Furthermore, let us denote by $e_i$ the unique edge that connects node $i\in\{1,...,n-1\}$ to the set of nodes $\{i+1,...,n\}$, and orient it such that $i$ is the source of $e$, i.e., $s_{e_i}=i$. 
 The reversed edge is denoted $e_{i+m} = \bar{e}_i$. 
 All directed edges are then indexed. 
 This can be done iteratively and we provide an implementation of this renumbering in Ref.~\citenum{repo_si}. 
 
 For convenience, we recall and slightly rephrase equation~(15a),
 \begin{subequations}\label{sieq:as_bis}
 \begin{align}
  \varphi &= \omega_j - \sum_{\substack{e\colon \\ s_e=j}} h_e(\Delta_e)\, , & j &\in \{1,...,n\},~ \varphi\in\mathbb{R}\, , \label{sieq:as1_bis}\\
  |\Delta_e| &\leq \gamma_e\, , & e &\in \E\, . \label{sieq:as2_bis} 
 \end{align}
 \end{subequations}
 A solution $\bDelta$ of equation~(15a) is also a solution of equations~\eqref{sieq:as_bis}. 
 Showing that there is at most one solution to equations~\eqref{sieq:as_bis} then implies that there is at most one solution to equation~(15a). 

 Assume that $\bDelta$ and $\bDelta'$ are two solutions of equation~(15a), with respective synchronous frequencies $\varphi$ and $\varphi'$. 
 Without loss of generality, assume that $\varphi\leq\varphi'$. 
 We compare now the components of the two solutions recursively.
 
 {Initial step, ${i=1}$.}
 According to our choice of indexing, equation~\eqref{sieq:as1_bis} gives 
 \begin{align}\label{sieq:ineq_1}
  h_{e_1}(\Delta_{e_1}) &= \omega_1 - \varphi \geq \omega_1 - \varphi' = h_{e_1}(\Delta_{e_1}')\, ,
 \end{align}
 and by monotonicity of the coupling functions, 
 \begin{align}\label{sieq:ineq_2}
  \Delta_{e_1} &\geq \Delta_{e_1}'\, .
 \end{align}
 
 {Recursion step, ${2\leq i\leq n-1}$.}
 By the previous steps, we have $\Delta_{e_j} \geq \Delta_{e_j}'$ for $j\in\{1,...,i-1\}$. 
 According to our choice of indexing, there is a single out-going edge from node $i$ whose angle differences in the two solutions have not been compared in the previous steps, namely $e_i$ (see Fig.~\ref{fig:edge_number}). 
 Again by equation~\eqref{sieq:as1_bis} and monotonicity of the coupling functions, we get 
 \begin{align}\label{sieq:ineq_3}
  h_{e_i}(\Delta_{e_i}) &= \omega_i - \sum_{\substack{e: s_e<i, \\ t_e = i}} h_{\bar{e}}(-\Delta_e) - \varphi  
  \geq \omega_i - \sum_{\substack{e: s_e<i, \\ t_e = i}} h_{\bar{e}}(-\Delta_e') - \varphi' = h_{e_i}(\Delta_{e_i}')\, ,
 \end{align}
 and 
 \begin{align}\label{sieq:ineq_4}
  \Delta_{e_i} &\geq \Delta_{e_i}'\, .
 \end{align}
 
 {Final step, ${i=n}$.}
 All the previous steps together with equation~\eqref{sieq:as1_bis} and monotonicity of the coupling functions give 
 \begin{align}\label{sieq:ineq_5}
  \varphi &= \omega_n - \sum_{\substack{e: \\ t_e = n}} h_{\bar{e}}(-\Delta_e) 
  \geq \omega_n - \sum_{\substack{e: \\ t_e = n}} h_{\bar{e}}(-\Delta_e') = \varphi' \geq \varphi\, ,
 \end{align}
 which implies $\varphi=\varphi'$. 
 The inequalities in equations~\eqref{sieq:ineq_1}, \eqref{sieq:ineq_2}, \eqref{sieq:ineq_3}, \eqref{sieq:ineq_4}, and \eqref{sieq:ineq_5} are then equalities and the two solutions are identical. 
 Note that we crucially used that the coupling functions are strictly increasing.
\qed

\begin{figure}
 \centering
 \includegraphics[width=.2\columnwidth]{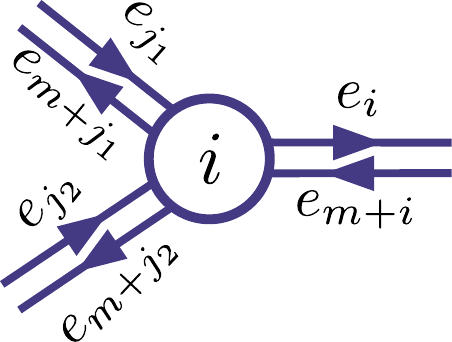}
 \caption{{\bf Illustration of the node and edge indexing.} 
 In this example, our construction implies $j_1,j_2 < i \leq m$. }
 \label{fig:edge_number}
\end{figure}


\subsection{Section S2. Proof of Theorem~5}\label{sec2}
For technical purposes, we need to define the {extended flow function} ${\bf h}_{\gamma}\colon\mathbb{R}^{m}\to\mathbb{R}^{2m}$, whose components are straight lines extensions of the components of ${\bf h}$. 
For each edge $e$, 
\begin{align}\label{sieq:ext_flow}
 \left[{\bf h}_\gamma({\bf y})\right]_e &= \left\{
 \begin{array}{ll}
  h_e'(-\gamma_e) (y_e + \gamma_e) + h_e(-\gamma_e) \, , & y_e < -\gamma_e\, , \\
  h_e(y_e)\, , & |y_e|\leq \gamma_e\, , \\
  h_e'(\gamma_e) (y_e - \gamma_e) + h_e(\gamma_e)\, , & y_e > \gamma_e\, , 
 \end{array}
 \right. \\
 \left[{\bf h}_\gamma({\bf y})\right]_{e+m} &= \left\{
 \begin{array}{ll}
  -h_{\bar{e}}'(\gamma_e) (y_e + \gamma_e) + h_{\bar{e}}(\gamma_e) \, , & y_e < -\gamma_e\, , \\
  h_{\bar{e}}(-y_e)\, , & |y_e|\leq \gamma_e\, , \\
  -h_{\bar{e}}'(-\gamma_e) (y_e - \gamma_e) + h_{\bar{e}}(-\gamma_e)\, , & y_e > \gamma_e\, , 
 \end{array}
 \right.
\end{align}
which is well-defined and continuously differentiable. 

Let $\bm{\xi},\bm{\eta} \in \mathbb{R}^m$, such that $C_\Sigma\bm{\xi}=C_\Sigma\bm{\eta}={\bf u}\in\mathbb{Z}^c$, and define ${\bf y} = \bm{\xi} - \bm{\eta} \in {\rm Ker}(C_\Sigma)$. 
We construct the two diagonal $m\times m$ matrices $\Lambda_1(\bm{\xi},\bm{\eta})$, $\Lambda_2(\bm{\xi},\bm{\eta})$ as 
\begin{align}
 (\Lambda_1)_e &= (\xi_e - \eta_e)^{-1} \int_{\eta_e}^{\xi_e} ({\bf h}_\gamma)'_e(t) dt \, , \\
 (\Lambda_2)_e &= (\xi_e - \eta_e)^{-1} \int_{\eta_e}^{\xi_e} ({\bf h}_\gamma)'_{\bar{e}}(-t) dt \, .
\end{align}
We verify that 
\begin{align}
 {\bf h}_\gamma(\bm{\xi}) - {\bf h}_\gamma(\bm{\eta}) &= \begin{pmatrix} \Lambda_1 \\ -\Lambda_2 \end{pmatrix} (\bm{\xi}-\bm{\eta}) 
 = \Lambda {\bf y}\, ,
\end{align}
and the two matrices $\Lambda_1,\Lambda_2$ are nonnegative, because the coupling functions are assumed strictly increasing. 

Then, by definition of $\Sw$, 
\begin{align}
 &\left\|\Sw(\xi)- \Sw(\eta)\right\|_2^2 = {\bf y}^\top\left(\Idm - \epsilon \B^\top \Lu^\dagger \Bout \Lambda\right)^\top \left(\Idm - \epsilon \B^\top \Lu^\dagger \Bout \Lambda\right) {\bf y}  = \|{\bf y}\|_2^2 - \epsilon{\bf y}^\top M {\bf y} + O(\epsilon^2)\, , \label{sieq:1st-order}
\end{align}
where 
\begin{align}\label{sieq:M}
 M &= \B^\top \Lu^\dagger \Bout\Lambda + \Lambda^\top\Bout^\top (\Lu^\dagger)^\top \B\, .
\end{align}
The remainder of the proof will show that, under the assumptions of the theorem, ${\bf y}^\top M{\bf y} > 0$. 
Then for $\epsilon>0$ small enough, the second order term in equation~\eqref{sieq:1st-order} is dominated by the first order term, which is negative. 
Therefore, equation~\eqref{sieq:1st-order} is strictly smaller than $\|{\bf y}\|_2^2$, and $\Sw$ is contracting. 

We remark that $\Bout\Lambda$ has exactly the same sparsity and sign pattern as $\B$, with the difference being that the nonzero terms are not all the same, not even for the terms corresponding to the ends of a given edge. 
Namely, 
\begin{align}\label{sieq:BL}
 \left(\Bout\Lambda\right)_{ie} &= \left\{
 \begin{array}{ll}
  (\Lambda_1)_{ee}\, , & \text{if } e = (i,j) \text{ for some } j\, , \\
  -(\Lambda_2)_{ee}\, , & \text{if } e = (j,i) \text{ for some } j\, , \\
  0\, , & \text{otherwise.}
 \end{array}
 \right.
\end{align}
We illustrate this property for a simple network in the example below. 

We now define
\begin{align}
 \Lambda_{\rm p} &= \frac{\Lambda_1+\Lambda_2}{2}\, , &
 \Lambda_{\rm m} &= \frac{\Lambda_1-\Lambda_2}{2}\, ,
\end{align}
and recall that the out-incidence matrix is given by $\Bout = ([\B]_+, [\B]_-)$ in equation~(25), where the square brackets denote the positive and negative parts (see the Methods Section for details). 
Then we can rewrite 
\begin{align}
 \Bout\Lambda &= [\B]_+ \Lambda_1 - [\B]_- \Lambda_2 = [\B]_+(\Lambda_{\rm p} + \Lambda_{\rm m}) - [\B]_-(\Lambda_{\rm p} - \Lambda_{\rm m}) = \B\Lambda_{\rm p} + |\B|\Lambda_{\rm m}\, , \label{sieq:BoL}
\end{align} 
with the absolute value taken elementwise. 

\begin{ex}
 We illustrate equations~(\ref{sieq:BL}) and (\ref{sieq:BoL}) for the case of a triangular network, shown in Fig.~\ref{fig:tri}. 
 The matrices involved are 
 \begin{align}\label{sieq:tri_example}
   \Bout &= \begin{pmatrix} 1 & 0 & 0 & 0 & 0 & 1 \\ 0 & 1 & 0 & 1 & 0 & 0 \\ 0 & 0 & 1 & 0 & 1 & 0 \end{pmatrix}\, , &
 \Lambda &= \begin{pmatrix} \lambda_{1,1} & 0 & 0 \\ 0 & \lambda_{1,2} & 0 \\ 0 & 0 & \lambda_{1,3} \\ -\lambda_{2,1} & 0 & 0 \\ 0 & -\lambda_{2,2} & 0 \\  0 & 0 & -\lambda_{2,3} \end{pmatrix}\, , &
 \Bout\Lambda &= \begin{pmatrix} \lambda_{1,1} & 0 & -\lambda_{2,3} \\ -\lambda_{2,1} & \lambda_{1,2} & 0 \\ 0 & -\lambda_{2,2} & \lambda_{1,3}  \end{pmatrix}\, , 
 \end{align}
 illustrating equation~(\ref{sieq:BL}). 
 Now defining $\lambda_{{\rm p},i} = (\lambda_{1,i}+\lambda_{2,i})/2$ and $\lambda_{{\rm m},i} = (\lambda_{1,i}-\lambda_{2,i})/2$, we can decompose $\Bout\Lambda$ as 
 \begin{align}
  \Bout \Lambda &= \begin{pmatrix} \lambda_{{\rm p},1} & 0 & -\lambda_{{\rm p},3} \\ -\lambda_{{\rm p},1} & \lambda_{{\rm p},2} & 0 \\ 0 & -\lambda_{{\rm p},2} & \lambda_{{\rm p},3} \end{pmatrix} 
  + \begin{pmatrix} \lambda_{{\rm m},1} & 0 & \lambda_{{\rm m},3} \\ \lambda_{{\rm m},1} & \lambda_{{\rm m},2} & 0 \\ 0 & \lambda_{{\rm m},2} & \lambda_{{\rm m},3} \end{pmatrix} \notag \, ,
 \end{align}
 which is precisely equation~(\ref{sieq:BoL}). 
\end{ex}

\begin{figure}
 \includegraphics[width=.4\textwidth]{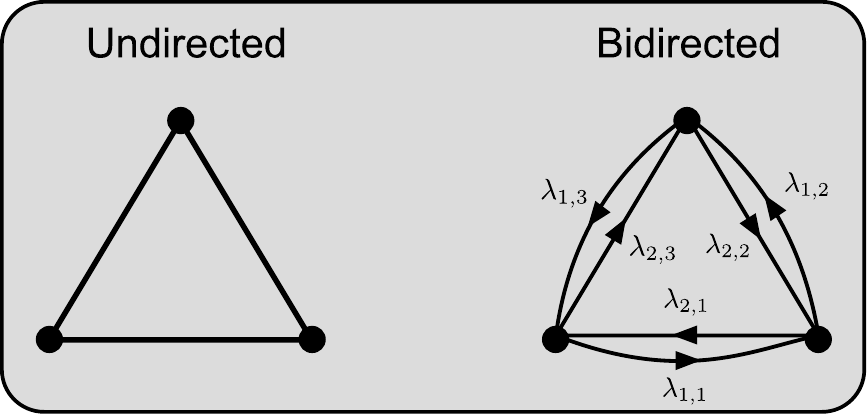}
 \caption{{\bf Triangle network.}
 Illustration of the example's triangle network.}
 \label{fig:tri}
\end{figure}

Based on equations~\eqref{sieq:M} and \eqref{sieq:BoL}, on the fact that ${\bf y} = \B^\top L^\dagger \B{\bf y}$ (see Prop.~\ref{prop:kers}), and using ${\bf z} = \Lu^\dagger \B{\bf y}$, we compute 
\begin{align}
 {\bf y}^\top M {\bf y} &= {\bf z}^\top\left(\Bout\Lambda \B^\top + \B\Lambda\Bout^\top\right){\bf z} = {\bf z}^\top \left(2 \B\Lambda_{\rm p}\B^\top + |\B|\Lambda_{\rm m}\B^\top + \B\Lambda_{\rm m}|\B|^\top\right){\bf z} = 2{\bf z}^\top \left(\Lp + \Dm\right){\bf z}\, , \label{sieq:xMx}
\end{align}
where
\begin{align}
 \Lp &= \B\Lambda_{\rm p}\B^\top \label{sieq:Lp} \\
 \Dm &= \left(|\B|\Lambda_{\rm m}\B^\top + \B\Lambda_{\rm m}|\B|^\top\right)/2\, . \label{sieq:Dm}
\end{align}

First, notice that $\Lp$ is a weighted Laplacian matrix with the graph structure of $\G$.
The weight of edge $e$ is given by 
\begin{align}
 (\Lambda_{\rm p})_{ee} &= (\xi_e-\eta_e)^{-1} \int_{\eta_e}^{\xi_e} \hse'(t) dt\, , 
\end{align}
which, by the Mean Value Theorem, is bounded by
\begin{align}
 \inf_x \hse'(x) &\leq (\Lambda_{\rm p})_e \leq \sup_x \hse'(x)\, ,
\end{align}
with the infimum and supremum taken over the admissible values of $x$. 
Therefore, by definition of the {odd weighted Laplacian matrix} [equation~(41)], we can bound the second eigenvalue of $\Lp$ by Weyl's inequalities,~\cite{Hor94_si} 
\begin{align}
 \lambda_2(\Lp) &\geq \inf_{\bf x} \lambda_2\left(\Ls({\bf x})\right)\, .
\end{align}
Finally, using that ${\bf z}=L^\dagger \B{\bf y}$ is orthogonal to the null space of $L_{\rm p}$, 
\begin{align}\label{sieq:bound1}
 {\bf z}^\top L_{\rm p}{\bf z} &\geq \lambda_2(L_{\rm p})\|{\bf z}\|_2^2 \geq \inf_{\bf x} \lambda_2(\Ls({\bf x}))\|{\bf z}\|_2^2\, .
\end{align}

Second, let us break down the matrix $\Dm$.
Using the positive and negative parts of the incidence matrix, $\B = [\B]_+ - [\B]_-$ and $|\B| = [\B]_+ + [\B]_-$, direct computation shows 
\begin{align}
 \Dm &= [\B]_+ \Lambda_{\rm m} [\B]_+^\top - [\B]_- \Lambda_{\rm m} [\B]_-^\top\, .
\end{align}
According to Prop.~1, we know that $\Dm$ is diagonal, and its elements can be computed as
\begin{align}
 \left(\Dm\right)_{ii} &= \sum_e \left([\B]_+\right)_{ie}^2 (\Lambda_{\rm m})_{ee} - \sum_e \left([\B]_-\right)_{ie}^2 (\Lambda_{\rm m})_{ee} = \sum_e \left([\B]_+ - [\B]_-\right)_{ie} (\Lambda_{\rm m})_{ee} = \sum_{e\in E_i} \pm (\Lambda_{\rm m})_{ee}\, ,
\end{align}
where $E_i$ is the set of edges incident to node $i$ in $\G$. 
Note that the elements of $\Lambda_{\rm m}$ are given by 
\begin{align}
 (\Lambda_{\rm m})_e &= (\xi_e - \eta_e)^{-1} \int_{\eta_e}^{\xi_e} \hae'(t) dt\, ,
\end{align}
and, by the Mean Value Theorem, are bounded by 
\begin{align}
 \left|(\Lambda_{\rm m})_e\right| &\leq \sup_x|\hae'(x)|\, .
\end{align}
Therefore, by definition of the {even weighted degree} matrix [equation~(42)], we can bound the elements of $\Dm$, 
\begin{align}
 \min_i(\Dm)_{ii} &\geq -\sup_{{\bf x},i}(\Da({\bf x}))_{ii}\, ,
\end{align}
which gives the bound
\begin{align}\label{sieq:bound2}
 {\bf z}^\top \Dm {\bf z} &\geq \min_i\left(\Dm\right)_{ii} \|{\bf z}\|_2^2 \geq -\sup_{{\bf x},i}\left(\Da({\bf x})\right)_{ii} \|{\bf z}\|_2^2\, .
\end{align}

To conclude, we introduce equations~\eqref{sieq:bound1} and \eqref{sieq:bound2} into equation~\eqref{sieq:xMx}, yielding
\begin{align}\label{sieq:final_bound}
 {\bf y}^\top M {\bf y} &\geq \left[\inf_{\bf x} \lambda_2\left(\Ls({\bf x})\right) - \sup_{{\bf x},i}\left(\Da({\bf x})\right)\right] \|{\bf z}\|_2^2\, ,
\end{align}
which is strictly positive under the assumptions of the theorem. 
Going back to equation~\eqref{sieq:1st-order}, we have shown that the first order term is strictly negative, and therefore, for $\epsilon>0$ sufficiently small, the flow mismatch iteration $\Sw$ is contracting, which concludes the proof.
\qed

\begin{prop}\label{prop:kers}
Let ${G}$ be a graph and define its incidence matrix $B$, Laplacian matrix $L$, and the cycle-edge incidence matrix $C_\Sigma$ (see the Methods Section for definitions). 
Then
 \begin{align}
  {\rm Ker}(C_\Sigma) &= {\rm Ker}(\Idm - B^\top L^\dagger B)\, .
 \end{align}
\end{prop}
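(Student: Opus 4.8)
The plan is to identify both kernels with a single, well-understood subspace of $\mathbb{R}^m$, namely the orthogonal complement of the cycle space, and then conclude by a projector argument. Recall from the Methods Section that the rows of $C_\Sigma$ are the cycle vectors of a cycle basis $\Sigma$, and that the cycle space coincides with $\mathrm{Ker}(B)$. Hence the row space of $C_\Sigma$ is exactly $\mathrm{Ker}(B)$, and taking orthogonal complements gives
\begin{align}
 \mathrm{Ker}(C_\Sigma) &= \big(\mathrm{Row}(C_\Sigma)\big)^\perp = \big(\mathrm{Ker}(B)\big)^\perp = \mathrm{Im}(B^\top)\, .
\end{align}
It therefore suffices to show that $\mathrm{Ker}(\Idm - B^\top L^\dagger B) = \mathrm{Im}(B^\top)$.

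The core of the argument is to recognize $P := B^\top L^\dagger B$ as the orthogonal projector onto $\mathrm{Im}(B^\top)$. First I would record that $L = B B^\top$ is symmetric, so $L^\dagger$ is symmetric, $L^\dagger L$ is the orthogonal projector onto $\mathrm{Im}(L) = \mathrm{Im}(B)$, and dually $\Idm - L^\dagger L$ is the orthogonal projector onto $\mathrm{Ker}(L) = \mathrm{Ker}(B^\top)$, using $\mathrm{Ker}(BB^\top) = \mathrm{Ker}(B^\top)$. Symmetry of $P$ is immediate from that of $L^\dagger$, and idempotency follows from $P^2 = B^\top L^\dagger (BB^\top) L^\dagger B = B^\top L^\dagger L L^\dagger B = B^\top L^\dagger B = P$, so $P$ is an orthogonal projection. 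Its range lies in $\mathrm{Im}(B^\top)$ by construction, and conversely, for any $w$,
\begin{align}
 P B^\top w &= B^\top L^\dagger (BB^\top) w = B^\top (L^\dagger L) w = B^\top w\, ,
\end{align}
because $(\Idm - L^\dagger L)w \in \mathrm{Ker}(B^\top)$ is annihilated by $B^\top$. Thus $P$ fixes $\mathrm{Im}(B^\top)$ pointwise, so $P$ is precisely the orthogonal projector onto $\mathrm{Im}(B^\top)$.

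To finish, $\Idm - P$ is then the complementary orthogonal projector, onto $\big(\mathrm{Im}(B^\top)\big)^\perp = \mathrm{Ker}(B)$. Since the kernel of an orthogonal projector is the orthogonal complement of its range, we obtain
\begin{align}
 \mathrm{Ker}(\Idm - P) &= \big(\mathrm{Ker}(B)\big)^\perp = \mathrm{Im}(B^\top)\, ,
\end{align}
and combining this with the first display yields $\mathrm{Ker}(C_\Sigma) = \mathrm{Ker}(\Idm - B^\top L^\dagger B)$.

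I expect the only delicate step to be the pointwise-fixing identity $P B^\top = B^\top$, which rests on the pseudoinverse facts that $\mathrm{Ker}(L) = \mathrm{Ker}(B^\top)$ and that $\Idm - L^\dagger L$ projects onto $\mathrm{Ker}(L)$; everything else is routine bookkeeping with orthogonal complements. The projector viewpoint is the cleanest here, as it sidesteps any explicit coordinatization of the cycle basis and makes transparent why $\Idm - B^\top L^\dagger B$ annihilates $\mathrm{Im}(B^\top)$ while acting as the identity on $\mathrm{Ker}(B)$.
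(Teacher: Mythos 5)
Your proof is correct, but it follows a genuinely different route from the paper's. The paper proves only the inclusion $\mathrm{Ker}(\Idm - B^\top L^\dagger B) \subset \mathrm{Ker}(C_\Sigma)$ directly, via the identity $C_\Sigma B^\top = 0$ (which needs only that the rows of $C_\Sigma$ lie in the cycle space), and then upgrades the inclusion to an equality by a dimension count: linear independence of the rows of $C_\Sigma$ gives $\dim\mathrm{Ker}(C_\Sigma) = m - (m-n+1) = n-1$, while rank--nullity applied to $\Idm - B^\top L^\dagger B$ --- whose status as the orthogonal projector onto $\mathrm{Ker}(B)$ is asserted rather than proved --- gives the same dimension. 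You instead identify both kernels with the common subspace $\mathrm{Im}(B^\top)$: on one side using the full strength of the cycle-basis property, namely $\mathrm{Row}(C_\Sigma) = \mathrm{Ker}(B)$ and hence $\mathrm{Ker}(C_\Sigma) = \mathrm{Im}(B^\top)$, and on the other side verifying from the Moore--Penrose identities that $P = B^\top L^\dagger B$ is symmetric, idempotent, and fixes $\mathrm{Im}(B^\top)$ pointwise, so that $\Idm - P$ is the complementary projector. Your version buys two things: it is self-contained where the paper leans on an unproved (though standard) projector fact, and it avoids the dimension count entirely --- in particular it never invokes $c = m-n+1$, so it does not presuppose connectivity of the graph, which the paper's count implicitly does. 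The paper's version is shorter precisely because it cites the projector fact and because its inclusion step needs less than what you use (membership of the rows of $C_\Sigma$ in $\mathrm{Ker}(B)$, rather than spanning).
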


\begin{proof}
 Let ${\bf x}\in{\rm Ker}(\Idm - B^\top L^\dagger B)$, then ${\bf x} = B^\top L^\dagger B {\bf x}$. 
 We compute 
 \begin{align}
  C_\Sigma {\bf x} &= C_\Sigma B^\top L^\dagger B {\bf x} = 0\, , 
 \end{align}
 because $C_\Sigma B^\top = 0$.
 Thus ${\rm Ker}(\Idm - B^\top L^\dagger B) \subset {\rm Ker}(C_\Sigma)$. 
 
 The rows of $C_\Sigma$ are linearly independent by definition. 
 Then its kernel has dimension $m - (m-n+1) = n-1$.
 The matrix $\Idm - B^\top L^\dagger B$ is the orthogonal projection onto the kernel of $B$, therefore its rank is the nullity of $B$. 
 By the Rank-Nullity Theorem, 
 \begin{align}
  {\rm null}(\Idm - B^\top L^\dagger B) &= m - {\rm null}(B) = n-1\, .
 \end{align}
 The two kernel have the same dimension.
 
 The set ${\rm Ker}(\Idm - B^\top L^\dagger B)$ is then a subspace of ${\rm Ker}(C_\Sigma)$ and has the same dimension, they are then identical. 
\end{proof}

\subsection{Section S3. Bounds on the algebraic connectivity}\label{sec3}
The following bounds are adapted from standard results of algebraic graph theory. 
We summarize the bounds of interest and the {ad hoc} quantities in Table~\ref{tab:defs_n_bounds}.

\begin{prop}\label{prop:l2_bounds}
 With the definition of $\Ls$ given in equation~(41), we have the following bounds on its Fiedler eigenvalue:
 \begin{enumerate}[label=(\roman*)]
  \item $\lambda_2 \geq 2c_{\rm e}[1 - \cos(\pi/n)]$ (Ref.~\citenum{Fie73_si}, paragraph 4.3); \label{item:l2_b1}
  \item $\lambda_2 \geq (n \mswD)^{-1}$ (adapted from Ref.~\citenum{Chu97_si}, Lemma 1.9); \label{item:l2_b2}
  \item $\lambda_2 \geq 4\min_e\mss/n \dG$ (adapted from Ref.~\citenum{Moh91_si}, Theorem 4.2). \label{item:l2_b3}
 \end{enumerate}
 All relevant quantities are defined in Tab.~\ref{tab:defs_n_bounds}. 
\end{prop}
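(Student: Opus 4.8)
The plan is to treat $\Ls({\bf x}) = \B\,{\rm diag}[\hse'(x_e)]\,\B^\top$ as an ordinary weighted Laplacian on the connected graph $\G$ with strictly positive edge weights $w_e = \hse'(x_e)$, whose quadratic form is ${\bf z}^\top\Ls{\bf z} = \sum_{e=\{i,j\}} w_e (z_i-z_j)^2$ and whose kernel is ${\rm span}({\bf 1})$. Its Fiedler eigenvalue therefore admits the Rayleigh characterization
\begin{align}
 \lambda_2(\Ls) &= \min_{\substack{{\bf z}\perp{\bf 1} \\ {\bf z}\neq{\bf 0}}} \frac{\sum_{e=\{i,j\}} w_e (z_i-z_j)^2}{\|{\bf z}\|_2^2}\, .
\end{align}
To reach the state-independent bounds of the proposition I replace every state-dependent weight by its infimum over the admissible interval, using $w_e = \hse'(x_e) \geq \inf_{|x|\leq\gamma_e}\hse'(x) = \mss > 0$. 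The three items then follow by adapting the cited classical estimates to weighted Laplacians, carried out in increasing order of delicacy.

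For item \ref{item:l2_b3} the adaptation is purely algebraic. With $\sigma_{\min} = \min_e\mss$, the bound $w_e \geq \sigma_{\min}$ gives ${\bf z}^\top\Ls{\bf z} \geq \sigma_{\min}\,{\bf z}^\top\Lu{\bf z}$ for all ${\bf z}$, where $\Lu = \B\B^\top$ is the unweighted Laplacian of $\G$. Since $\Ls$ and $\Lu$ share the kernel ${\rm span}({\bf 1})$, taking the minimum of the Rayleigh quotient over ${\bf 1}^\perp$ yields $\lambda_2(\Ls)\geq\sigma_{\min}\,\lambda_2(\Lu)$, and substituting Mohar's diameter bound $\lambda_2(\Lu)\geq 4/(n\dG)$ (Theorem~4.2 of Ref.~\citenum{Moh91_si}) gives item \ref{item:l2_b3}. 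The same Loewner-comparison step combined with Fiedler's edge-connectivity estimate $\lambda_2\geq 2\kappa_E[1-\cos(\pi/n)]$ (paragraph~4.3 of Ref.~\citenum{Fie73_si}) gives item \ref{item:l2_b1}; here the weighted edge connectivity $c_{\rm e}$ of Table~\ref{tab:defs_n_bounds} (the minimum total $\mss$-weight of an edge cut) replaces $\sigma_{\min}\kappa_E$, and I would verify that Fiedler's interlacing argument selects the same minimum cut once each of its edges is counted with weight $\mss$ instead of unit weight.

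Item \ref{item:l2_b2} cannot be reduced to a scalar multiple of an unweighted bound, so it requires following the argument of Chung's Lemma~1.9 (Ref.~\citenum{Chu97_si}) with the weights retained. Let $\bm{\psi}$ be a unit Fiedler eigenvector with ${\bf 1}^\top\bm{\psi}=0$, and let $u,v$ attain the maximum and minimum of its entries; since $\bm{\psi}$ is zero-mean and unit-norm, $\psi_u\geq 0\geq\psi_v$ and $\max_i|\psi_i|\geq n^{-1/2}$, whence $(\psi_u-\psi_v)^2\geq 1/n$. Telescoping along a shortest weighted path $P$ from $u$ to $v$ and applying Cauchy--Schwarz with the weights gives
\begin{align}
 (\psi_u-\psi_v)^2 &= \Big(\sum_{e=\{i,j\}\in P}(\psi_i-\psi_j)\Big)^2 \leq \Big(\sum_{e\in P}\tfrac{1}{w_e}\Big)\Big(\sum_{e\in P} w_e(\psi_i-\psi_j)^2\Big) \leq \mswD\,\lambda_2(\Ls)\, ,
\end{align}
where $\mswD$ is the weighted diameter with edge lengths $1/\mss$ (so that $\sum_{e\in P}1/w_e\leq\mswD$), and the last factor is bounded by $\bm{\psi}^\top\Ls\bm{\psi}=\lambda_2(\Ls)$. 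Combining with $(\psi_u-\psi_v)^2\geq 1/n$ yields $\lambda_2(\Ls)\geq 1/(n\mswD)$, which is item \ref{item:l2_b2}.

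The hard part is not any individual computation but making the three adaptations land on exactly the weighted quantities $c_{\rm e}$, $\mswD$, and $\min_e\mss$ tabulated in Table~\ref{tab:defs_n_bounds}. The diameter-type estimates \ref{item:l2_b2} and \ref{item:l2_b3} are robust, since the weights enter either as a uniform Loewner factor or through the Cauchy--Schwarz path lengths. The connectivity-type estimate \ref{item:l2_b1} is the delicate one: edge-connectivity bounds do not factor cleanly through $\sigma_{\min}$, so I expect the real work to be checking that Fiedler's construction still selects a minimum-weight cut and that the resulting constant is the intended weighted connectivity $c_{\rm e}$ rather than the weaker $\sigma_{\min}\kappa_E$.
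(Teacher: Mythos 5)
Your treatments of items (ii) and (iii) are sound. For (ii) you follow essentially the paper's own route: both arguments adapt Chung's Lemma 1.9 by telescoping the Fiedler eigenvector along a shortest weighted path and applying Cauchy--Schwarz (the paper invokes it in Sedrakyan form), with the state-dependent path length $\sum_{e\in P}1/\hse'(x_e)$ dominated by $\mswD$ exactly as you do. For (iii) you take a genuinely different and cleaner route: the Loewner comparison $\Ls \succeq (\min_e\mss)\,\B\B^\top$ restricted to ${\bf 1}^\perp$, followed by Mohar's unweighted bound $\lambda_2(\B\B^\top)\geq 4/(n\dG)$. The paper instead re-runs Mohar's proof with the weights kept inside (bounding ${\bf v}^\top\Ls{\bf v}$ from below through the minimal odd slope and $2n\|{\bf v}\|_2^2=\sum_{i,j}(v_i-v_j)^2$ from above via shortest-path and edge-counting estimates). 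Since the target constant in (iii) only involves the global minimum $\min_e\mss$, nothing is lost by your shortcut; it buys brevity at no cost in sharpness.

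Item (i) is where your proposal has a genuine gap, and you half-admit it. The Loewner step you actually carry out only yields $\lambda_2(\Ls)\geq 2\,(\min_e\mss)\,\kappa_E\,[1-\cos(\pi/n)]$, with $\kappa_E$ the unweighted edge connectivity, and this is in general strictly weaker than the claimed $2c_{\rm e}[1-\cos(\pi/n)]$: indeed $c_{\rm e}=\min_{\rm cuts}\sum_{e\in{\rm cut}}\mss\geq(\min_e\mss)\,\kappa_E$, with strict inequality whenever the per-edge infima $\mss$ are heterogeneous. The missing step is also not, as you frame it, checking which cut ``Fiedler's interlacing argument selects''; it is that Fiedler's paragraph-4.3 inequality holds for positively \emph{weighted} graphs with the weighted edge connectivity in place of $\kappa_E$. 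Granting that weighted version (which is precisely what the paper invokes), the proof closes in two lines: apply it at the state-dependent weights to get $\lambda_2(\Ls({\bf x}))\geq 2c({\bf x})[1-\cos(\pi/n)]$ with $c({\bf x})=\min_{\rm cuts}\sum_{e\in{\rm cut}}\hse'(x_e)$, and then use that the cut minimum is monotone in the weights --- $\hse'(x_e)\geq\mss$ edge by edge implies $c({\bf x})\geq c_{\rm e}$ --- to remove the state dependence. If you insist on a comparison argument, you must compare against the fixed edge-weighted Laplacian $\B\,{\rm diag}(\mss)\,\B^\top$ rather than a scalar multiple of $\B\B^\top$, and you then still need the weighted Fiedler bound for that graph; there is no way to reach $c_{\rm e}$ through the unweighted statement alone.
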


\begin{proof}
{\it \ref{item:l2_b1}.}
Defining the {weighted edge connectivity},~\cite{Fie73_si}
\begin{align}
 c({\bf x}) &= \min_{\cal S} \sum_e \hse'(x_e)\, ,
\end{align}
where the minimum is taken over subsets of edges ${\cal S}\subset\E$ that split the graph $\G$, we can adapt the proof of Ref.~\citenum{Fie73_si}, in paragraph 4.3, yielding 
\begin{align}
 \lambda_2(\Ls) &\geq 2 c({\bf x})\left[1 - \cos(\pi/n)\right] \geq 2 c_{\rm e}\left[1 - \cos(\pi/n)\right]\, ,
\end{align}
independently of ${\bf x}$, where $c_{\rm e}$ is defined in Tab.~\ref{tab:defs_n_bounds}.

{\it \ref{item:l2_b2}.}
We adapt here the proof of Ref.~\citenum{Chu97_si}, Lemma 1.9. 
Let ${\bf v}$ be the eigenvector of $\Ls$ associated with $\lambda_2$ and assume that $|v_i| = \max_k|v_k|$ (recall that all these quantities depend on ${\bf x}$). 
Because $\Ls$ is a symmetric Laplacian matrix, ${\bf 1}^\top {\bf v} = 0$ and there is an index $j$ such that $v_i v_j < 0$. 
We denote with $P_{ij}$ the shortest (weighted) path from $i$ to $j$. 
Now, 
\begin{align}
 \lambda_2 &= \frac{{\bf v}^\top \Ls {\bf v}}{{\bf v}^\top {\bf v}} = \frac{\sum_{e=(k,\ell)} \hse'(x_e) (v_k - v_\ell)^2}{\sum_k v_k^2} \geq \sum_{e=(k,\ell)\in P_{ij}} \frac{\hse'(x_e) (v_k - v_\ell)^2}{n v_i^2}\, . \label{sieq:pre-CS}
\end{align}
Defining the {odd weighted diameter} 
\begin{align}
 D_{\rm w}({\bf x}) &= \max_{i,j} \min_{P_{ij}} \sum_{e\in P_{ij}} [\hse'(x_e)]^{-1}\, ,
\end{align}
where the maximum is taken over all pairs of vertices and the minimum is taken over all simple paths joining $i$ and $j$, we can apply the Sedrakyan inequality~\cite{Sed18_si} (direct consequence of the Cauchy-Schwarz inequality) to the numerator of equation \eqref{sieq:pre-CS} and get 
\begin{align}
 \lambda_2 &\geq \frac{[D_{\rm w}({\bf x})]^{-1}(v_i - v_j)^2}{n v_i^2} 
 \geq \frac{1}{n D_{\rm w}({\bf x})} \geq \frac{1}{n \mswD}\, ,
\end{align}
where $\mswD$ is defined in Tab.~\ref{tab:defs_n_bounds}.

{\it \ref{item:l2_b3}.}
Alternatively, both sides of the identity 
\begin{align}
 {\bf v}^\top \Ls {\bf v} &= \lambda_2{\bf v}^\top{\bf v}\, ,
\end{align}
can be bounded as follows,
\begin{align}
 {\bf v}^\top \Ls {\bf v} &= \sum_{e=(i,j)\in\E}\hse'(x_e)(v_i - v_j)^2 \geq \min_e\mss\, ,
\end{align}
and 
\begin{align}
\begin{split}
 2n {\bf v}^\top{\bf v} &= \sum_{i}\sum_j (v_i - v_j)^2 \leq \sum_i\sum_j |P_{ij}| \sum_{(k,\ell)\in P_{ij}} (v_k - v_\ell)^2 \leq \sum_{(k,\ell)\in\E} (v_k - v_\ell)^2 \dG \sum_i\sum_j \chi_{ij}(k,\ell) \\
 &\leq \sum_{(k,\ell)\in\E}(v_k - v_\ell)^2 \dG n^2/2\, ,
\end{split}
\end{align}
where we used that ${\bf v}^\top\bm{1}=0$ and the Cauchy-Schwartz inequality at the first line, $P_{ij}$ is a chosen (unweighted) shortest path  between $i$ and $j$, $\chi_{ij}(e)$ is its indicator function, and we used Lemma 4.1 in Ref.~\citenum{Moh91_si} at the last inequality. 
Plugging these bounds together yields
\begin{align}
 2n \min_e \mss &\leq \lambda_2 \dG n^2/2\, ,
\end{align}
which concludes the proof.
\end{proof}

\begin{table*}
 \centering
 \begin{tabular}{l|l}
  Bounds & Reference \\
  \hline
  $\displaystyle \max_i (\Da)_{ii} \leq \max_i \sum_{e\in E_i}\mas$ & Direct computation ($E_i$ is the set of edges incident to node $i$). \\
  $\displaystyle \lambda_2 \geq 2 c_{\rm e}[1 - \cos(\pi/n)]$ & Prop.~\ref{prop:l2_bounds}, adapted from Ref.~\citenum{Fie73_si}, paragraph 4.3). \\
  $\displaystyle \lambda_2 \geq (n \mswD)^{-1}$ & Prop.~\ref{prop:l2_bounds}, adapted from Lemma 1.9 in Ref.~\citenum{Chu97_si}. \\
  $\displaystyle \lambda_2 \geq 4\min_e\mss/n \dG$ & Prop.~\ref{prop:l2_bounds}, adapted from Theorem 4.2 in Ref.~\citenum{Moh91_si}. \\
  \hline \hline
  Definitions & Name \\
  \hline
  $\displaystyle \mas = \sup_x|\hae'(x)| \phantom{\sum}$ & Maximal even slope. \\
  $\displaystyle \mss = \inf_x \hse'(x) \phantom{\sum}$ & Minimal odd slope. \\
  $\displaystyle c_{\rm e} = \min_{E\subset{\cal S}}\sum_{e\in E}\mss$ & Minimized odd weighted edge connectivity (${\cal S}$ is the set of splitting edge sets). \\
  $\displaystyle \mswD = \max_{i,j}\min_{P_{ij}}\sum_{e\in P_{ij}}\mss^{-1}$ & Maximized odd weighted diameter ($P_{ij}$ denotes a path between $i$ and $j$). \\
  $\displaystyle \dG = \max_{i,j}\min_{P_{ij}} |P_{ij}|$ & Graph diameter ($P_{ij}$ denotes a path between $i$ and $j$). 
 \end{tabular}
 \caption{List of bounds on the components of equation~(18) and definition of the {ad hoc} quantities. 
 Note that all quantities are independent of the state of the system and can be determined beforehand. }
 \label{tab:defs_n_bounds}
\end{table*}

\end{document}